\documentclass{article}
\usepackage{fullpage}
\usepackage{amsthm,amssymb,amsmath}  
\usepackage{authblk}
\usepackage{tikz} 

\usepackage{mathtools, eqparbox}%

\usepackage{wrapfig}
\usepackage[short,nocomma]{optidef}
\usepackage{cases}
\usepackage{todonotes}
\usepackage{subcaption}
\usepackage{nicefrac}
\usepackage{amsthm,amssymb,amsmath}  
\usepackage{xspace,enumerate}
\usepackage[utf8]{inputenc}
\usepackage{thmtools}
\usepackage{thm-restate}
\usepackage{todonotes}
\usepackage{hyperref}
\usepackage{verbatim}
\usepackage{multirow}
\usepackage{url}
\usepackage{mathtools}
\usepackage{amsthm}
\usepackage[utf8]{inputenc}
\usepackage{color}
\usepackage{caption}
\usepackage{scalerel}
\usepackage{subcaption}
\usepackage{todonotes}
\usepackage{algorithm}
\usepackage[noend]{algpseudocode}
\usepackage{algorithmicx}
\usepackage{etoolbox}
\makeatother
\usepackage[normalem]{ulem}
\usetikzlibrary{calc,snakes,shapes,arrows.meta}
\usepackage{booktabs}
\usepackage{bold-extra}

\newcommand{\cO}{\mathcal{O}}

\newcommand{\SSA}{\textsf{SSA}\xspace}
\newcommand{\SLCP}{\textsf{SLCP}\xspace}
\def\dd{\mathinner{.\,.}}

  \theoremstyle{plain}
  \newtheorem{theorem}{Theorem}
  \newtheorem{lemma}{Lemma}

  \theoremstyle{definition}
  \newtheorem{definition}{Definition}

  \newtheorem{remark}[definition]{Remark}
  {\bfseries}{\itshape}

\title{Sparse Suffix and LCP Array: Simple, Direct, Small, and Fast}
\author[1]{Lorraine A. K. Ayad}
\author[2]{Grigorios Loukides}
\author[3,4]{Solon P.\ Pissis}
\author[3]{Hilde Verbeek}

\affil[1]{Brunel University London, London, UK}
\affil[2]{King's College London, London, UK}
\affil[3]{CWI, Amsterdam, The Netherlands}
\affil[4]{Vrije Universiteit, Amsterdam, The Netherlands}

\date{}

\begin{document}

\maketitle

\begin{abstract}
Sparse suffix sorting is the problem of sorting $b=o(n)$ suffixes of a string of length $n$. Efficient sparse suffix sorting algorithms have existed for more than a decade. Despite the multitude of works and their justified claims for applications in text indexing, the existing algorithms have not been employed by practitioners. Arguably this is because there are no simple, direct, \emph{and} efficient algorithms for sparse suffix array construction. We provide two new algorithms for constructing the sparse suffix and LCP arrays that are simultaneously simple, direct, small, and fast. In particular, our algorithms are: \emph{simple} in the sense that they can be implemented using only basic data structures; \emph{direct} in the sense that the output arrays are not a byproduct of constructing the sparse suffix tree or an LCE data structure; \emph{fast} in the sense that they run in $\cO(n\log b)$ time, in the worst case, or in $\cO(n)$ time, when the total number of suffixes with an LCP value greater than $2^{\lfloor \log \frac{n}{b} \rfloor + 1}-1$ is in $\cO(b/\log b)$, matching the time of optimal yet much more complicated algorithms [Gawrychowski and Kociumaka, SODA 2017; Birenzwige et al., SODA 2020]; and \emph{small} in the sense that they can be implemented using \emph{only} $8b+o(b)$ machine words. Our algorithms are non-trivial space-efficient adaptations of the Monte Carlo algorithm by I et al.~for constructing the sparse suffix tree in $\cO(n\log b)$ time [STACS 2014]. We provide extensive experiments to justify our claims on simplicity and on efficiency.
\end{abstract}

\section{Introduction}

Let $T=T[1\dd n]$ be a string of length $n$ over an ordered alphabet $\Sigma$. Further let $\mathcal{B} \subseteq [1,n]$ be a set of $b>1$ positions in $T$. \emph{Sparse suffix sorting} is the problem of sorting the set of suffixes $T_{\mathcal{B}} = \{T[i \dd n] : i \in \mathcal{B}\}$ lexicographically~\cite{DBLP:conf/cocoon/KarkkainenU96}. This is achieved by constructing the sparse suffix array. The \emph{sparse suffix array} $\SSA=\SSA[1\dd b]$ is the array containing the positions in $\mathcal{B}$ in the lexicographical order of the suffixes in $T_{\mathcal{B}}$. 
The associated \emph{sparse longest common prefix array} $\SLCP=\SLCP[1\dd b]$ stores the length $\SLCP[i]$ of the longest common prefix of $T[\SSA[i-1]\dd n]$ and $T[\SSA[i]\dd n]$ when $i\in[2,n]$ or 0 when $i=1$. The $\SSA$ and $\SLCP$ array can be used to construct the sparse suffix tree in linear time using the algorithm by Kasai et al.~\cite{DBLP:conf/cpm/KasaiLAAP01}.
The \emph{sparse suffix tree} is the compacted trie of the set $T_{\mathcal{B}}$. Vice-versa, the \SSA and \SLCP array can be obtained in linear time using a pre-order traversal of the sparse suffix tree. 

Sparse suffix sorting was introduced as a fundamental step in the construction of compressed or sparse text indexes~\cite{DBLP:conf/cocoon/KarkkainenU96}. Modern compressed text indexes~\cite{DBLP:journals/tcs/NavarroP19,DBLP:journals/talg/ChristiansenEKN21}, practical indexes for long patterns~\cite{DBLP:journals/spe/GrabowskiR17,DBLP:conf/esa/LoukidesP21,DBLP:journals/tkde/LoukidesPS23,DBLP:journals/pvldb/AyadLP23}, and sublinear-space string algorithms~\cite{DBLP:conf/cpm/Nun0KK20,DBLP:conf/isaac/0001FGP23} rely on sparse suffix sorting: they first sample a sublinear number of ``important'' suffixes, which they next sort to construct their final solution. Efficient sparse suffix sorting algorithms have existed for more than a decade. The following algorithms construct $\SSA$ explicitly, or implicitly by first constructing the sparse suffix tree. 
Since the size of $\SSA$ (and the size of sparse suffix tree) is $\Theta(b)$, the goal of these algorithms is  
to use $\cO(b)$ words of space assuming read-only
random access to $T$:

\begin{itemize}
    \item K{\"{a}}rkk{\"{a}}inen et al.~presented a deterministic $\cO(n^2/s)$-time and $\cO(s)$-space algorithm, for any $s\in [b,n]$~\cite[Section 8]{DBLP:journals/jacm/KarkkainenSB06}.
    \item Bille et al.~presented a Monte Carlo $\cO(n\log^2 b)$-time and $\cO(b)$-space algorithm~\cite{DBLP:journals/talg/Bille0GKSV16}. They also presented a Las Vegas $\cO(n\log ^2n+b^2\log b)$-time and $\cO(b)$-space algorithm.
    \item I et al.~presented a Monte Carlo $\cO(n + (bn/s) \log s)$-time and $\cO(b)$-space algorithm, for any $s\in[b,n]$~\cite{DBLP:conf/stacs/IKK14}. They also presented a Las Vegas $\cO(n\log b)$-time and $\cO(b)$-space algorithm.
    \item Gawrychowski and Kociumaka~\cite{DBLP:conf/soda/GawrychowskiK17} presented a Monte Carlo $\cO(n)$-time and $\cO(b)$-space algorithm.
    They also presented a Las Vegas $\cO(n\sqrt{\log b})$-time and $\cO(b)$-space algorithm.
    \item Birenzwige et al.~\cite{DBLP:conf/soda/BirenzwigeGP20} presented a Las Vegas algorithm running in $\cO(n)$ time using $\cO(b)$ space.
    They also presented a deterministic $\cO(n\log \frac{n}{b})$-time and $\cO(b)$-space algorithm, for any $b=\Omega(\log n)$.
\end{itemize}   

The following algorithms also construct $\SSA$, but they work in the \emph{restore model}~\cite{DBLP:journals/talg/ChanMR18}: an algorithm is allowed to overwrite parts of the input, as long as it can restore it to its original form at termination.

\begin{itemize}   
    \item Fischer et al.~\cite{DBLP:journals/talg/FischerIK20} presented a deterministic $\cO(c\sqrt{\log n} + b \log b \log n \log^{*} n)$-time and $\cO(b)$-space algorithm, where $c$ is the number of letters that must be compared for distinguishing the suffixes in $T_{\mathcal{B}}$. 
    In some cases, this runs in sublinear extra time; extra refers to the linear cost of loading $T$ in memory.
    \item Prezza~\cite{DBLP:journals/talg/Prezza21} presented a Monte Carlo $\cO(n + b\log^2 n)$-time algorithm using $\cO(1)$ words of space. 
\end{itemize}

\paragraph{Motivation.} 
Despite the multitude of works on sparse suffix sorting and their justified claims for applications in text indexing, the existing algorithms have not been employed by practitioners. 
Arguably this is because there are no simple, direct, \emph{and} efficient algorithms for \SSA construction. 
The $\cO(n)$-time algorithms of Gawrychowski and Kociumaka~\cite{DBLP:conf/soda/GawrychowskiK17} and of Birenzwige et al.~\cite{DBLP:conf/soda/BirenzwigeGP20} are far from simple 
and do not seem to be practically promising either.
The former (Monte Carlo) algorithm relies heavily on the construction of compacted tries, which induce high constants in space usage, and on a recursive application of difference cover to construct a Longest Common Extension (LCE) data structure. The latter (Las Vegas) algorithm relies on an intricate partitioning scheme (sampling) to 
construct \SSA and on an LCE data structure to compute the \SLCP array. The Monte Carlo $\cO(n\log b)$-time algorithm of I et al.~\cite{DBLP:conf/stacs/IKK14} is simple but it also relies heavily on compacted tries, which makes it less likely to be employed by practitioners for \SSA construction.
The Monte Carlo $\cO(n + b\log^2 n)$-time algorithm of Prezza~\cite{DBLP:journals/talg/Prezza21} makes heavy usage of an LCE data structure as well: 
constructing the \SSA and \SLCP array is a byproduct of an in-place LCE data structure.
This algorithm is, to the best of our knowledge, the only algorithm 
which has been implemented (at least in a simplified form). 
Due to the interest in sparse suffix sorting and the above  characteristics of the existing algorithms, 
we were motivated to revisit this problem 
to develop efficient, yet simple and direct, algorithms for \SSA construction.
Such algorithms may serve as baselines for practitioners to engineer the \SSA and \SLCP array construction.

\paragraph{Our Results.}

We assume the standard word RAM model with word size $\Theta(\log n)$; basic arithmetic and bit-wise operations on $\cO(\log n)$-bit integers take $\cO(1)$ time. 
We assume that we have a read-only random access string $T$ of length $n$ over an integer alphabet $\Sigma=\{1,\ldots,n^{\cO(1)}\}$, a read-only integer array $A$ of size $b$ storing the $b$ elements of $\mathcal{B}$, and two write-only integer arrays \SSA and \SLCP, each of size $b$. We count the amount of \emph{extra space in machine words} used to construct the \SSA and \SLCP array. We present two algorithms:
\begin{enumerate}  
    \item Our first algorithm, \textsc{Main-Algo}, constructs \SSA and \SLCP \emph{directly}; i.e., without first explicitly constructing the sparse suffix tree or an LCE data structure. Its time complexity is $\cO(n+(bn/s)\log s))$ and its space complexity is $s+7b+o(b)$ machine words, for any chosen $s\in[b,n]$. It is a Monte Carlo algorithm that returns the correct output \emph{with high probability}; i.e., with probability at least $1 - n^{-c}$, for any constant $c\geq 1$ chosen at construction time. \textsc{Main-Algo} is \emph{simple} in the sense that it can be implemented using only \emph{basic data structures} (e.g., dictionaries and arrays) readily available in widely-used programming language (e.g., \texttt{C++}, \texttt{Java}, or \texttt{Python}). \textsc{Main-Algo} is a non-trivial space-efficient simulation of the algorithm by I et al.~for sparse suffix tree construction~\cite{DBLP:conf/stacs/IKK14}.
    A disadvantage of these two algorithms is that they attain the $\Theta(n \log b)$ time bound for $s=b$ \emph{in any case}. 
    To address this, we develop \textsc{Parameterized-Algo}, a parameterized algorithm which is input-sensitive.
    
    \item Our second algorithm, \textsc{Parameterized-Algo}, also constructs \SSA and \SLCP directly. Its time complexity is $\cO(n + (b'n/b) \log b)$ and its space complexity is $8b + 4b' + o(b)$ machine words, where $b'$ is the total number of suffixes $\SSA[i] \in \mathcal{B}$ with $\SLCP[i]\geq \ell$ or $\SLCP[i+1] \geq \ell$, where $\ell=2^{\lfloor \log \frac{n}{b} \rfloor + 1} - 1$. When $b'=\cO(b/\log b)$, \textsc{Parameterized-Algo} runs in $\cO(n)$ time, thus matching the time of the optimal yet much more complicated algorithms in~\cite{DBLP:conf/soda/GawrychowskiK17,DBLP:conf/soda/BirenzwigeGP20}, using \emph{only} $8b+o(b)$ machine words. It is a Monte Carlo algorithm that returns the correct output with high probability.
    It is \emph{remarkably simple} as it consists of two calls of \textsc{Main-Algo} and a linear-time step that merges the partial results (however, the proof of correctness requires some work). 
    The running time of \textsc{Parameterized-Algo} is good in the following sense: if the instance is reasonably sparse, then $\ell$ is large and likely $b'=\cO(b/\log b)$, thus it runs in $\cO(n)$ time. 
    In any case, it runs in $\cO(n \log b)$ time. For instance, for the full human genome (v.~GRCh38) as $T$, where $n\approx 3\cdot 10^9$, and for $b=\lfloor \sqrt{n} \rfloor=56137$ suffixes selected uniformly at random, $b'=2525< \lfloor  b/\log b \rfloor=3558$. We also analyze the time complexity of \textsc{Parameterized-Algo} on random strings and show that it works in $\cO(n)$ time (after a small amendment), for any string chosen uniformly at random from $\Sigma^n$ and any set $T_\mathcal{B}$ of $b$ suffixes of $T$, with high probability.
\end{enumerate}

To fully justify our claims, we also perform an experimental evaluation. For comparison purposes, we had to resort to Prezza's implementation of his algorithm from~\cite{DBLP:journals/talg/Prezza21}, which works in the restore model, and it is the only available implementation. Our experiments show that for realistically \emph{sparse instances} (i.e., for $b\in[n/10^7,n/10^3]$) of real-world datasets, our implementation of \textsc{Parameterized-Algo}: (i) runs in linear time (i.e., it is not substantially affected when $b$ increases); and (ii) clearly outperforms Prezza's implementation both in terms of time (even if the latter uses the $\Omega(n\log|\Sigma|)$ bits of $T$) \emph{and} space. For \emph{dense instances} (i.e., for $b\in[2 \cdot n/10^2,n/10]$), our implementation of \textsc{Parameterized-Algo} still runs faster than Prezza's implementation but, as expected, has a larger memory footprint.

The rest of the paper is organized as follows. Section~\ref{sec:prel} provides definitions and notation as well as a description of how Karp-Rabin fingerprints~\cite{DBLP:journals/ibmrd/KarpR87} are computed, stored, and used. In Section~\ref{sec:algorithm}, we present \textsc{Main-Algo}, as well as a full working example; and, in Section~\ref{sec:improved}, we present \textsc{Parameterized-Algo}. Finally, in Section~\ref{sec:experiments}, we present extensive experiments showing how \textsc{Main-Algo} and \textsc{Parameterized-Algo} perform in comparison to Prezza's implementation. 

\section{Preliminaries}\label{sec:prel}

We consider finite strings over an integer alphabet $\Sigma=\{1,\ldots,n^{\cO(1)}\}$.  The elements of $\Sigma$ are called \emph{letters}. A \emph{string} $T=T[1\dd n]$ is a sequence of letters from $\Sigma$; we denote by $|T|=n$ the \emph{length} of $T$. The fragment $T[i\dd j]$ of $T$ is an \emph{occurrence} of the underlying \emph{substring} $P=T[i]\ldots T[j]$ occurring at \emph{position} $i$ in $T$. A {\em prefix} of $T$ is a substring of $T$ of the form $T[1\dd j]$ 
and a {\em suffix} of $T$ is a substring of $T$ of the form $T[i\dd n]$. 

\paragraph{Karp-Rabin Fingerprints.}
Let $T$ be a string of length $n$ over an integer alphabet.
Let $p$ be a prime and choose $r \in[0,p-1]$ uniformly at random. 
The Karp-Rabin (KR) fingerprint~\cite{DBLP:journals/ibmrd/KarpR87} of $T[i\dd j]$ is defined as: 
$$\phi_T(i,j)=(\sum^{j}_{k=i}T[k]r^{j-k}\mod p, r^{j-i+1}\mod p).$$

Clearly, if $T[i\dd i + \ell] = T[j\dd j + \ell]$ then $\phi_T(i,i+\ell)=\phi_T(j,j+\ell)$. 
On the other hand, if
$T[i\dd i + \ell] \neq T[j\dd j + \ell]$ then $\phi_T(i,i+\ell)\neq \phi_T(j,j+\ell)$ with probability at least $1-\ell/p$~\cite{DBLP:conf/icalp/DietzfelbingerGMP92}.
Since we are comparing only substrings of equal length, the number of different possible
substring comparisons is less than $n^3$. Thus, for any constant $c\geq 1$, we can set $p$ to be a
prime larger than $\max(|\Sigma|,n^{c+3})$ to make the KR fingerprint
function perfect with probability at least $1 - n^{-c}$.
Any KR fingerprint of $T$ or $p$ fit in one machine word of size $\Theta(\log n)$.

\begin{remark}\label{rem:fixed-length}
In our algorithms (\textsc{Main-Algo} and \textsc{Parameterized-Algo}), \emph{the length} $j-i+1$ of the substrings of $T$ considered at each of the $\cO(\log n)$ iterations is fixed, hence the quantity $r^{j-i+1}\mod p$ only needs to be computed once before every iteration; and thus we do not store it together with the actual fingerprint to save machine words.
\end{remark}

\begin{lemma}[\cite{DBLP:conf/stacs/IKK14}]\label{lem:KRF}
Any string $T\in\Sigma^n$ can be preprocessed in $\cO(n)$ time using $s+\cO(1)$ machine words, for any $s\in[1,n]$, so that the KR fingerprint of any length-$k$ fragment of $T$ is computed in $\cO(\min\{k, n/s\})$ time.\footnote{I et al.~\cite{DBLP:conf/stacs/IKK14} claim $\cO(s)$ space but from their construction it is evident that in fact $s+\cO(1)$ machine words are used.}
\end{lemma}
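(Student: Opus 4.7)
The plan is to precompute prefix fingerprints at evenly spaced sampled positions of $T$ together with a small table of powers of $r$, and to answer each query by reducing to these prefix fingerprints via the standard Karp--Rabin subtraction identity. Set $\Delta = \lceil n/s \rceil$. During preprocessing, I would scan $T$ from left to right while maintaining the running prefix fingerprint $\phi_T(1,m)$, and record it every time $m$ is a multiple of $\Delta$; this yields at most $\lceil n/\Delta \rceil + 1 = \cO(s)$ sampled values in $\cO(n)$ total time. In parallel, I would compute and store the powers $r^{k\Delta} \bmod p$ for $k=0,1,\ldots,\lceil n/\Delta \rceil$; since each one follows from the previous by multiplying with the single constant $r^{\Delta}$ (itself computed once in $\cO(\Delta)=\cO(n/s)$ time by $\Delta-1$ multiplications), this also costs $\cO(n)$ time overall. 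By rescaling the sampling step by a constant factor, the combined table takes $s + \cO(1)$ machine words.

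To answer a query for $\phi_T(i,j)$ of length $k = j-i+1$, I would first check whether $k \leq n/s$: if so, I compute $\phi_T(i,j)$ directly by scanning $T[i\dd j]$ in $\cO(k)$ time without using the tables. Otherwise, I compute the two prefix fingerprints $\phi_T(1,i-1)$ and $\phi_T(1,j)$ (with the convention $\phi_T(1,0)=0$), together with the power $r^{j-i+1} \bmod p$, and combine them via the standard identity
$$\phi_T(i,j) \;=\; \phi_T(1,j) \;-\; r^{j-i+1}\cdot \phi_T(1,i-1) \pmod{p},$$
which follows directly from the definition of $\phi_T$. Each prefix fingerprint is obtained from the nearest stored sample to its left by continuing the forward scan over at most $\Delta - 1 = \cO(n/s)$ characters, and the required power is recovered from the stored powers as $r^{j-i+1} = r^{k\Delta} \cdot r^{(j-i+1) \bmod \Delta}$ using at most $\Delta - 1$ extra multiplications. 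Thus every operation in this branch runs in $\cO(n/s)$ time, giving the claimed $\cO(\min\{k, n/s\})$ bound.

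Correctness is immediate from the displayed identity. The main obstacle, and the reason the footnote of the lemma matters, is to honestly keep the total space within $s + \cO(1)$ words: one needs both the sampled prefix fingerprints and the sampled powers of $r$ (since query-time exponents may be as large as $n$), and a naive implementation would use $2s + \cO(1)$ words. This is handled by using sampling step $2\Delta$ instead of $\Delta$, which keeps both auxiliary tables together within $s + \cO(1)$ words and affects only the constant hidden in the $\cO(n/s)$ query time.
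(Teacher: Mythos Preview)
The paper does not give its own proof of this lemma: it is stated with a citation to I et al.~\cite{DBLP:conf/stacs/IKK14} and only a footnote remarking that their construction in fact uses $s+\cO(1)$ words rather than the $\cO(s)$ they claim. So there is no in-paper argument to compare against.

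Your proposal is correct and is essentially the standard construction (and, as far as one can tell from the footnote, the one the authors have in mind). The identity $\phi_T(i,j)=\phi_T(1,j)-r^{\,j-i+1}\phi_T(1,i-1)\pmod p$ is easily verified from the definition, extending a stored prefix fingerprint forward uses only the recurrence $\phi_T(1,m+1)=r\cdot\phi_T(1,m)+T[m+1]$ and hence needs no extra powers of $r$, and your decomposition of $r^{\,j-i+1}$ via the sampled powers is fine. Your observation that both the sampled prefix fingerprints and the sampled powers are needed, and that doubling the sampling step keeps the total within $s+\cO(1)$ words at the cost of a constant in the $\cO(n/s)$ query time, is exactly the kind of detail the footnote alludes to. Nothing is missing.
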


I et al.~\cite{DBLP:conf/stacs/IKK14} employ the \emph{distribute-and-collect} technique~\cite{DBLP:journals/siamcomp/PaigeT87} to group $b$ suffixes, according to a fixed-length common prefix by using their KR fingerprints (see also Remark~\ref{rem:fixed-length}), in $\cO(b\log_s n)$ time. We instead use hashing to achieve the same result in $\cO(b)$ time with high probability. 
This gives improved running times in some special regimes (see Theorem~\ref{the:param} and Theorem~\ref{the:random}).
It also simplifies our algorithms in terms of implementation since efficient hashing implementations are readily available in any widely-used programming language.

\section{Main Algorithm}~\label{sec:algorithm}

\paragraph{Overview.} A summary of our 
main algorithm (\textsc{Main-Algo}) follows with references to the pseudocode given in Algorithms~\ref{alg:step1} and \ref{alg:step2}.\footnote{The pseudocode is complete in the sense that it only assumes the implementation of Lemma~\ref{lem:KRF} (Line 11).} As input, it takes a string $T$ from $\Sigma^n$ and an array $A$ of $b$ elements, indicating the starting positions of the suffixes that need to be sorted. As additional input, it takes an integer $j_\text{start}$, which defines the number of iterations. In this section,  $j_\text{start}$ is set to $\lfloor \log n \rfloor$, but a different value is used for the parameterized algorithm in Section~\ref{sec:improved}.

During the first phase (Algorithm~\ref{alg:step1}, Lines 1-25), the suffixes are distributed into groups such that all suffixes belonging to a particular group share a common prefix. At the end of this process, we are left with a hierarchy of groups that describes the exact longest common prefixes between suffixes. The members of each group are then sorted lexicographically (Algorithm~\ref{alg:step1}, Lines 26-27), which is made possible by knowing their longest common prefixes, such that a traversal of the hierarchy will yield the suffixes in lexicographic order. This traversal is the second phase (Algorithm~\ref{alg:step1}, Line 28 and Algorithm~\ref{alg:step2}): a simple depth-first search is used to construct the sparse suffix array and accompanying sparse LCP array from the hierarchy.

\subsection{Computing and Sorting the LCP Groups}
During the first phase, the suffixes of $A$ are organized into several LCP groups stored in set $B$. Each group in $B$ is represented by a triple $(i, k, \{v_1, \ldots, v_{n_i}\})$, where $i$ is the index (id) of the group, $k$ is its associated LCP value and $v_1$ through $v_{n_i}$ are its members, which are either suffixes or other groups. To distinguish between suffixes and groups, the indices $1$ through $b$ are reserved for the suffixes in $A$ and the group numbering starts at $b + 1$. At every point of the algorithm, it holds that in a group $(i, k, \{v_1, \ldots, v_{n_i}\})$, all suffixes and groups (with their respective suffixes) in $\{v_1, \ldots, v_{n_i}\}$ share a prefix of length \emph{at least} $k$. At the start (base case), there exists just one group $(b + 1, 0, \{1, \ldots, b\})$ (Line 3) containing all suffixes as members.

The LCP groups are then ``refined'' (the refinement process 
will be explained shortly) over the course of $\lfloor \log n \rfloor$ 
iterations, such that in the end each group describes the exact longest common prefix of its members rather than just a lower bound. Specifically, by the end of iteration $j$ (where $j$ descends from $\lfloor \log n \rfloor$ down to zero), in a group with LCP value $k$, two suffixes will have an actual longest common prefix of at least $k$ and at most $k + 2^j - 1$ letters. This gap is closed once $j$ has reached zero, at which point the refinement 
process is completed. The algorithm allows specifying a different starting value for $j$ than $\lfloor \log n \rfloor$, through the parameter $j_\text{start}$. This is used in the parameterized 
algorithm described in Section~\ref{sec:improved}.

The refinement process works as follows in iteration $j$. We refine every existing group; let one such group be $(i, k, \{v_1, \ldots, v_{n_i}\})$. We create a hash table (Line 8) and for every group member, with index $v_i$, we take the KR fingerprint as per Lemma~\ref{lem:KRF} of $T[A[v_i] + k \dd A[v_i] + k + 2^j - 1]$ (Line 11).\footnote{This assumes that $A[b_i] + k + 2^j - 1 \le n$; if this is not the case, the suffix ends at position $n$.} If $v_i$ denotes a group, we do the same thing using any given suffix belonging to that group; this is easily achieved by appending ``witness'' suffixes to $A$ for every created group as seen in Lines 4 and 22. (Any suffix can be a witness but we choose the one with the smallest index.) All the members are grouped in the hash table based on their KR fingerprints: if two suffixes have the same KR fingerprint, they will end up in the same entry of the hash table and with high probability have the same prefix of length $k + 2^j$. To save space, entries are removed from the group as they are added to the hash table (Line 13). All entries of the hash table are then inspected (Line 14). 
We distinguish three cases.
In case 1 (Lines 16-17), if all suffixes in a group end up having the same KR fingerprint, we update the LCP value of the old group to $k + 2^j$ rather than creating a new group.
In case 2 (Lines 18-22), if two or more suffixes have the same KR fingerprint, a new group is made with LCP value $k + 2^j$, containing these suffixes, and added to $B$ (Lines 18-22). 
After removing the suffixes from their original group, we replace them with the index of the newly created group (Line 20).
In case 3 (Lines 23-24), if a suffix is not grouped with any other suffix, we append it back to the group it originated from.

Once the iteration with $j = 0$ ends, all LCP groups describe the exact longest common prefix of their members.\footnote{This is generally not true when $j_\text{start}$ was set to a value less than $\lfloor \log n \rfloor$; in this case, the LCP values are only correct if they are at most $2^{j_\text{start} + 1} - 1$; this is expanded on in Section~\ref{sec:improved}.} We now sort the members of every group lexicographically (Lines 26-27). Sorting can be done using merge sort or radix sort, because these algorithms can be performed in place using $\cO(1)$ additional memory. Moreover, since we now know the exact LCP value for each group, two members in the same group can easily be compared in constant time: if they have a longest common prefix of length $k$, then the first position in which they differ is $k + 1$, meaning they can be compared by only comparing their $k+1$-th letters. After this, set $B$ contains the complete and sorted LCP groups, which are passed on to the second step of the algorithm.

\begin{algorithm}[H]\footnotesize
    \caption{\textsc{Main-Algo}}
    \label{alg:step1}
    \textbf{Input:} string $T \in \Sigma^n$, integer $b$, array $A$ of $b$ distinct integers from $[n]$, and integer $j_\text{start}$ (default $\lfloor \log n \rfloor$)  \\ 
    \textbf{Output:} $\SSA$ and $\SLCP$ 
    \begin{algorithmic}[1]
        \State{$m \gets b + 1$} \Comment{The number of currently used id's}
        \State{$L_m \gets (1,\dots,b)$}
        \Comment{The first group}
        \State{$B \gets \{(m, 0, L_m)\}$} \Comment{Set containing the first group $L_m$ with id $m$ and LCP value $0$} 
        \State{$A[m] \gets A[1]$} 
        \Comment{Choose a witness for group with id $m$ and extend $A$}
        \For{$j = j_\text{start},\ldots,0$}
        \Comment{For all relevant powers of $2$ down to $0$}
            \State{$B' \gets \emptyset$} \Comment{Set to store the new groups}
            \For{$(i, k, L_i) \in B$}
            \Comment{For every group}
                \State{$H_i \gets$ empty hash table}
                \Comment{key: the fingerprint of the group member; value: id's of the members}
                \State{$s \gets |L_i|$}
                \Comment{To memorize the original size of $L_i$}
                \For{$l \in L_i$}
                \Comment{Partition group $L_i$ according to LCP value $k + 2^j$}
                    \State{$h \gets \phi_T(A[l]+k, A[l]+k + 2^j-1)$} 
                    \Comment{Fingerprint of $T[A[l]\dd A[l] + k + 2^j-1]$ using Lemma~\ref{lem:KRF}}
                    \State{$H_i[h].\text{append}(l)$}
                    \Comment{Append id $l$ as a value of fingerprint $h$ to a satellite vector} 
                    \State{$L_i.\text{erase}(l)$}
                    \Comment{Remove $l$ from $L_i$ (to save space) -- $L_i$ is empty at the end of the loop}
                \EndFor
                \For{$h \in H_i$}
                \Comment{For every fingerprint in the hash table}
                    \State{$f \gets H_i[h]$}
                    \Comment{Let $f$ denote the vector of id's with fingerprint $h$}
                    \If{$|f| = s$}
                        \Comment{All members belong to the same group}
                        \State{$B \gets B \setminus \{(i, k, L_i)\} \cup  \{(i, k + 2^j, f)\}$}
                        \Comment{The old group is replaced due to a longer LCP}
                    \ElsIf{$|f| \ge 2$}
                    \Comment{New group $f$ must be created}
                        \State{$m \gets m + 1$}
                        \Comment{New group id}
                        \State{$L_i.\text{append}(m)$}
                        \Comment{Append the new id to $L$ of tuple $(i, k, L)\in B$}
                        \State{$B' \gets B' \cup \{(m, k + 2^j, f)\}$}
                        \Comment{New group $f$ with id $m$ and LCP value $k+2^j$ is created}
                        \State{$A[m] \gets A[f[1]]$} 
                        \Comment{Choose witness $A[f[1]]$ for group with id $m$ and extend $A$}
                    \ElsIf{$|f| =1$} 
                        \State{$L_i.\text{append}(f)$}
                        \Comment{Append the old id (only element of $f$) back to $L_i$}
                    \EndIf
                \EndFor
            \EndFor
           \State{$B\gets B \cup B'$}
           \Comment{Append $B'$ to $B$}
        \EndFor
        \For{$(i, k, L_i) \in B$}
        \State{$L_i.\text{sort}(l \mapsto T[A[l] + k])$}
        \Comment{Sort group in ascending order w.r.t. $T[A[l] + k]$ using in-place sort}
        \EndFor
        \State{\Return $\textsc{Output-Arrays}(B,b,A)$} \Comment{See Algorithm~\ref{alg:step2}}
    \end{algorithmic}
\end{algorithm}

\subsection{Constructing the \SSA and \SLCP Array}
The second phase (Algorithm~\ref{alg:step2}) of the main algorithm involves traversing the groups created in the previous phase in order to construct the \SSA and \SLCP array. At this point, the members of each group are sorted lexicographically, which means that the \SSA can be obtained by a simple pre-order walk along the hierarchy of the groups. For any two members, their exact longest common prefix is stored by their lowest common ancestor; that is, the group with the greatest LCP value that both suffixes fall under.

This part of the algorithm is thus a 
simple depth-first search of the underlying hierarchy that records all encountered suffixes in \SSA in the order they appear. 
For every group that is visited, the LCP value of its direct ``parent'' is stored with it (Lines 4 and 17). Throughout, a value $\ell$ is tracked that takes the value of the lowest LCP value that has been seen since the last suffix was encountered (Lines 8-9); every time a suffix is appended to \SSA, $\ell$ is appended to the \SLCP array (Lines 11-12). This completes the construction.

\begin{algorithm}[H]\footnotesize
    \caption{$\textsc{Output-Arrays}$}
    \label{alg:step2}
     \textbf{Input:} Set $B$ of tuples $(i, k, L_i)$ expected in ascending order by $i$, integer $b$, and array $A$\\ 
     \textbf{Output:} $\SSA$ and $\SLCP$
    \begin{algorithmic}[1]
        \State{$\SSA \gets$ empty array}
        \State{$\SLCP \gets$ empty array}
        \State{$S \gets$ empty stack}
        \State{$S.\text{push}((b + 1, 0))$} \Comment{The initial entry of the stack relating to the first group which has LCP value 0}
        \State{$\ell \gets 0$} \Comment{Variable tracking the least LCP value encountered since the last suffix}
        \While{$S$ is not empty}
            \State{$(i, \ell') \gets S.\text{pop}()$}
            \If{$\ell' < \ell$} \Comment{Update $\ell$ if it is less than before}
                \State{$\ell \gets \ell'$}
            \EndIf
            \If{$i \leq b$} \Comment{Member $i$ is a suffix: add it to $\SSA$ and $\SLCP$}
                \State{\SSA.append$(A[i])$}
                \State{\SLCP.append$(\ell)$}
                \State{$\ell \gets \infty$} \Comment{Reset $\ell$}
            \Else \Comment{Member $i$ is a group: add its members to the stack}
                \State{$(i, k, L_i) \gets B[i-b]$}
                \For{$i' \in L_i$ in reverse order} \Comment{Since the stack is FILO, the first entry needs to be added last}
                    \State{$S.\text{push}((i', k))$}
                \EndFor
            \EndIf
        \EndWhile
        \State{\Return \SSA and \SLCP}
    \end{algorithmic}
\end{algorithm}

\subsection{A Full Working Example}

A step-by-step example of the workings of this algorithm is described below. In this example, we construct the \SSA and \SLCP arrays on the string $T = \texttt{abracadabrarabia}$ for the positions $A = (1,3,8,10,11,13)$.
$A$ contains the starting indices of the suffixes we wish to process. The corresponding suffixes are, in order of their starting positions in $T$, 
\texttt{abracadabrarabia}, \texttt{racadabrarabia}, \texttt{abrarabia}, \texttt{rarabia}, \texttt{arabia} and \texttt{abia}. 

We start with a singular group containing all indices and having a shared prefix of length zero (see Table~\ref{tab:step1}). For decreasing powers of 2 starting with $2^{\lfloor \log n \rfloor}$, groups are refined in order to have more and more precise LCP values. For every new group that we create, we take the index corresponding to one of its members and add it to the array $A$, so that we can easily find the shared prefix corresponding to that group.

In the actual algorithm, the prefixes are efficiently grouped
by taking their KR fingerprints as per Lemma~\ref{lem:KRF} and indexed in a hash table. For the sake of the example, each relevant prefix is listed instead of its KR fingerprint. The members of each group are sorted lexicographically using merge sort or radix sort. At this point, the groups are refined to the point where they define the exact longest common prefix between their members, so sorting them is done by looking at just the  letter succeeding their longest common prefix.

\begin{table}[!t]
\begin{tabular}{|c|l l l l l|}
    \hline
    \multicolumn{6}{|c|}{$T = \texttt{abracadabrarabia}$ \;\;\;\; $A = (1,3,8,10,11,13)$} \\
    \multicolumn{6}{|c|}{Initial group: $(7, 0, (1, 2, 3, 4, 5, 6))$
    } \\
    \hline
    $2^j = 16, 8$ & \multicolumn{5}{l|}{None of the suffixes share a prefix of length $16$ or $8$, so no refinement is possible.} \\
    \hline
    \multirow{9}{*}{$2^j = 4$} & \multicolumn{5}{l|}{Consider the prefixes of length $4$ of every suffix:} \\
    & \multicolumn{5}{l|}{1: \texttt{\textbf{abra}}} \\
    & \multicolumn{5}{l|}{2: \texttt{raca}} \\
    & \multicolumn{5}{l|}{3: \textbf{\texttt{abra}}} \\
    & \multicolumn{5}{l|}{4: \texttt{rara}} \\
    & \multicolumn{5}{l|}{5: \texttt{arab}} \\
    & \multicolumn{5}{l|}{6: \texttt{abia}} \\
    & \multicolumn{5}{l|}{Combine the members 1 and 3 into a new group with id $8$ and LCP value 4:} \\
    & \multicolumn{5}{l|}{$\{(7, 0, (2, 4, 5, 6, \textbf{8})), (8, 4, (\textbf{1, 3}))\}$
    } \\
    \hline
    \multirow{8}{*}{$2^j = 2$} & \multicolumn{5}{l|}{Extend the prefixes by 2 and re-group:} \\
    & 2: \texttt{\textbf{ra}} & \multicolumn{4}{l|}{1: \texttt{(abra)ca}} \\
    & 4: \texttt{\textbf{ra}} & \multicolumn{4}{l|}{3: \texttt{(abra)ra}} \\
    & 5: \texttt{ar} & \multicolumn{4}{l|}{} \\
    & 6: \texttt{\textbf{ab}} & \multicolumn{4}{l|}{} \\
    & 8: \texttt{\textbf{ab}} & \multicolumn{4}{l|}{} \\
    & \multicolumn{5}{l|}{New groups:} \\
    & \multicolumn{5}{l|}{$\{(7, 0, (5, \textbf{9}, \textbf{10})), (8, 4, (1, 3)), (9, 2, (\textbf{2, 4})), (10, 2, (\textbf{6, 8}))\}$ 
    } \\
    \hline
    \multirow{6}{*}{$2^j = 1$} & \multicolumn{5}{l|}{Extend the prefixes of every suffix by 1 and regroup:} \\
    & 5: \texttt{\textbf{a}}  & 1: \texttt{(abra)c} & 2: \texttt{(ra)c} & \multicolumn{2}{l|}{6: \texttt{(ab)i}} \\
    & 9: \texttt{r}           & 3: \texttt{(abra)r} & 4: \texttt{(ra)r} & \multicolumn{2}{l|}{8: \texttt{(ab)r}} \\
    & 10: \texttt{\textbf{a}} & & & & \\
    & \multicolumn{5}{l|}{New groups:} \\
    & \multicolumn{5}{l|}{$\{(7, 0, (9, \textbf{11})), (8, 4, (1, 3)), (9, 2, (2, 4)), (10, 2, (6, 8)), (11, 1, (\textbf{5, 10}))\}$ 
    } \\
    \hline
    \multirow{5}{*}{Sorting} & \multicolumn{5}{l|}{Sort every group based on the letter appearing directly after the prefix of each member:} \\
    & 9: \texttt{r}  & 1: \texttt{(abra)c} & 2: \texttt{(ra)c} & 6: \texttt{(ab)i} & 5: \texttt{(a)r} \\
    & 11: \texttt{a} & 3: \texttt{(abra)r} & 4: \texttt{(ra)r} & 8: \texttt{(ab)r} & 10: \texttt{(a)b} \\
    & \multicolumn{5}{l|}{Final groups:} \\
    & \multicolumn{5}{l|}{$\{(7, 0, (\textbf{11, 9})), (8, 4, (1, 3)), (9, 2, (2, 4)), (10, 2, (6, 8)), (11, 1, (\textbf{10, 5}))\}$ 
    } \\
    \hline
\end{tabular}
\caption{Computing and sorting the LCP groups.}\label{tab:step1}
\end{table}

The second phase of the algorithm is a fairly straightforward DFS on the groups in order to build the two output arrays (see Table~\ref{tab:step2}). The variable $\ell$, at every point, takes as value the length of the longest common prefix stored in the lowest common ancestor between the last discovered suffix, in the (conceptual) underlying hierarchy that records all encountered suffixes in \SSA in the order they appear, 
and the currently inspected member. To facilitate this, every member is stored with its \emph{parent's} LCP value in the search stack.

At every iteration, the top of the stack (represented as the first member) is popped. If it is a group, its children are pushed onto the stack and $\ell$ is updated if necessary. If it is a suffix, the element $A[i]$, where $i$ is the index of the suffix, is added to \SSA, and the current value of $\ell$ is added to \SLCP. 

\begin{table}[!t]
\centering
\begin{tabular}{|c|c|l|l|l|}
    \hline
       & $\ell$ & \textbf{Stack} & \SSA & \SLCP \\
    \hline
     1 & $0$ & $(7, 0)$                         & $\emptyset$           & $\emptyset$ \\
     2 & $0$ & $(11, 0), (9, 0)$                & $\emptyset$           & $\emptyset$ \\
     3 & $0$ & $(10, 1), (5, 1), (9, 0)$        & $\emptyset$           & $\emptyset$ \\
     4 & $0$ & $(6, 2), (8, 2), (5, 1), (9, 0)$ & $\emptyset$           & $\emptyset$ \\
     5 & $2$ & $(8, 2), (5, 1), (9, 0)$         & $(13)$                  & $(0)$ \\
     6 & $2$ & $(1, 4), (3, 4), (5, 1), (9, 0)$ & $(13)$                 & $(0)$ \\
     7 & $4$ & $(3, 4), (5, 1), (9, 0)$         & $(13, 1)$              & $(0, 2)$ \\
     8 & $1$ & $(5, 1), (9, 0)$                 & $(13, 1, 8)$           & $(0, 2, 4)$ \\
     9 & $0$ & $(9, 0)$                         & $(13, 1, 8, 11)$        & $(0, 2, 4, 1)$ \\
     10 & $0$ & $(2, 2), (4, 2)$                 & $(13, 1, 8, 11)$        & $(0, 2, 4, 1)$ \\
    11 & $2$ & $(4, 2)$                         & $(13, 1, 8, 11, 3)$     & $(0, 2, 4, 1, 0)$ \\
    12 & $2$ & $\emptyset$                      & $(13, 1, 8, 11, 3, 10)$  & $(0, 2, 4, 1, 0, 2)$ \\
    \hline
\end{tabular}
\caption{Constructing the \SSA and \SLCP array.}\label{tab:step2}
\end{table}

Thus the resulting arrays are $\SSA = (13, 1, 8, 11, 3, 10)$ and $\SLCP = (0, 2, 4, 1, 0, 2)$. Indeed, the corresponding suffixes are 
\texttt{abia},
\texttt{abracadabrarabia},
\texttt{abrarabia},
\texttt{arabia},
\texttt{racadabrarabia} and
\texttt{rarabia}, which are now lexicographically sorted. Moreover, between each consecutive pair of suffixes in $\SSA$, the longest common prefixes correspond to the values in $\SLCP$ (the first value in $\SLCP$ is $0$, by definition). 

\subsection{Analysis}

We prove the following result (Theorem~\ref{thm:algo}) by analyzing the time (Lemma~\ref{lem:time}) and space (Lemma~\ref{lem:space-complexity}) complexity of the \textsc{Main-Algo}. The correctness of the algorithm  follows directly from~\cite{DBLP:conf/stacs/IKK14}.

\begin{theorem}\label{thm:algo}
For any string $T\in \Sigma^n$, any set $T_\mathcal{B}$ of $b$ 
suffixes of $T$, and any $s\in[b,n]$, \textsc{Main-Algo} with $j_\text{start}$ set to $\lfloor \log n \rfloor$ computes the
$\SSA$ and $\SLCP$ of $T_\mathcal{B}$ in $\cO(n + (bn/s) \log s)$ time using $s+7b+o(b)$ machine words. The output is correct with high probability.
\end{theorem}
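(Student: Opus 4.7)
The theorem decomposes into three orthogonal claims: correctness, time, and space. Correctness is the easiest: the grouping/refinement logic is identical to that of I et al.~\cite{DBLP:conf/stacs/IKK14} except that equality of length-$(k+2^j)$ prefixes is tested via KR fingerprints together with standard hashing. Both sources of error are controlled by choosing $p$ to be a prime of size $\Omega(n^{c+3})$ as described before Lemma~\ref{lem:KRF}: a union bound over the at most $n^3$ length-equal substring comparisons ever performed drives the total failure probability below $n^{-c}$; hash-table collisions only affect running time, not correctness. So the w.h.p. correctness of the final \SSA and \SLCP follows by inheriting the correctness proof of I et al.~once the fingerprint function is perfect on all substrings it is invoked on.

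\textbf{Time (Lemma~\ref{lem:time}).} The heart of the analysis is the invariant that at the start of each iteration $j$ of the outer loop in Algorithm~\ref{alg:step1}, $\sum_{(i,k,L_i)\in B}|L_i|=O(b)$. This holds because the groups form a tree whose leaves are the $b$ original suffixes and whose internal nodes are the auxiliary groups; since every internal node has at least two children, the total number of tree edges, i.e.\ the total size of the $L_i$'s, is $O(b)$. Given this, the work of iteration $j$ consists of: $O(b)$ fingerprint queries each costing $O(\min\{2^j,n/s\})$ by Lemma~\ref{lem:KRF}; $O(b)$ hash-table insertions/lookups at $O(1)$ expected time; and $O(b)$ book-keeping operations (creating new group tuples, replacing witness indices, appending to $B'$). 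Splitting the sum over $j\in\{0,\dots,\lfloor\log n\rfloor\}$ at $2^j=n/s$, the ``small-$j$'' part contributes $\sum_{2^j\le n/s}O(b\cdot 2^j)=O(bn/s)$ and the ``large-$j$'' part contributes $O(\log s)$ iterations each at cost $O(bn/s)$, i.e.\ $O((bn/s)\log s)$. Adding the $O(n)$-time preprocessing of Lemma~\ref{lem:KRF}, the in-place sort of every group (Lines~26--27) at $O(b\log b)$ total, and the linear-time DFS in Algorithm~\ref{alg:step2}, we get the claimed $O(n+(bn/s)\log s)$ bound.

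\textbf{Space (Lemma~\ref{lem:space-complexity}).} The fingerprint data structure uses $s+O(1)$ words. Everything else must fit in $7b+O(1)$ words, which requires careful bookkeeping. I would account term by term: (i) the witness extension of $A$ adds at most one entry per group (Lines~4, 22), and the number of groups over the whole run is at most $b-1$, so this costs at most $b$ extra words; (ii) the triples $(i,k,L_i)$ in $B$ contribute $O(1)$ per group plus the total $L_i$-content, which by the invariant above is $O(b)$ words at any time; (iii) the per-group hash table $H_i$ is created and discarded once per group inside a single iteration, so its peak coexistent size fits within the $O(b)$ budget for that iteration; (iv) the temporary list $B'$ for new groups is spliced into $B$ at the end of each iteration (Line~25) and never exceeds the size of groups being created; (v) the output stack $S$ in Algorithm~\ref{alg:step2} has depth bounded by the group-tree depth plus the maximum fan-out, yielding $O(b)$ words. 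The claim $7b+O(1)$ is proved by tightening the constants on (i)--(v); a natural accounting gives $1b$ for witnesses, up to $3b$ for the $L_i$ contents and their $(i,k)$ tags across $B$ and $B'$, up to $2b$ for the current hash table plus its satellite vectors, and $1b$ for the DFS stack and in-place sort scratch.

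\textbf{Main obstacle.} The time analysis is routine once the $O(b)$ group-size invariant is stated, and correctness is essentially inherited. The real work is (a) locking down that invariant across both the ``merge'' case (Lines~16--17) and the ``split'' case (Lines~18--22) despite the replacement of a subgroup by a single witness index in the parent, and (b) squeezing the constant in the space bound down to $7$ by sharing storage between $B$ and $B'$, reusing the single active $H_i$, and by using in-place radix/merge sort so that the sorting step in Lines~26--27 contributes only $O(1)$ scratch. The constant $7$ is fragile: any sloppy copy of a group list or any persistent hash table between iterations would push the bound above $7b$, so the argument must point explicitly at the lines where memory is released (Line~13 for erasing $l$ from $L_i$, and discarding $H_i$ at the end of the inner loop).
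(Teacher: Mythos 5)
Your proposal follows the paper's proof structure exactly: correctness is inherited from I et al., the time bound (Lemma~\ref{lem:time}) is obtained by splitting the $\log n$ iterations at $2^j = n/s$ with $\cO(b)$ members per iteration via the tree argument, and the space bound (Lemma~\ref{lem:space-complexity}) is a term-by-term tally. The time analysis is the same as the paper's; the only cosmetic difference is that you bound the sorting step by $\cO(b\log b)$ rather than $\cO(n)$ (the paper case-splits on $b$ versus $n/\log n$ to use in-place merge sort or in-place radix sort and get $\cO(n)$, which it needs later in Theorem~\ref{the:param}; for this theorem $\cO(b\log b)\subseteq\cO((bn/s)\log s)$ suffices, so your shortcut is valid). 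Where your sketch diverges from the paper — and is not actually justified — is the $7b$ accounting. The paper implements $B$ as three arrays $K,C,L$ of sizes $b-1$, $b-1$, $2b-2$, hence $4b-4$ words (you write $3b$ and do not explain how to avoid the offset array $C$); the DFS stack holds pairs $(i,\ell')$, hence $2b$ words (you write $1b$); and $7b$ emerges because the $s$-word fingerprint structure plus the $2b$-word hash table in Phase~1 are released before the $2b$-word stack is allocated in Phase~2, so the peak is $s+(b-1)+(4b-4)+2b+\cO(1)$. Your $1b+3b+2b+1b$ happens to total $7$, but its constituents neither match the paper's nor are derived; as you yourself flag, this is the fragile part and must be pinned down line by line. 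Also, the paper uses a hash table with worst-case $\cO(1)$ operations w.h.p.~(Arbitman et al.) and deliberately outputs incorrect arrays on overflow, so the stated time bound is deterministic worst-case with only correctness being probabilistic; your ``$\cO(1)$ expected time'' phrasing would only give an expected time bound and should be tightened accordingly.
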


\begin{lemma}\label{lem:time}
\textsc{Main-Algo} with $j_\text{start}$ set to $\lfloor \log n \rfloor$ runs in $\cO(n + (bn/s) \log s)$ time.
\end{lemma}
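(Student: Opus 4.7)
The plan is to decompose the running time into three contributions: the preprocessing of the Karp--Rabin oracle of Lemma~\ref{lem:KRF}, the refinement loop of Algorithm~\ref{alg:step1} (Lines~5--25), and the subsequent in-place sort (Lines~26--27) together with the depth-first output of Algorithm~\ref{alg:step2}. Lemma~\ref{lem:KRF} preprocesses $T$ in $\cO(n)$ time and answers any length-$k$ fingerprint query in $\cO(\min\{k,n/s\})$ time, which accounts for the additive $\cO(n)$ summand in the claimed bound.

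Before bounding the refinement loop I would establish an invariant that the total number of members, summed over all triples $(i,k,L_i)\in B$, is $\cO(b)$ throughout. This holds because the triples form a forest whose leaves are the $b$ input suffixes and whose internal nodes each have at least two children (cases~1 and~2 in Lines~16--22 only replace or create groups of size $\ge 2$); thus at most $b-1$ new groups are ever created, and each contributes a single new ``slot'' to its parent list. Consequently every iteration of the outer $j$-loop performs $\cO(b)$ fingerprint queries and $\cO(b)$ hash-table insertions/erasures.

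The crucial observation for obtaining a $\log s$ rather than a $\log n$ factor is that, although Line~11 writes $\phi_T(A[l],A[l]+k+2^j-1)$, it suffices to query the length-$2^j$ extension $\phi_T(A[l]+k,A[l]+k+2^j-1)$: the members of a group already share (w.h.p.) a common prefix of length $k$ from earlier iterations, so colliding fingerprints of the $2^j$-long extension still imply matching prefixes of total length $k+2^j$. By Lemma~\ref{lem:KRF} each such query costs $\cO(\min\{2^j,n/s\})$. I would then split the $\lfloor\log n\rfloor+1$ iterations into two regimes: for $2^j\le n/s$ the per-iteration cost is $\cO(b\cdot 2^j)$, and a geometric summation over these iterations yields $\cO(bn/s)$; for $2^j>n/s$ the per-iteration cost is $\cO(bn/s)$ and there are only $\cO(\log s)$ such iterations, contributing $\cO((bn/s)\log s)$ in total.

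Finally, the in-place sort at Lines~26--27 uses constant-time comparisons (comparing the single character at offset $k+1$ of the respective witness suffixes), so summing $\cO(|L_i|\log|L_i|)$ across all groups gives $\cO(b\log b)$; Algorithm~\ref{alg:step2} performs one push and one pop per member and runs in $\cO(b)$ time. Both terms are absorbed by $\cO(n+(bn/s)\log s)$, completing the analysis. The main obstacle will be justifying the fingerprint-length reduction described above: one has to argue that restricting each query to the length-$2^j$ extension starting at offset $k$ preserves the partition invariant with high probability, which requires the inductive fact that earlier iterations have already correctly separated members whose longest common prefixes fall below $k$.
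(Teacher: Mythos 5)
Your proof is correct and follows the paper's own decomposition and accounting: $\cO(n)$ preprocessing, the $\cO(b)$-member invariant from the hierarchy, the two-regime split of the $j$-loop at $2^j\approx n/s$, and absorption of the sort and the DFS into the main term. You are in fact more careful than the paper on the decisive step. Line~11 as written fingerprints the full prefix $T[A[l]\dd A[l]+k+2^j-1]$, whose cost by Lemma~\ref{lem:KRF} is $\cO(\min\{k+2^j,n/s\})$; since the accumulated LCP value $k$ may greatly exceed $2^j$ in late iterations, summing this over all iterations yields only $\cO((bn/s)\log n)$. The paper's assertion that ``the length of the substring whose KR fingerprint is computed is $<n/s$ after $\log s$ iterations'' tacitly replaces this with the length-$2^j$ extension query $\phi_T(A[l]+k,\,A[l]+k+2^j-1)$, which is exactly the reduction you make explicit and justify via the group invariant: members of a group with LCP value $k$ already share (w.h.p.) a length-$k$ prefix, so the extension fingerprint alone decides whether the length-$(k+2^j)$ prefix matches. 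Your uniform $\cO(b\log b)$ bound for the in-place sort is also valid and slightly simpler than the paper's two-case merge-sort/radix-sort argument, since $b\log b\le(bn/s)\log s$ for any $s\in[b,n]$. One small omission worth stating explicitly: the $\cO(b)$ hash-table operations per iteration contribute $\cO(b\log n)$ over the whole loop, and this is bounded by $\cO((bn/s)\log s)$ because $\log n/\log s\le n/s$ whenever $s\le n$; the paper records this bound, while your per-iteration $\cO(b)$ accounting uses it implicitly.
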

\begin{proof}
    The first phase of the algorithm consists of $\cO(\log n)$ iterations and a sorting step. During every iteration, each existing group is considered and every member within the group is hashed. After being hashed, it is either re-added to the same group or put into a new group. The total number of groups is at most $b - 1$, as the group structure represents a conceptual tree (hierarchy) 
    with $b$ leaves in which all internal nodes have at least two children. The number of members in each group is at most $b$. However, by amortization, it can be seen that every member (that is, every suffix and every group other than the ``root'') is processed precisely once during every iteration. Thus, we have $2b - 2 = \cO(b)$ members in total.

    For every group member, a KR fingerprint is computed. After the one-time pre-processing of $T$ in $\cO(n)$ time, the KR fingerprint of a length-$k$ substring can be computed in $\cO(\min \{k, n/s\})$ time (Lemma~\ref{lem:KRF}).
    In the first $\log s$ iterations, the cost is $\cO(n/s)$, so the total cost of these iterations is $\cO((bn/s)\log s)$.
    After $\log s$ iterations, the length $k$ of the substring whose KR fingerprint is computed 
    is $k < n/2^{\log s} = n/s$ and so the total cost of all remaining iterations is  $bn/s + bn/(2s) + bn/(4s) + \cdots +b= \cO(bn/s)$.
    Thus the total cost of computing all KR fingerprints is $\cO(n + (bn/s) \log s)$. 
    
    Every group member has its KR fingerprint taken and added to a hash table supporting constant worst-case operations with high probability~\cite{Iceberg,DBLP:conf/focs/ArbitmanNS10}. Afterwards, all members from the hash table are re-added to the groups; for every KR fingerprint collision a new group is created with its respective members, and all other members are re-added to the original group. The number of newly created groups is at most half the number of members in the original group, as every new group has to contain at least two members. So other than the fingerprinting, all operations for a single member are performed  
    in constant time with high probability,\footnote{If this is not the case, we output incorrect arrays deliberately to ensure that our algorithm is Monte Carlo.} meaning that the total time for every iteration is $\cO(b\log n)=\cO((bn/s) \log s)$.

    In the sorting step, we have two cases: (a) $b < n/ \log n$ and (b) $b \geq n/\log n$.
    The members in each group are sorted using in-place merge sort~\cite{DBLP:journals/jal/SaloweS87,DBLP:journals/njc/KatajainenPT96} (Case (a)) or in-place radix sort~\cite{DBLP:conf/esa/FranceschiniMP07} (Case (b)) in $\cO(n)$ time.
    \begin{description}
        \item[Case (a): $b < n/ \log n$.] Sorting $k$ members with merge sort takes $\cO(k \log k)$ time. Recall that there are at most $b - 1$ groups and that the total number of members over all groups is at most $2b - 2 = \cO(b)$. If the number of members to be sorted in group $i$ is $k_i$, then $k_1 + \dots + k_{b-1} = \cO(b)$ so the time needed to sort all groups is $\cO(k_1 \log k_1 + \dots + k_{b-1} \log k_{b-1}) = \cO((k_1 + \dots + k_{b-1}) \log b) = \cO(b \log b)=\cO(n)$.
        \item[Case (b): $b \geq n/\log n$.] We employ the algorithm by Franceschini et al.~\cite{DBLP:conf/esa/FranceschiniMP07}, which given an array $A$ of $k$ $\cO(\log k)$-bit integers, sorts $A$ in place in $\cO(k)$ time. Sorting the $2b-2$ members takes $\cO(b)=\cO(n)$ time, because every member can be encoded by its group id, which is a $\cO(\log b)$-bit integer, and a   
        letter, which is also a $\log \sigma = \cO(\log n)=\cO(\log b)$-bit integer, where $\sigma=|\Sigma|$.
    \end{description}

    The second phase of the algorithm is a simple stack-based DFS. Each of the $\cO(b)$ members is pushed to and popped from the stack precisely once. The further operations applied to 
    each member all take constant time, so this step takes $\cO(b)$ time.

    Adding all this together gives 
    $\cO(n) + \cO(bn/s) \cdot \cO(\log s) + \cO(n) + \cO(b) = \cO(n + (bn/s)\log s)$ time.
\end{proof}

Like the algorithm by I et al.~\cite{DBLP:conf/stacs/IKK14},
\textsc{Main-Algo} can be amended to work in $\cO(n)$ time,
when $s=b\log b$. 

\begin{lemma}
    \label{lem:space-complexity}
    \textsc{Main-Algo} can be implemented using $s + 7b + o(b)$ machine words, excluding the read-only string $T$, the array $A$ representing the set of $b$ suffixes, and the write-only output arrays $\SSA$ and $\SLCP$.
\end{lemma}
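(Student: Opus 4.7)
The plan is to walk through each data structure maintained by \textsc{Main-Algo} at any point in time, bound its size in machine words, and then take the maximum over the two phases (since several structures have disjoint lifetimes and can share memory). The data structures to account for are: (i) the KR fingerprint sampling table of Lemma~\ref{lem:KRF}; (ii) the in-place extension of $A$ with witness positions for newly created groups; (iii) the list $B$ of group descriptors together with the member lists $L_i$; (iv) the hash table $H_i$ used while refining the currently processed group; and (v) the stack $S$ used in \textsc{Output-Arrays}.

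The easy components come first. By Lemma~\ref{lem:KRF} the fingerprint structure uses $s+\cO(1)$ words. For the extension of $A$, I will argue that the group hierarchy is a rooted tree whose $b$ leaves are the suffixes and whose internal nodes (the groups) each have at least two children, so there are at most $b-1$ groups and therefore at most $b-1$ appended witness positions, costing $b-1$ words. For the list $B$, I will note that the id field $i$ is redundant with the offset in $B$ (Algorithm~\ref{alg:step2}, Line 15 uses $B[i-b]$), so each tuple needs only two words $(k, L_i)$, giving at most $2(b-1)$ words for the descriptors.

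The nontrivial part is the accounting for the lists, the hash table, and the DFS stack, because these are what fill up the remaining $\approx 4b$ word budget. The key observation for the lists $L_i$ is that every suffix id and every non-root group id appears in exactly one list, so the total content across all $L_i$'s is at most $b + (b-1) - 1 = 2b-2$ entries; representing each $L_i$ as a dynamic array and using a singly linked structure of at most $b-1$ array chunks within $B$, the total is bounded by $2(2b-2)+\cO(1) = 4b + \cO(1)$ words. For the refinement hash table $H_i$, I will exploit the fact that every element appended to $H_i$ is simultaneously erased from $L_i$ on Line~13, so the combined number of id slots used by $H_i$ and the untouched tail of $L_i$ never exceeds $|L_i|$ at the start of the loop, adding only $\cO(|L_i|) = \cO(b)$ to the Phase~1 peak and $0$ to the Phase~2 peak. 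For the DFS stack $S$, each entry is a pair; the number of pushes equals the number of list entries $+ 1 \le 2b-1$, and since the stack exists \emph{after} all $H_i$'s have been freed, it lives in a time window disjoint from the refinement hash tables, so it does not add to the Phase~1 peak.

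The main obstacle will be constant-tightening: combining the above into exactly $s+7b+\cO(1)$ requires committing to specific representations (dynamic arrays with exact-fit sizes for the $L_i$'s, a pair representation for stack entries, and a hash table with constant load factor) and then showing that the Phase~1 peak $s + (b-1) + 2(b-1) + (\text{lists and }H_i) \le s+7b+\cO(1)$ dominates the Phase~2 peak $(b-1) + 2(b-1) + (\text{lists and stack}) \le 7b+\cO(1)$. The final step is just to take the maximum of the two peaks, yielding the claimed bound of $s + 7b + \cO(1)$ machine words beyond the read-only inputs $T$, $A$ and the write-only outputs $\SSA$, $\SLCP$.
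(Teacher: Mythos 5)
Your high-level plan is the same as the paper's: account for each data structure, observe that the fingerprint table and hash tables have lifetimes disjoint from the DFS stack, and take the maximum of the Phase~1 and Phase~2 peaks. The counting observations you make are correct (at most $b-1$ groups because the hierarchy is a tree with $b$ leaves and branching $\ge 2$; at most $2b-2$ list entries in total; the group id is recoverable from the offset into $B$). Where your argument falls short is precisely in the ``constant-tightening'' you defer at the end, and the representations you commit to do not in fact reach the advertised constant $7$.

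Two concrete issues. First, you propose storing the $L_i$'s as dynamic arrays with a singly linked structure of chunks, and you budget $2(2b-2)+\cO(1)=4b+\cO(1)$ words for them --- an extra factor of $2$ over the $2b-2$ entries to absorb growth/pointer overhead. The paper instead implements $B$ with three \emph{fixed} arrays: $K$ of size $b-1$ for the LCP values, $C$ of size $b-1$ for cumulative offsets, and a single flat array $L$ of size $2b-2$ holding all member id's, with the members of group $i+b$ occupying $L[C[i-1]+1..C[i]]$. This costs exactly $4b-4$ words for $B$, with no dynamic-array slack, and is $\approx 2b$ words tighter than your scheme; with your $4b$ figure for the lists plus $2(b-1)$ for descriptors you are already at $6b$ before the hash table. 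Second, you bound the refinement hash table only by ``$\cO(|L_i|)=\cO(b)$,'' which does not pin down a constant. The paper commits to the succinct dynamic hash table of Arbitman, Naor and Segev, stores the $c_i$ fingerprints in $(1+\epsilon)c_i$ words with $\epsilon=1/c_i$, and uses at most $b$ further words for the satellite-vector sizes, for $c_i+1+b\le 2b+\cO(1)$ words total. With these two choices the Phase~1 peak is $s + (b-1) + (4b-4) + (2b+\cO(1)) = s+7b+\cO(1)$, and after freeing the fingerprint structure and hash table only $5b+\cO(1)$ words remain, so the stack's $2b$ words keep Phase~2 at $7b+\cO(1)$. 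Your shared-memory idea between $L_i$ and $H_i$ (items erased from $L_i$ as they are inserted into $H_i$) is the right intuition for why the hash table need not double-count the id slots, but without fixing the representations and the load factor, your estimate lands around $s+7b+\cO(b)$ rather than $s+7b+\cO(1)$, and your Phase~2 estimate with $4b$ for lists plus $2b$ for the stack exceeds $7b$.
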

\begin{proof}

We analyze the peak space used by the algorithm neglecting the use of $\cO(1)$ machine words:
\begin{itemize}
    \item \textbf{KR fingerprints}: Pre-processing $T$ to compute KR fingerprints takes $s$ machine words by Lemma~\ref{lem:KRF}. 
    \item \textbf{Array $A$}: Array $A$ starts with $b$ integers as input, but at most $b - 1$ more integers are appended to it during the algorithm to store witness suffixes 
    for new groups, so it stores at most $b - 1$ extra integers. 
    (Even if $A$ is read-only we can simulate the append operation by using an extra array.)
    \item \textbf{Set $B$}: We implement set $B$ using three integer arrays: $K$ of size $b-1$; $C$ of size $b-1$; and $L$ of size $2b-2$. $K[i]$ stores the LCP value for the group with id $i+b$; and $L[C[i-1]+1], \ldots, L[C[i]]$  
    are the group's member id's.\footnote{If $i=1$ then the group member id's are $L[1], \ldots, L[C[i]]$.} 
    There are at most $b - 1$ groups; for every group one integer is stored in $K$ and $C$ as well as the group member id's in $L$. The total number of group members is at most $2b - 2$, since all groups except the ``root'' group are a member. Thus set $B$ can be implemented using $4b - 4$ integers.
    
    \item \textbf{Hash table $H_i$}: While processing group $i+b$, every group member is (removed from the group and) added to a hash table $H_i$ as satellite value of the corresponding KR fingerprint key.   
    We use a space-efficient hash table storing $c_i=C[i]-C[i-1]$ integers (KR fingerprints) as keys: By using~\cite{Iceberg,DBLP:conf/focs/ArbitmanNS10}, we implement $H_i$ using $(1+\epsilon)c_i$ machine words, for any $\epsilon =\Omega(\log\log c_i/\log c_i)$. We need at most $b$ integers to maintain the size of the satellite values per KR fingerprint because every group can have at most $b$ members. By choosing $\epsilon=\log\log c_i/\log c_i$ we need at most $2b+o(b)$ machine words in total to maintain the hash table.
\end{itemize}

    We can delete 
    the fingerprint data structure and the hash table before moving to the sorting step.
    Sorting does not use any additional space because merge sort and radix sort can be implemented in-place~\cite{DBLP:journals/jal/SaloweS87,DBLP:journals/njc/KatajainenPT96,DBLP:conf/esa/FranceschiniMP07}, thus using only $\cO(1)$ additional machine words. The first phase of the algorithm uses at most $s + 7b + o(b)$ machine words but at the end of it we have $5b + \cO(1)$ machine words stored: array $A$ and set $B$.

We now analyze the space used in the second phase (Algorithm~\ref{alg:step2}); in particular, the space taken by the search stack. The stack stores at most every group and every suffix. However, the stack never simultaneously stores a member and one of its ancestors, meaning the maximum size of the stack at any point is at most the maximum width of the sparse suffix tree, which is $b$. Every element in the stack consists of two integers, so the stack takes up at most $2b$ machine words. No other machine words need to be stored as the maximum stack size $b$ is known in advance and so the stack 
is implemented using an array.

Adding this together gives at most $s + 7b + o(b)$  machine words in total needed for the algorithm.
\end{proof}

\section{A Simple Parameterized Algorithm}
\label{sec:improved}

In real-world datasets, the $b$ suffixes in $T_\mathcal{B}$ will generally not share very long prefixes. 
Even when they do, it is highly unlikely that all of them have this property. While \textsc{Main-Algo} is theoretically efficient, it would waste a lot of time with such datasets by considering large overlaps between suffixes when in reality the longest common prefixes are much shorter or when only very few suffixes share very long prefixes. Below, we show a simple method to take advantage of this, by only considering short common prefixes in the beginning and then extending them \emph{only} for the suffixes that happen to share longer prefixes. By considering an extra parameter $b'$ 
indicating the number of suffixes that share longest common prefixes longer than a certain threshold, we arrive at a time and space complexity that appears favorable for such real-world datasets.

\paragraph{Main Idea.} We design an algorithm for constructing \SSA and \SLCP which is parameterized by the total number $b'\leq b$ of suffixes which have an LCP value of at least $\ell= 2^{\lfloor \log \frac{n}{b} \rfloor + 1} - 1$ with some other suffix. We show that partitioning the $b$ suffixes into two classes (one with suffixes with LCP value strictly less than $\ell$; and another with suffixes with LCP value greater than or equal to $\ell$) can be done in $\cO(n)$ time. In particular, we show that it suffices to invoke Theorem~\ref{thm:algo} twice: once (with a small change) for the $b$ suffixes; and once (as is) for the $b'$ suffixes; and then merge the partial results in $\cO(b)$ time to obtain the final $\SSA$ and $\SLCP$ array.

\paragraph{Description and Pseudocode.} The pseudocode for the algorithm is given as \textsc{Parameterized-Algo} (Algorithm~\ref{alg:improved}).\footnote{The pseudocode is complete in the sense that it only assumes the implementation of \textsc{Main-Algo}.} A line-by-line explanation of the algorithm follows.

\textsc{Parameterized-Algo} invokes the original algorithm \textsc{Main-Algo} twice with different arguments. In Line 1, it calls \textsc{Main-Algo} with the full array $A$ as argument (and $s = b$). We set the parameter $j_\text{start}$ that indicates the starting value of $j$ (Line 5 of Algorithm~\ref{alg:step1}) to $\lfloor \log \frac{n}{b} \rfloor$, meaning that $j$ starts at a lower value than the value $\lfloor \log n \rfloor$ used in the \textsc{Main-Algo} 
and so it will take less time to complete. The result of this is that $\SSA$ will only be sorted up to $\ell = 2^{\lfloor \log \frac{n}{b} \rfloor + 1} - 1$ positions. This means that for every consecutive pair of suffixes in $\SSA$, if their LCP value is less than $\ell$, they will already be sorted correctly, whereas the other suffixes, with associated LCP values of $\ell$, will need to be further sorted in the second phase (Lines 8 to 13) of \textsc{Parameterized-Algo}.

What remains is to identify the suffixes that need to be further sorted,
sort these suffixes separately from the others, and re-insert them into the output arrays along with the corrected LCP values. We use two arrays $A'$ and $P$ for this purpose: $A'$ contains the suffixes; and $P$ tracks the positions in $\SSA$ that these suffixes are taken from, to ensure that they will later be re-inserted at the correct positions. In Line 5, we ensure that the right suffixes are tracked in these arrays, namely those that have an LCP value of $\ell$ with their predecessor or successor suffix.  
 If any such suffixes are found, we invoke \textsc{Main-Algo} again (Line 9), 
 but with just these suffixes (those in array $A'$) as input, and with the default value of $j_\text{start}=\lfloor \log n \rfloor$.  
 This means that the suffixes of $A'$ will now be 
 fully sorted rather than being sorted up to $\ell$ positions. Then, in Lines 10 and 11, we insert these re-sorted suffixes at the same positions that they were taken from before, but in the corrected order. In Lines 12 and 13, we also copy the associated LCP values, but only at the positions in-between two re-sorted suffixes, as all other LCP values were already correct. 
 
 We next state and prove Theorem~\ref{the:param}.
 
\begin{theorem}\label{the:param}
For any string $T\in \Sigma^n$ and any set $T_\mathcal{B}$ of $b$ 
suffixes of $T$, \textsc{Parameterized-Algo} computes
the $\SSA$ and $\SLCP$ of $T_\mathcal{B}$ in $\cO(n + (b'n/b) \log b)$ time using $8b + 4b' + o(b)$ machine words, where $b'$ is the total number of $i$ such that $\SSA[i] \in \mathcal{B}$ and $\SLCP[i]\geq \ell$ or $\SLCP[i+1] \geq \ell$, with $\ell=2^{\lfloor \log \frac{n}{b} \rfloor + 1} - 1$. The output is correct with high probability. When $b'=\cO(b/\log b)$, \textsc{Parameterized-Algo} runs in $\cO(n)$ time using $8b + o(b)$ machine words. 
\end{theorem}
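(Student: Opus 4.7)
My plan is to prove in turn: correctness with high probability, the stated time bound, and the stated space bound. The correctness hinges on a single structural observation about the first call. When \textsc{Main-Algo} is invoked with $j_\text{start}=\lfloor\log(n/b)\rfloor$, the refinement loop terminates while every LCP value in the hierarchy is at most $\ell=2^{\lfloor\log(n/b)\rfloor+1}-1$. Therefore every set of suffixes sharing a true common prefix of length at least $\ell$ is bundled into a single ``super-group'' whose internal order after the sort step is essentially arbitrary, whereas any pair of suffixes with true longest common prefix strictly less than $\ell$ is already correctly ordered (the argument for this part is identical to the correctness of \textsc{Main-Algo} and is inherited from Theorem~\ref{thm:algo}). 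Because a super-group is precisely the set of suffixes in $T_\mathcal{B}$ sharing a given length-$\ell$ prefix, it is also a contiguous block in the true $\SSA$; the intermediate and true $\SSA$ therefore differ only by permutations inside super-groups, and the super-groups themselves appear in the same order in both arrays.

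From this I derive correctness of the merge. A position $i$ is added to $P$ in Phase~2 iff $\SLCP[i]=\ell$ or $\SLCP[i+1]=\ell$ in the intermediate array, which is equivalent to $\SSA[i]$ belonging to a super-group of size at least $2$. This matches the definition of $b'$ in the theorem, so $|A'|=b'$. The second call returns in $\SSA'$ the $b'$ suffixes of $A'$ in their true lexicographic order. Since super-groups appear as the same contiguous blocks in both the intermediate and the true $\SSA$, writing $\SSA'[i]$ at position $P[i]$ yields the true $\SSA$. For the $\SLCP$ update I would observe that whenever $\SLCP[P[i]]=\ell$, the position $P[i]-1$ must itself be in $P$ (the ``or''-clause of Line~5 picks it up), and by monotonicity of $P$ we must have $P[i-1]=P[i]-1$; hence $\SSA'[i-1]$ and $\SSA'[i]$ are true $\SSA$-neighbors and $\SLCP'[i]$ is the correct value to write. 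Positions with $\SLCP[P[i]]<\ell$ already lie between two distinct super-groups, so the intermediate $\SLCP$ value is correct; positions outside $P$ are untouched and, again, already correct. The failure probability is bounded by a union over the two \textsc{Main-Algo} invocations.

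For the time bound, Phase~1 runs in $\cO(n)$ because the refinement loop performs only $\lfloor\log(n/b)\rfloor+1$ iterations: the KR-fingerprint work telescopes to $\cO(b)\cdot\cO(2^{j_\text{start}+1})=\cO(n)$ (the length fingerprinted in iteration $j$ is $\cO(2^j)\le n/s=n/b$, so Lemma~\ref{lem:KRF} gives $\cO(2^j)$ per fingerprint), and the remaining per-iteration cost totals $\cO(b\log(n/b))=\cO(n)$ via the elementary bound $b\log(n/b)\le n/\ln 2$; preprocessing and the in-place sort add another $\cO(n)$. Phase~2 and the merge are $\cO(b)$, and Phase~3 invokes Theorem~\ref{thm:algo} with $|A'|=b'$ and $s=b$ for a cost of $\cO(n+(b'n/b)\log b)$, which dominates. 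For space, Theorem~\ref{thm:algo} bounds Phase~1 by $s+7b=8b$ extra words, fully reclaimed at its return. During Phase~3 the auxiliary arrays $P,A',\SSA',\SLCP'$ cost $4b'$ words and \textsc{Main-Algo}'s internal footprint is $s+7b'=b+7b'$ words, giving a peak of $b+11b'\le 8b+4b'$ since $b'\le b$. Specialising to $b'=\cO(b/\log b)$ immediately yields running time $\cO(n)$ and space $8b+o(b)$. The step I expect to require the most care is the $\SLCP$ bookkeeping, i.e., verifying that the runs of consecutive positions in $P$ with intermediate $\SLCP=\ell$ align exactly with the super-groups so that each rewrite $\SLCP[P[i]]\gets\SLCP'[i]$ installs the true LCP of a genuine pair of $\SSA$-neighbors.
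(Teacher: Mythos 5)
Your proof is correct and follows essentially the same route as the paper: the paper establishes the correctness through four lemmas (that $\SSA_1$ is sorted up to length-$\ell$ prefixes, that $\SSA_1[i]$ and $\SSA_2[i]$ always share a length-$\ell$ prefix, and then sortedness of $\SSA_2$ and correctness of $\SLCP_2$), which are precisely the facts your ``super-group is a contiguous block preserved under the merge'' framing packages together, and your time/space accounting matches the paper's. Your observation that $\SLCP[P[i]]=\ell$ forces $P[i-1]=P[i]-1$ is exactly the step the paper uses implicitly in Lemma~\ref{lem:improved-4}, so the super-group framing is a modestly cleaner synthesis but not a genuinely different argument.
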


\begin{algorithm}[H]\footnotesize
    \caption{\textsc{Parameterized-Algo}}
     \label{alg:improved}
    \textbf{Input:} string $T \in \Sigma^n$,  integer $b$, and array $A$ of $b$ distinct integers from $[n]$\\ 
     \textbf{Output:} $\SSA$ and $\SLCP$
    \begin{algorithmic}[1]
        \State{$\SSA, \SLCP \gets \textsc{Main-Algo}(T, A, b, j_\text{start} = \lfloor \log \frac{n}{b} \rfloor)$} \Comment{Quasi-sort the $b$ suffixes} 
        \State{$\ell \gets 2^{\lfloor \log \frac{n}{b} \rfloor + 1} - 1$}
        \State{$P, A' \gets \text{empty arrays}$}
        \For{$i = 1,\ldots,b$} \Comment{Check which of the $b$ suffixes need further sorting}
                \If{$\SLCP[i] = \ell \lor ( i < b \land \SLCP[i + 1] = \ell)$} \Comment{Suffix $T[\SSA[i]\dd n]$ needs further sorting}
                    \State{$P.\text{append}(i)$} 
                    \State{$A'.\text{append}(\SSA[i])$} 
                \EndIf
        \EndFor
        \If{$|A'| > 0$}
            \State{$\SSA', \SLCP' \gets \textsc{Main-Algo}(T, A', |A'|, j_\text{start}=\lfloor \log n \rfloor)$} \Comment{Sort $|A'|$ out of $b$ suffixes}
            \For{$i = 1,\ldots,|A'|$} \Comment{Merge the partial arrays}
                \State{$\SSA[P[i]] \gets \SSA'[i]$}
                \If{$\SLCP[P[i]] = \ell$} \Comment{Only copy the $\SLCP$ value if the suffix at $P[i] - 1$ was also re-sorted}
                    \State{$\SLCP[P[i]] \gets \SLCP'[i]$}
                \EndIf
            \EndFor
        \EndIf
        \State{\Return{\SSA and \SLCP}}
    \end{algorithmic}
\end{algorithm}

\paragraph{Time Complexity.} The first phase of the algorithm (Line 1) runs in $\cO(\log \frac{n}{b})$ iterations. The longest prefixes whose KR fingerprints are computed have length $\cO(\frac{n}{b})$, and there are $\cO(b)$ KR fingerprints computed in each iteration. This means that computing the KR fingerprints during the first phase takes $\cO(b) \cdot (\cO(\frac{n}{b}) + \cO(\frac{n}{2b}) + \cO(\frac{n}{4b}) + \ldots) = \cO(n)$ time. Hashing the fingerprints takes $\cO(b \log \frac{n}{b}) = \cO(n)$ worst-case time in total with high probability. (Grouping the fingerprints via distribute-and-collect, like the algorithm by I et al.~\cite{DBLP:conf/stacs/IKK14}, would incur a multiplicative factor of $\log_s n$.) Sorting takes $\cO(n)$ time (see Lemma~\ref{lem:time}).
Therefore the entire first phase runs in $\cO(n)$ time. The second phase (Lines 8 to 13) computes KR fingerprints of longer prefixes as well and otherwise runs the same as \textsc{Main-Algo}, with the exception that only $b'$ suffixes are now sorted. By Lemma~\ref{lem:time}, for $s=b$, this takes $\cO(n + (b'n/b)\log b)$ time. All other operations run in single loops over arrays of size $b$ or $b'$ with constant-time operations, and thus take $\cO(b)$ time. Adding everything together gives $\cO(n + (b'n/b) \log b)$ time. When $b' = \cO(b/\log b)$, the running time becomes $\cO(n)$.

\paragraph{Space Complexity.} The first phase of the algorithm uses $s + 7b + o(b)$ machine words as per Lemma~\ref{lem:space-complexity}. 
The additional arrays $P$, $A'$, $\SSA'$ and $\SLCP'$ use $4b'$ machine words in total. The second invocation of \textsc{Main-Algo} uses $s+ 7b' + o(b')$ machine words as per Lemma~\ref{lem:space-complexity}. 
By setting $s=b$, the algorithm uses $8b + 4b' + o(b)$ machine words in total. 
If $b'=\cO(b/\log b)$, the algorithm uses $8b + o(b)$ machine words.

\paragraph{Correctness.} The correctness of $\SSA$ is proved by Lemma~\ref{lem:improved-3} and that of $\SLCP$ is proved by Lemma~\ref{lem:improved-4}. To prove these lemmas, we first show the auxiliary Lemmas~\ref{lem:improved-1} and~\ref{lem:improved-2}. 

\begin{lemma}
\label{lem:improved-1}
    Let $\SSA_1$ be the instance of $\SSA$ after the first invocation of \textsc{Main-Algo} (Line 1). The strings $T[\SSA_1[i] \dd n]$, $i\in[1,b]$, are sorted up to their prefix of length $\ell = 2^{\lfloor \log \frac{n}{b} \rfloor + 1} - 1$.
\end{lemma}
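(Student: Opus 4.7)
The plan is to combine a memory-access bound on the algorithm with the standard correctness analysis of \textsc{Main-Algo}. First I would show, by induction on $j$ descending from $j_\text{start}$ down to $0$, that at the start of iteration~$j$ every group $(i,k,L_i)\in B$ satisfies $k \le 2^{j_\text{start}+1}-2^{j+1}$. The base case ($j=j_\text{start}$) is trivial because the unique starting group has $k=0$, and the inductive step uses the fact that iteration $j+1$ can only raise a group's LCP by $2^{j+1}$. It follows immediately that every KR-fingerprint range $[A[l]+k,\,A[l]+k+2^j-1]$ queried on Line~11 is contained in $[A[l],\,A[l]+\ell-1]$; that is, the refinement phase inspects $T$ only within the first $\ell$ characters of each selected suffix.

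Next I would run the standard correctness argument of~\cite{DBLP:conf/stacs/IKK14} with ``actual LCP'' replaced everywhere by ``$\ell$-truncated LCP'', i.e.\ the length of the longest common prefix of the $\ell$-prefixes $T[A[l]\dd \min(A[l]+\ell-1,n)]$. Because the refinement phase only reads characters lying in these $\ell$-prefixes, two fingerprints computed on Line~11 agree if and only if the corresponding $\ell$-truncated LCP is at least $k+2^j$, so the inductive step of the original proof goes through unchanged. The base case also checks out: at the start of iteration $j_\text{start}$ the single group has $k=0$ and all $\ell$-truncated LCPs lie in $[0,\,\ell]=[0,\,2^{j_\text{start}+1}-1]$. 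Thus at the end of iteration~$0$, any two members of a group with LCP value $k<\ell$ have $\ell$-truncated LCP exactly $k$, so the in-place sort on Line~27 orders them by their first distinguishing character $T[A[l]+k]$. The DFS of Algorithm~\ref{alg:step2} then produces $\SSA_1$ sorted lexicographically with respect to the $\ell$-prefixes.

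The only subtlety -- and the main place where care is needed -- concerns groups whose final LCP value equals $\ell$: for these, Line~27 may inspect the character $T[A[l]+\ell]$, which lies outside the $\ell$-prefix and is therefore not controlled by the reduction. This obstacle is mild: every member of such a group shares the \emph{same} $\ell$-prefix, so whichever order the sort produces is consistent with the lemma's conclusion, since being ``sorted up to a prefix of length $\ell$'' allows arbitrary tie-breaking among suffixes that agree on their first $\ell$ characters.
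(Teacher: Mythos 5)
Your proof is correct and follows the same basic approach as the paper's: cap the group LCP values at $\ell$, argue that LCP values below $\ell$ are computed exactly (via the reduction to $\ell$-truncated LCPs), and conclude that the sort therefore orders suffixes correctly with respect to their $\ell$-prefixes. Your version is more rigorous than the paper's terse argument, in particular in the explicit induction showing that the fingerprint queries never read past the first $\ell$ characters of a suffix, and in the careful handling of the boundary case $k=\ell$ (where Line~27 may inspect $T[A[l]+\ell]$, outside the $\ell$-prefix) which the paper leaves implicit.
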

\begin{proof}
    In \textsc{Main-Algo}, all LCP values can be increased by powers of two in each iteration. With the starting value $j_\text{start} = \lfloor\log \frac{n}{b}\rfloor$,
    this adds up to a maximum LCP value of $\ell$ in any group. At any point during \textsc{Main-Algo}, two suffixes that are in the same group with LCP value $k$ share a longest common prefix of length at least $k$. Thus, this invocation of \textsc{Main-Algo} will compute the LCP values between suffixes correctly if they are at most $\ell$, and all other LCP values will be $\ell$. The sorting step takes into account only the letter which appears after the computed (longest) common prefix, so if the LCP between any two suffixes is less than $\ell$ the suffixes will be sorted correctly. The statement follows.
\end{proof}

\begin{lemma}
\label{lem:improved-2}
Let $\SSA_1$ be the instance of $\SSA$ after the first invocation of \textsc{Main-Algo} (Line 1), and let $\SSA_2$ be the instance of $\SSA$ returned at the end of \textsc{Parameterized-Algo} (Line 14). For every $i \in [1,b]$, either $\SSA_1[i]= \SSA_2[i]$ and $\SSA_1[i]$ and $\SSA_2[i]$ have a longest common prefix of length $n-\SSA_1[i]+1$,  
or $\SSA_1[i]\neq \SSA_2[i]$ and $\SSA_1[i]$ and $\SSA_2[i]$ have a longest common prefix of length at least $\ell$.
\end{lemma}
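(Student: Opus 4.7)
The plan is to show that the second invocation of \textsc{Main-Algo} only permutes suffixes inside ``blocks'' of $\SSA_1$ whose members already share a common prefix of length at least $\ell$. Call a maximal range $[p,q] \subseteq [1,b]$ with $\SLCP_1[i] = \ell$ for every $i \in [p+1,q]$ a \emph{block}. Using the inductive behavior of \textsc{Main-Algo} with $j_\text{start} = \lfloor \log \tfrac{n}{b} \rfloor$ together with Lemma~\ref{lem:improved-1}, I would first argue that $\SLCP_1[i]$ equals the true longest common prefix length of $T[\SSA_1[i-1] \dd n]$ and $T[\SSA_1[i] \dd n]$ when that value is below $\ell$, and equals $\ell$ otherwise; the key point is that the iterations over $j = j_\text{start}, \ldots, 0$ accumulate in the LCA group's LCP value the largest prefix of the binary expansion of the true LCP that fits in $[0,\ell]$. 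Combined with the standard inequality $\mathrm{LCP}(u,w) \geq \min\{\mathrm{LCP}(u,v), \mathrm{LCP}(v,w)\}$, this implies that all suffixes indexed by a single block pairwise share a prefix of length at least $\ell$, while any two suffixes indexed by distinct blocks have true LCP strictly less than $\ell$.

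Next I would identify precisely which positions \textsc{Parameterized-Algo} re-sorts. The condition on Line~5 puts $i$ into $P$ iff $\SLCP_1[i] = \ell$ or $\SLCP_1[i+1] = \ell$, i.e., iff $i$ lies in a block of size at least two. Hence $P$, read in the order it is appended, is the concatenation $B_1, B_2, \ldots, B_t$ of all non-singleton blocks in left-to-right order, and $A'$ contains the starting positions of the corresponding suffixes. Since $\SSA_1$ is sorted up to the first $\ell$ characters (Lemma~\ref{lem:improved-1}) and the LCP between suffixes of different blocks is strictly less than $\ell$, every suffix of $B_r$ is lexicographically smaller than every suffix of $B_{r+1}$. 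By Theorem~\ref{thm:algo}, the second call on Line~9 returns $A'$ in exact lexicographic order as $\SSA'$, with high probability. Consequently $\SSA'$ is the concatenation of the individually lex-sorted versions of $B_1, \ldots, B_t$, and when Lines~10--11 assign $\SSA_2[P[i]] \gets \SSA'[i]$, the first $|B_1|$ entries of $P$ are exactly the positions of $B_1$, the next $|B_2|$ entries are exactly the positions of $B_2$, and so on; hence every index in $P$ is reassigned a suffix drawn from its own block.

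The dichotomy in the lemma now follows immediately. If $\SSA_1[i] = \SSA_2[i]$, the two suffixes coincide and trivially share a prefix of length $n - \SSA_1[i] + 1$. If $\SSA_1[i] \neq \SSA_2[i]$, then $i \in P$ (positions outside $P$ are never overwritten in Lines~10--11), and both $\SSA_1[i]$ and $\SSA_2[i]$ belong to the same block, so they share that block's common prefix of length at least $\ell$. The main obstacle I anticipate is the bookkeeping that aligns the concatenation $B_1, \ldots, B_t$ inside $\SSA'$ with the in-order enumeration of $P$, together with the finer claim that $\SLCP_1[i] = \min(\text{true LCP}, \ell)$ through the group hierarchy; once those are pinned down, the rest reduces to Theorem~\ref{thm:algo}, Lemma~\ref{lem:improved-1}, and the transitivity of LCP.
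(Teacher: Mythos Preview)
Your proposal is correct and follows essentially the same approach as the paper: both argue that the second invocation of \textsc{Main-Algo} permutes suffixes only within their $\ell$-prefix equivalence classes, so $\SSA_1[i]$ and $\SSA_2[i]$ share a length-$\ell$ prefix whenever they differ. The paper reaches this more directly by observing that $A'$ (a subsequence of $\SSA_1$) and $\SSA'$ are both sorted by $\ell$-prefix and contain the same multiset, hence $A'[j]$ and $\SSA'[j]$ agree on their $\ell$-prefix for every $j$---this bypasses your explicit block decomposition and the auxiliary claim $\SLCP_1[i]=\min\{\text{true LCP},\ell\}$, which you rightly flag as needing justification beyond Lemma~\ref{lem:improved-1}.
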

\begin{proof}
    If $\SSA_1[i] = \SSA_2[i]$, this is trivial, so we only concern ourselves with the case $\SSA_1[i] \ne \SSA_2[i]$. 
    In this case, the value was overwritten in  Line 11, meaning that the suffix $\SSA_1[i]$ was stored in $A'$ in Line 7 to be re-sorted in the second invocation of \textsc{Main-Algo}. The same must hold for $\SSA_2[i]$.

    Consider the array $A'$ as it is built in Lines 4-7. By Lemma~\ref{lem:improved-1}, the suffixes of $\SSA_1$ are sorted up to their length-$\ell$ prefix; since the entries of $A'$ appear in the same order as they appear in $\SSA_1$, this must also be the case for $A'$. Because the suffixes of $A'$ are already sorted correctly up to their length-$\ell$ prefix, it must be that for every position $j \in [1,b']$, $A'[j]$ and $\SSA'[j]$ have the same length-$\ell$ prefix. Now note that if $\SSA_1[i]$ appears in position $j$ in $A'$, then $\SSA_2[i]$ will take the value from $\SSA'[j]$. Since $A'[j]$ and $\SSA'[j]$ have a length-$\ell$ common prefix, $\SSA_1[i]$ and $\SSA_2[i]$ must as well.
\end{proof}

\begin{lemma}
\label{lem:improved-3} 
    The instance $\SSA_2$ of $\SSA$ returned at the end of \textsc{Parameterized-Algo} (Line 14), 
    contains the suffixes of $A$ sorted lexicographically.
\end{lemma}
\begin{proof}
    We prove this by showing that for any two consecutive positions $i$ and $i + 1$, $\SSA_2[i]$ and $\SSA_2[i + 1]$ appear in the right order. 
    Let $\SSA_1$ be the instance of $\SSA$ after the first invocation of \textsc{Main-Algo}.

    We already know that $\SSA_1$ is sorted correctly up to $\ell$ positions. This means that for any $i$, if the longest common prefix of $\SSA_1[i]$ and $\SSA_1[i + 1]$ is shorter than $\ell$, they already appear in the correct order in this array. If neither suffix is overwritten after the second phase, this is also trivially the case for them in $\SSA_2$.

    Now suppose that exactly one of the two (wlog $\SSA_1[i + 1]$) is replaced by some other suffix $s$ while the other remains the same. Let $k$ be the LCP of $\SSA_1[i]$ and $\SSA_1[i + 1]$. By Lemma~\ref{lem:improved-2}, $\SSA_1[i + 1]$ and $s$ have a longest common prefix of length at least $\ell$. This is longer than $k$, which is strictly less than $\ell$. This means that the $(k + 1)$-th letter of $s$ is the same as that of $\SSA_1[i + 1]$, which is the first position in which it differs from $\SSA_1[i]$. Thus $\SSA_2[i] = \SSA_1[i]$ and $\SSA_2[i + 1] = s$ are sorted correctly relative to one another.

    The remaining case is when $\SSA_1[i]$ and $\SSA_1[i + 1]$ have a longest common prefix of length $\ell$ or longer. In this case, both suffixes are added to $A'$ to be re-sorted in the second invocation, and both $\SSA_2[i]$ and $\SSA_2[i + 1]$ may take the value of another suffix. The second invocation of the main algorithm sorts all suffixes in $A'$ completely, returning $\SSA'$. The suffixes in $\SSA'$ are then re-inserted into $\SSA_2$, in which they will appear in the same order as they did in $\SSA'$. Therefore, no matter which suffixes end up at $\SSA_2[i]$ and $\SSA_2[i + 1]$, they also appeared consecutively in $\SSA'$ and therefore must be sorted correctly.    
\end{proof}

\begin{lemma}
\label{lem:improved-4}
    For any two consecutive positions $i$ and $i + 1$, $\SLCP[i + 1]$, as returned by Algorithm~\ref{alg:improved}, gives the length of the longest common prefix of $\SSA[i]$ and $\SSA[i + 1]$.
\end{lemma}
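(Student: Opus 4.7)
The plan is to fix an index $i \in [1, b-1]$ and show that the final $\SLCP[i+1]$ equals the true length of the longest common prefix of the final $\SSA[i]$ and $\SSA[i+1]$. Let $\SSA_1, \SLCP_1$ denote the arrays after the first call to \textsc{Main-Algo} on Line~1, and let $\SSA_2, \SLCP_2$ denote the final arrays returned on Line~14. By Lemma~\ref{lem:improved-1} and its proof, $\SLCP_1[i+1]$ equals the true LCP of $\SSA_1[i]$ and $\SSA_1[i+1]$ whenever this LCP is less than $\ell$, and equals $\ell$ otherwise. The natural case split is $\SLCP_1[i+1] = \ell$ versus $\SLCP_1[i+1] < \ell$.

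In the first case, both $\SSA_1[i]$ and $\SSA_1[i+1]$ are appended to $A'$ by the condition on Line~5, and they land at consecutive positions $j$, $j+1$ in $A'$ (and in $P$) because the loop on Line~4 visits $\SSA_1$-positions in increasing order. By Theorem~\ref{thm:algo} applied to the second invocation of \textsc{Main-Algo} (which uses the default $j_\text{start}$ and is therefore fully correct), $\SSA'[j]$ and $\SSA'[j+1]$ are consecutive in the sorted order of the suffixes of $A'$, and $\SLCP'[j+1]$ stores their true LCP. The merge loop copies $\SSA'[j]$ into $\SSA_2[i]$ and $\SSA'[j+1]$ into $\SSA_2[i+1]$; since $\SLCP_2[i+1]$ still equals $\SLCP_1[i+1] = \ell$ at the time of the check on Line~12, the guard is satisfied and $\SLCP_2[i+1]$ is correctly overwritten with $\SLCP'[j+1]$.

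In the second case, the true LCP of $\SSA_1[i]$ and $\SSA_1[i+1]$ equals $\SLCP_1[i+1] < \ell$. I will argue that (a) the true LCP of $\SSA_2[i]$ and $\SSA_2[i+1]$ also equals $\SLCP_1[i+1]$, and (b) the merge leaves $\SLCP_2[i+1]$ unchanged. For (a), Lemma~\ref{lem:improved-2} guarantees that $\SSA_2[i]$ and $\SSA_1[i]$ share a prefix of length at least $\ell$, and similarly for the pair $(\SSA_2[i+1], \SSA_1[i+1])$; since $\SLCP_1[i+1] + 1 \le \ell$, the letter at position $\SLCP_1[i+1] + 1$ of each suffix is preserved, so $\SSA_2[i]$ and $\SSA_2[i+1]$ agree on their first $\SLCP_1[i+1]$ letters and differ at the next position. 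For (b), the guard on Line~12 requires $\SLCP_2[i+1] = \ell$, but at that point $\SLCP_2[i+1] = \SLCP_1[i+1] < \ell$, so the guard fails regardless of whether $i+1$ lies in the image of $P$.

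The most delicate point, and the one that motivates the guard on Line~12, is the subcase of the second case in which both $\SSA_1[i]$ and $\SSA_1[i+1]$ still end up in $A'$ because of LCPs of $\ell$ with their \emph{other} neighbours. In that subcase the second invocation happily produces some $\SLCP'[j+1]$ for the consecutive pair (which, by the argument in (a), actually coincides with $\SLCP_1[i+1]$), but we must not blindly copy it back, because doing so would be fine here yet wrong in general if one side were missing. The guard $\SLCP[P[i]] = \ell$ is exactly what prevents an incorrect overwrite, and verifying that it fires in precisely the right cases is where the proof requires the most care; the case analysis above shows that it does.
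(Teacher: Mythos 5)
Your proof is correct and follows essentially the same route as the paper's: split on whether $\SLCP_1[i+1]$ equals $\ell$ or is strictly smaller, argue that in the first case the merge correctly overwrites $\SLCP_2[i+1]$ with $\SLCP'[j+1]$, and that in the second case the guard on Line~12 blocks any overwrite. You are actually slightly more careful than the paper in the $<\ell$ case: the paper tacitly treats the pair $\SSA_1[i],\SSA_1[i+1]$ as unchanged, whereas you correctly note that one of them could be replaced (because of an LCP of $\ell$ with its \emph{other} neighbour) and invoke Lemma~\ref{lem:improved-2} to show that the replacement preserves the first $\ell$ letters and hence the LCP value; this closes a small gap that the paper leaves implicit.
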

\begin{proof}
    Let $\SSA_1$ and $\SLCP_1$ be the arrays returned by the first invocation of \textsc{Main-Algo}, and $\SSA_2$ and $\SLCP_2$ the arrays produced at the end. By Lemma~\ref{lem:improved-1}, if $\SLCP_1[i + 1] < \ell$, this value is correct. Therefore, the only values that need to be overwritten for $\SLCP_2$ are when $\SLCP_1[i + 1] = \ell$. The check at Line 12 ensures this. Of course, when $\SLCP_1[i + 1] = \ell$, then both $\SSA_1[i]$ and $\SSA_1[i + 1]$ are added to $A'$ in order to be re-sorted in the second invocation. The values at $\SSA_2[i]$ and $\SSA_2[i + 1]$ are then replaced by two suffixes that appear consecutively in $\SSA'$, say $\SSA'[j]$ and $\SSA'[j + 1]$. By the correctness of \textsc{Main-Algo}, the LCP value of these two suffixes is given by $\SLCP'[j + 1]$, which is the value that $\SLCP_2[j + 1]$ takes.
\end{proof}

\paragraph{Random Strings.} Finally, we show that \textsc{Parameterized-Algo}
    can be trivially amended to work in $\cO(n)$ time for any string chosen uniformly at random from $\Sigma^n$.

\begin{theorem}\label{the:random}
For any string $T$ chosen uniformly at random from $\Sigma^n$ and any set $T_\mathcal{B}$ of $b$ suffixes of $T$, $\SSA$ and $\SLCP$ of $T_\mathcal{B}$ can be computed in $\cO(n)$ time using $\cO(b)$ space. The output is correct with high probability. 
\end{theorem}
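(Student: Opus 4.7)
The plan is to bound, with high probability, the parameter $b'$ of Theorem~\ref{the:param} for a uniformly random $T \in \Sigma^n$ and then invoke that theorem. First, I would establish the standard tail bound on pairwise LCPs in a random string: for any fixed positions $1 \le i < j \le n$,
\[
\Pr\bigl[\LCP(T[i \dd n], T[j \dd n]) \ge \ell\bigr] \le |\Sigma|^{-\ell}.
\]
When $j - i \ge \ell$ this is immediate because the two length-$\ell$ substrings are disjoint and hence independent. When $j - i < \ell$ the event forces $T$ to have period $j - i$ on a length-$(\ell + j - i)$ window, so only $j - i$ of those $\ell + j - i$ letters are free, and the same bound follows by direct counting.

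Next, I would describe the small amendment: inside \textsc{Parameterized-Algo}, raise the internal parameter so that the first call to \textsc{Main-Algo} uses $j_\text{start} := \max\bigl(\lfloor \log(n/b)\rfloor,\ \lceil \log ((c+2)\log_{|\Sigma|} n) \rceil\bigr)$, which raises the threshold to $\ell' := 2^{j_\text{start}+1}-1 \ge (c+2)\log_{|\Sigma|} n$. A union bound over the $\binom{b}{2}$ pairs in $T_\mathcal{B}$ gives $\mathbb{E}[b'] \le b^2 |\Sigma|^{-\ell'} \le n^{-c}$, so by Markov's inequality $b' = 0$ with probability at least $1 - n^{-c}$. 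Conditioned on this event, the second call to \textsc{Main-Algo} inside \textsc{Parameterized-Algo} is vacuous and correctness is inherited from Lemmas~\ref{lem:improved-1}--\ref{lem:improved-4}.

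It remains to verify that the first call to \textsc{Main-Algo} under the amended $\ell'$ runs in $\cO(n)$ time within $\cO(b)$ extra machine words. When $b = \cO(n/\log n)$ the amendment does not raise $j_\text{start}$ above its default value, so Theorem~\ref{the:param} applies verbatim. When $b = \Omega(n/\log n)$ the amendment increases $j_\text{start}$ by at most $\cO(\log \log n)$ and each extra KR fingerprint, with $s = b$, costs $\cO(n/b)$; conditional on $b' = 0$, however, all groups have collapsed into singletons once prefixes of length $\Theta(\log_{|\Sigma|} n)$ have been examined, so only $\cO(1)$ members of the group list $B$ survive into these extra iterations. Combining this with the $\cO(n)$ preprocessing, hashing, sorting, and DFS costs already established in Lemmas~\ref{lem:time} and~\ref{lem:space-complexity} yields the claimed bound.

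The hardest step will be this last amortisation in the dense regime: a na\"ive analysis of the extra iterations gives $\cO(n \log \log n)$ time, and shaving the $\log \log n$ factor requires re-using the pairwise tail bound to quantify how quickly the active group structure collapses on a random $T$---essentially, by the time fingerprints of length $\Theta(\log_{|\Sigma|} n)$ are being computed, with high probability every pair of sampled suffixes has already been separated, so the extra iterations iterate over $\cO(1)$ items in total rather than $\cO(b)$ per iteration.
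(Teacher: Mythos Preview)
Your sparse-case argument ($b = \cO(n/\log n)$) is sound and in fact slightly cleaner than the paper's: you derive the pairwise LCP tail bound directly rather than citing Bollob\'as--Letzter, and you keep $j_{\text{start}} = \lfloor\log(n/b)\rfloor$ so that Theorem~\ref{the:param} applies verbatim once $b'=0$ with high probability.

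The gap is in the dense regime $b = \Omega(n/\log n)$. Your amortisation claim---that ``only $\cO(1)$ members of the group list $B$ survive into these extra iterations''---misreads how \textsc{Main-Algo} evolves. The outer loop runs $j$ \emph{downward} from $j_{\text{start}}$, so the very first iteration is the one computing fingerprints of length $2^{j_{\text{start}}} = \Theta(\log_{|\Sigma|} n)$; nothing has been ``separated'' before it. More importantly, when all $b$ fingerprints in an iteration are distinct, case~3 (Lines~23--24) simply re-appends each suffix to its original group. The root group therefore continues to hold all $b$ suffixes, and every subsequent iteration again touches all $\Theta(b)$ members. There is no mechanism in \textsc{Main-Algo} by which a distinct fingerprint causes a member to be dropped from future processing; doing so would break the invariant that the true LCP lies in $[k, k+2^j-1]$. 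Consequently the $\Theta(\log\log n)$ ``extra'' iterations each cost $\Theta(b)\cdot\Theta(n/b)=\Theta(n)$ for fingerprints and $\Theta(b)$ for hashing, giving $\Theta(n\log\log n)$---exactly the barrier you identified but did not overcome.

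The paper does not try to push \textsc{Main-Algo} through this regime at all. Instead it switches technique entirely: when $b\log n \ge n$, it sorts the $b$ suffixes by their length-$\cO(\log_\sigma n)$ prefixes using in-place radix sort (Franceschini et al.), which runs in $\cO(b)$ time because $\log n/\log s = \cO(1)$ in this regime. That detour is what makes the $\cO(n)$ bound go through, and your proposal is missing an analogous device.
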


\begin{proof}
    We assume $|\Sigma|\geq 2$, otherwise the problem has a trivial solution.
    Bollob{\'{a}}s and Letzter~\cite[Theorem 4]{DBLP:journals/ejc/BollobasL18} showed that the maximum length of an LCE on $T$ is at most $2\log_{|\Sigma|}n+\log_{|\Sigma|}\log_{|\Sigma|}n$ with high probability. We bound this from above by $3\log n$ and amend \textsc{Parameterized-Algo} as follows:

\begin{description}
    \item[Case (a): $b \log n < n$.] We invoke \textsc{Main-Algo} by setting $j_\text{start}$ to the smallest integer such that $2^{ j_\text{start}} \ge 2\lfloor \log n \rfloor$, which gives $\ell=2\lfloor \log n\rfloor\cdot 2-1=4\lfloor\log n\rfloor-1$. 
    After the $\cO(n)$-time preprocessing of Lemma~\ref{lem:KRF}, computing the KR fingerprints takes $\cO(b) \cdot 4(\cO(\frac{\log n}{1}) + \cO(\frac{\log n}{2}) + \cO(\frac{\log n}{4}) + \ldots) = \cO(b \log n)$ time. 
    Hashing the fingerprints takes $\cO(b)$ time per iteration with high probability, and so $\cO(b \log n)$ total time. Merge sort takes $\cO(b\log b)$ time.
    Since $\ell>3\log n$, all suffixes of $T_\mathcal{B}$ will be fully sorted from the first invocation of \textsc{Main-Algo}. If $b'=\cO(b/\log b)$ suffixes are still unsorted after the first invocation, these will be fully sorted in the second invocation of \textsc{Main-Algo} in $\cO(n)$ time (Theorem~\ref{the:param}). If $b'=\omega(b/\log b)$, we output incorrect arrays. The total time complexity is $\cO(n+b\log n)=\cO(n)$.
   \item[Case (b): $b \log n \geq n$.] Assume that we have $\cO(s)$ space to sort the $b$ suffixes; we can do it efficiently using radix sort because it suffices to sort all prefixes of them of length $\cO(\log_{\sigma} n)$ by the Bollobas and Letzter's result, where $\sigma=|\Sigma|$ (otherwise, we output incorrect arrays).
    The $b$ prefixes are each of length at most $c \log_{\sigma} n$, for some $c=\cO(1)$; so radix sort takes 
    $\cO((b + s) (c \cdot \log n/\log \sigma) \cdot (\log \sigma/\log s))$ time, because we have at most $(c \log n/\log \sigma)$ letters in every prefix, and each time we sort $b$ letters, one from each prefix, we use $(\log \sigma/\log s)$ rounds of counting sort. 
    Conveniently, the $\log \sigma$ terms cancel out. Then, because we set $s = b$, and by the fact that we are in the case $b \geq n/ \log n$, we have that $\log n/\log s = \cO(1)$. The total time complexity is thus $\cO(b + s) = \cO(b)$. The total space used is $\cO(s)=\cO(b)$.
    By comparing adjacent suffixes we compute the $\SLCP$ array within the same complexities.
\end{description}
    
\end{proof}

\section{Experiments}\label{sec:experiments}

\paragraph{Implementations.} We have used the following \texttt{C++} implementations that we make all available at \url{https://github.com/lorrainea/SSA} under GPL-3.0 license. Although these implementations are somewhat simplified for practical efficiency purposes, they are quite close to their original theoretical descriptions. 
\begin{enumerate}
\item 
A simplified implementation of the algorithm of Prezza~\cite{DBLP:journals/talg/Prezza21}, written by the author.\footnote{\url{https://github.com/nicolaprezza/rk-lce/blob/master/sa-rk.cpp}} 
It first constructs an LCE data structure in $\cO(n)$ time in place: using exactly the same space as $T$ (plus $\cO(1)$ machine words). Since the structure supports $\cO(\log n)$-time LCE queries, the implementation uses LCE queries and \texttt{std::sort} to construct the \SSA. We have amended this implementation to also compute the \SLCP in $\cO(b\log n)$ extra time using the \SSA and LCE queries. We denote this implementation by \textsc{SSA-LCE}.
Note that, for example, the use of \texttt{std::sort} does not guarantee that sorting the suffixes is done in-place in practice.
\item A simplified implementation of \textsc{Main-Algo}, written by us.
It is simplified in the sense that for hashing we use
the \texttt{unordered\_dense} class.\footnote{\url{https://github.com/martinus/unordered_dense}} Also if the number of KR fingerprints to be grouped is smaller than a predefined constant (set to 1.5M by default), then we resort to \texttt{std::sort} to achieve the same goal.
Furthermore, we have implemented array $A$ and set $B$ by means of \texttt{std::vector}, and thus we neither had full control of the exact number of machine words used per vector, nor did we know when these became available upon memory deallocation.
For KR fingerprints, we have used the class by Kempa.\footnote{\url{https://github.com/dominikkempa/lz77-to-slp/blob/main/src/karp_rabin_hashing.cpp}}
We denote this implementation by \textsc{MA}.
\item An implementation produced by us of \textsc{Parameterized-Algo} using the above simplified implementation of \textsc{Main-Algo}. We denote this implementation by \textsc{PA}.
\end{enumerate}

\begin{figure}[!ht]
     \centering
        \includegraphics[width=0.5\textwidth]{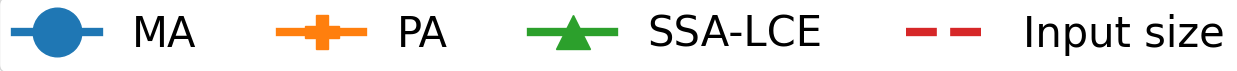}
        
        \subfloat[][\centering Time for varying $n$\label{time:small:fastq:n}] 
        {\includegraphics[width=0.23\textwidth]{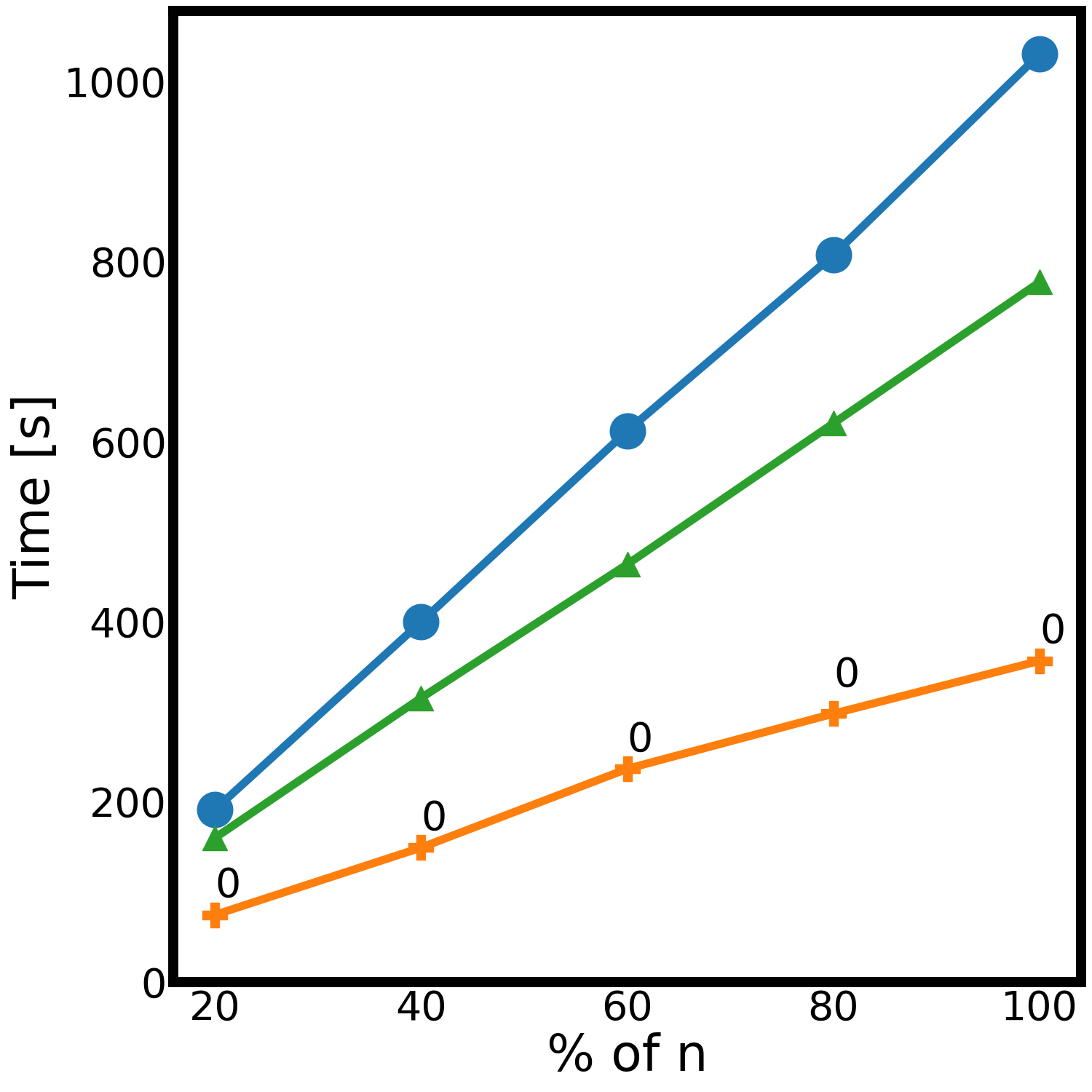}}\hspace{1mm}
        \subfloat[][\centering Time for varying $b$\label{time:small:fastq:b}]{\includegraphics[width=0.23\textwidth]{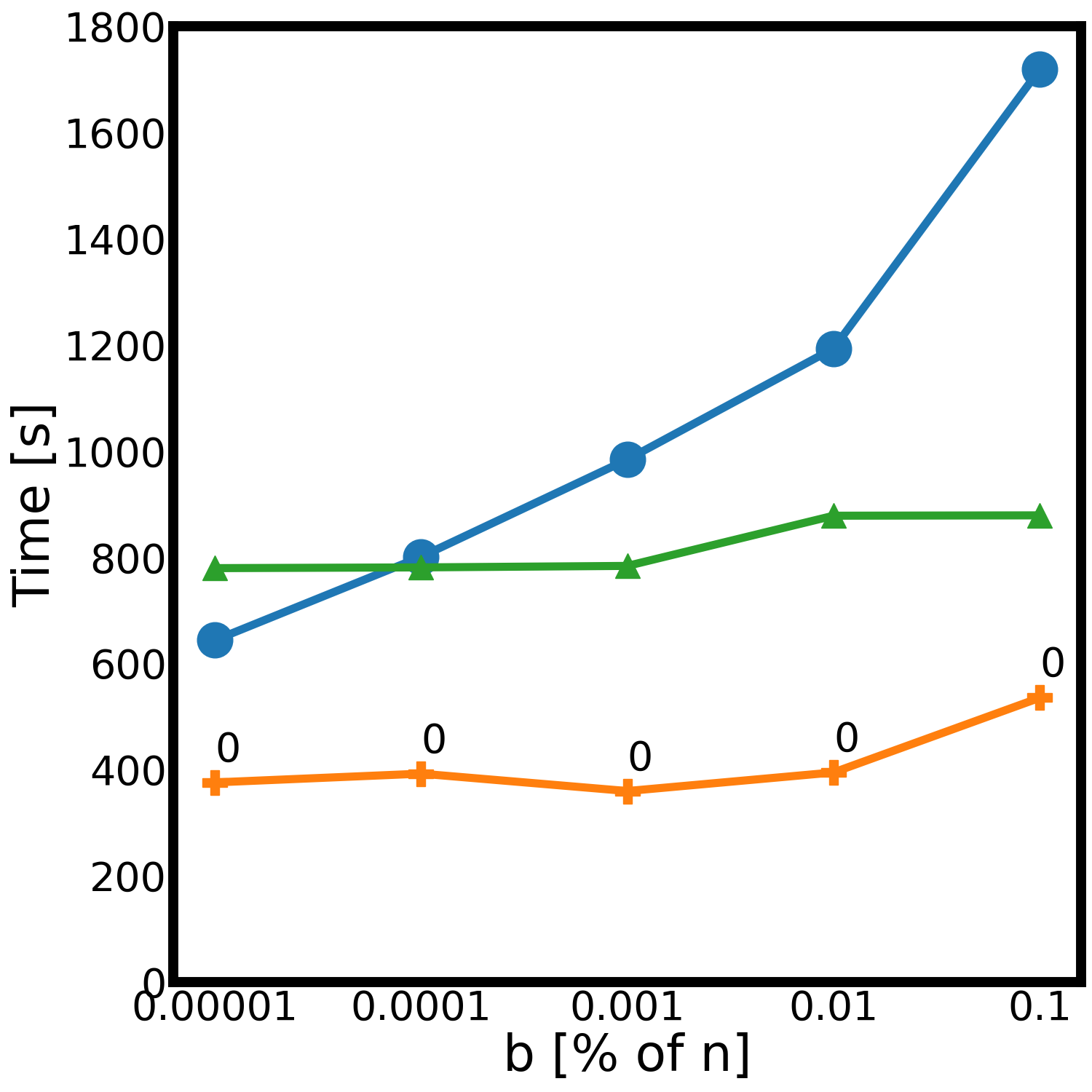}}\hspace{1mm}
        \subfloat[][\centering Memory for varying $n$ \label{mem:small:fastq:n}]{\includegraphics[width=0.23\textwidth]{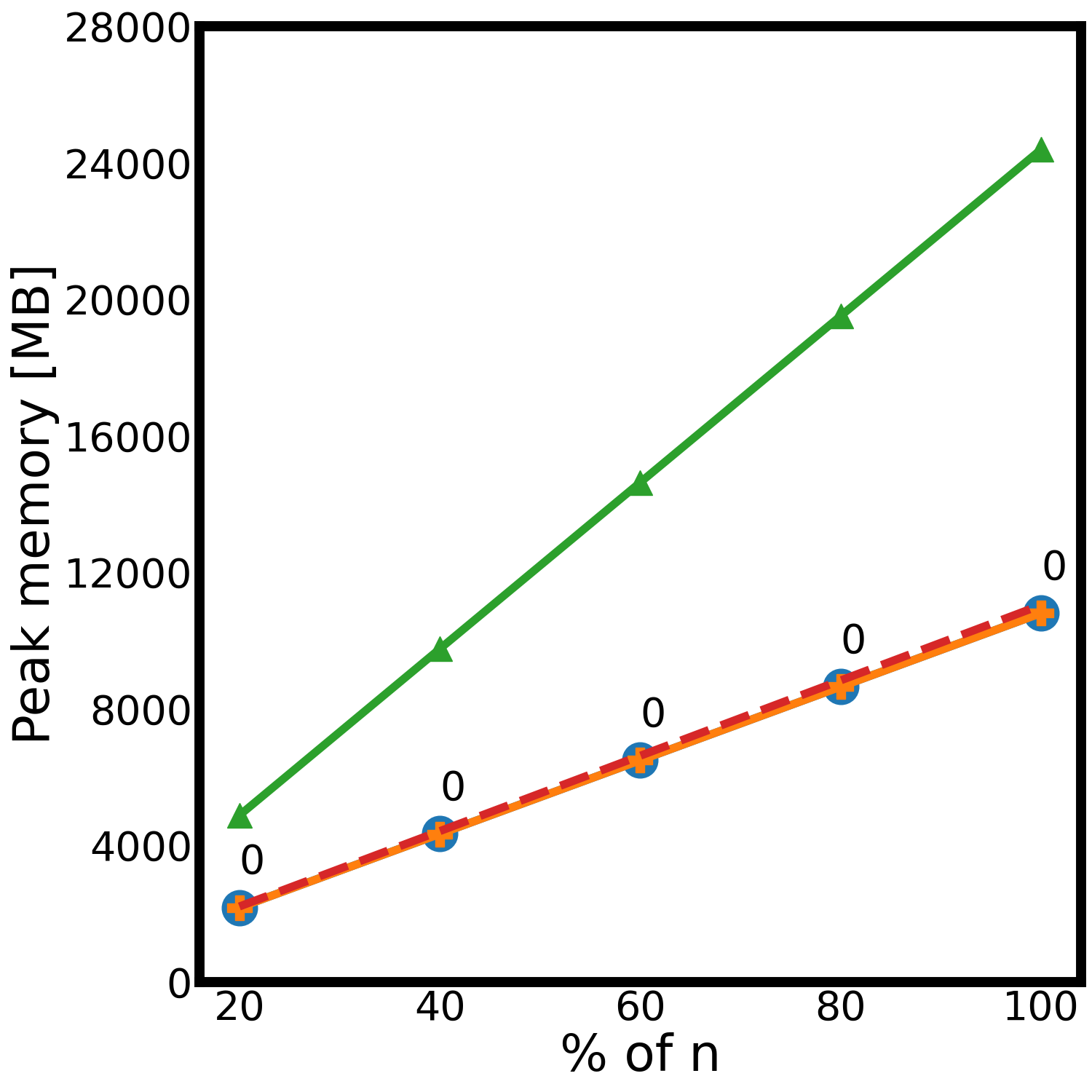}}\hspace{1mm}
        \subfloat[][\centering Memory for varying $b$ \label{mem:small:fastq:b}]{\includegraphics[width=0.23\textwidth]{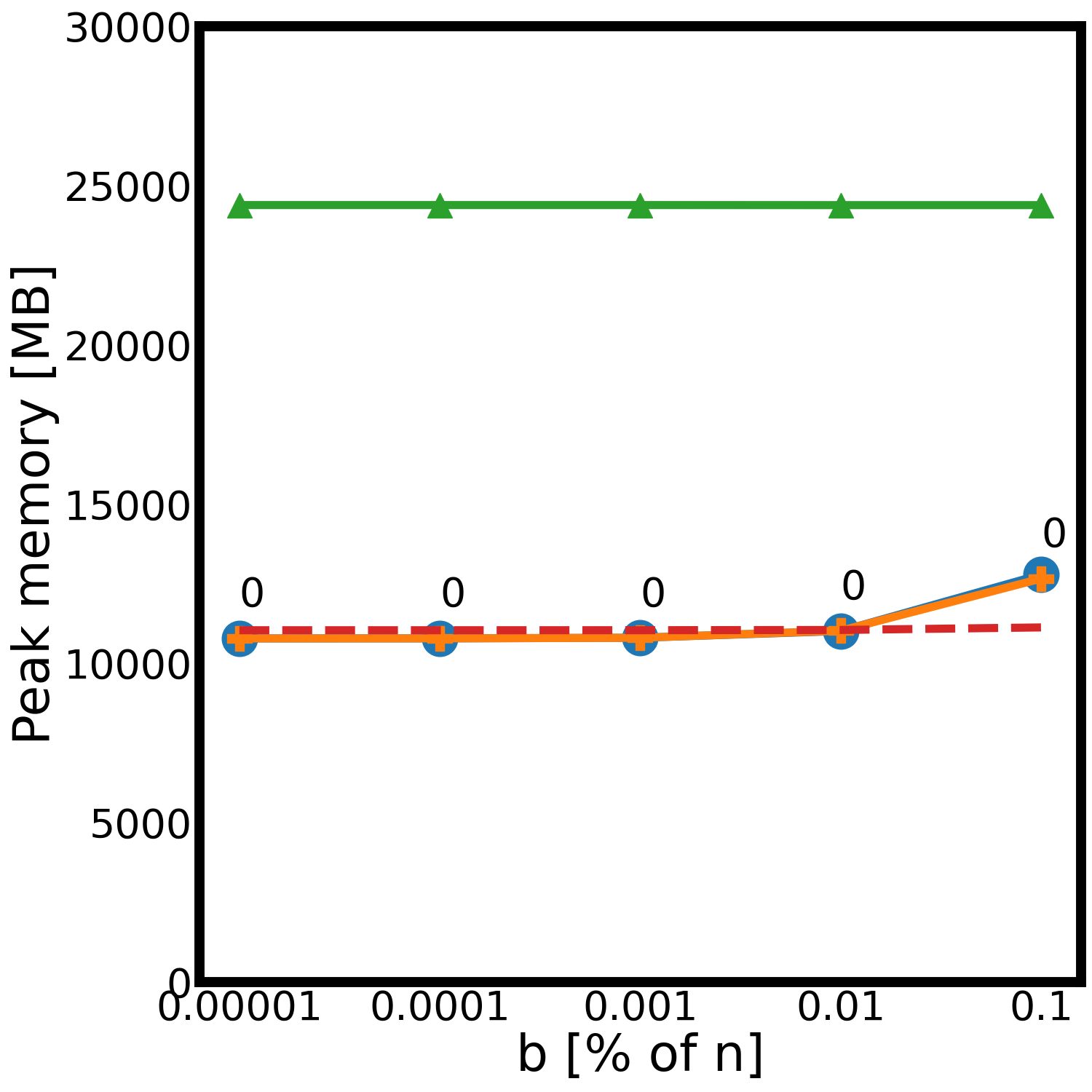}}

     \centering 
        \subfloat[][\centering Time for varying $n$\label{time:small:amazon:n}]{\includegraphics[width=0.23\textwidth]{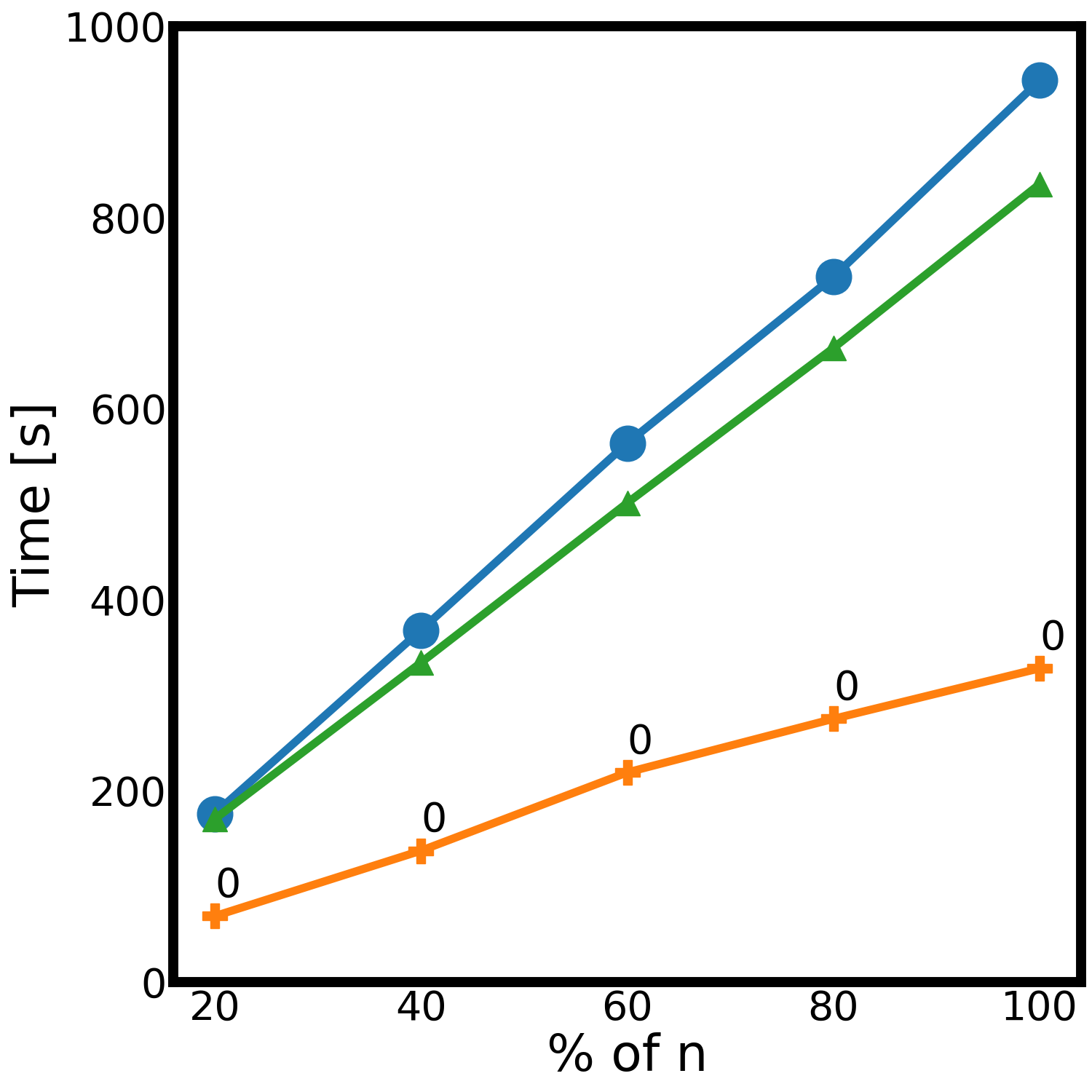}}\hspace{1mm}
        \subfloat[][\centering Time for varying $b$\label{time:small:amazon:b}]{\includegraphics[width=0.23\textwidth]{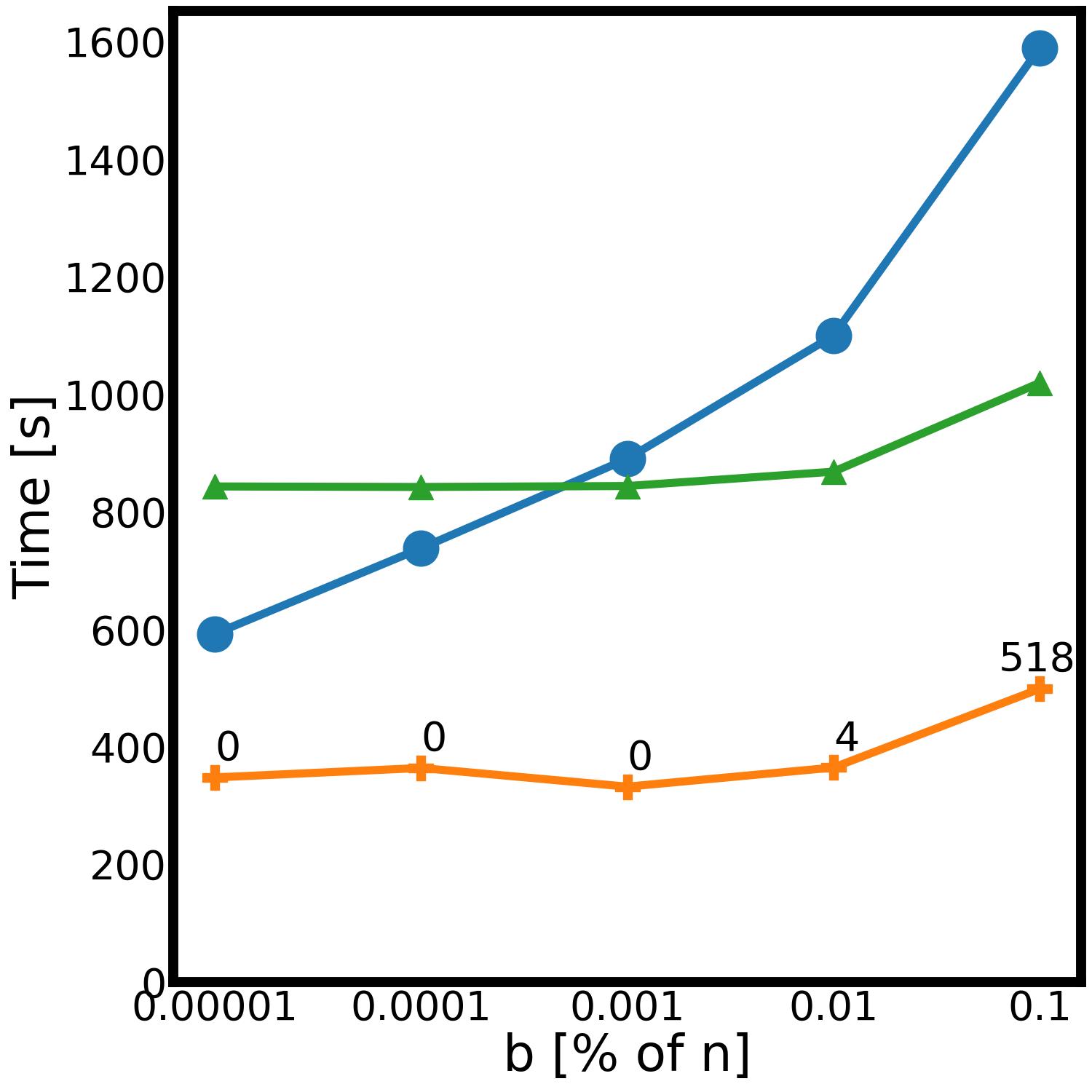}}\hspace{1mm}
        \subfloat[][\centering Memory for varying $n$ \label{mem:small:amazon:n}]{\includegraphics[width=0.23\textwidth]{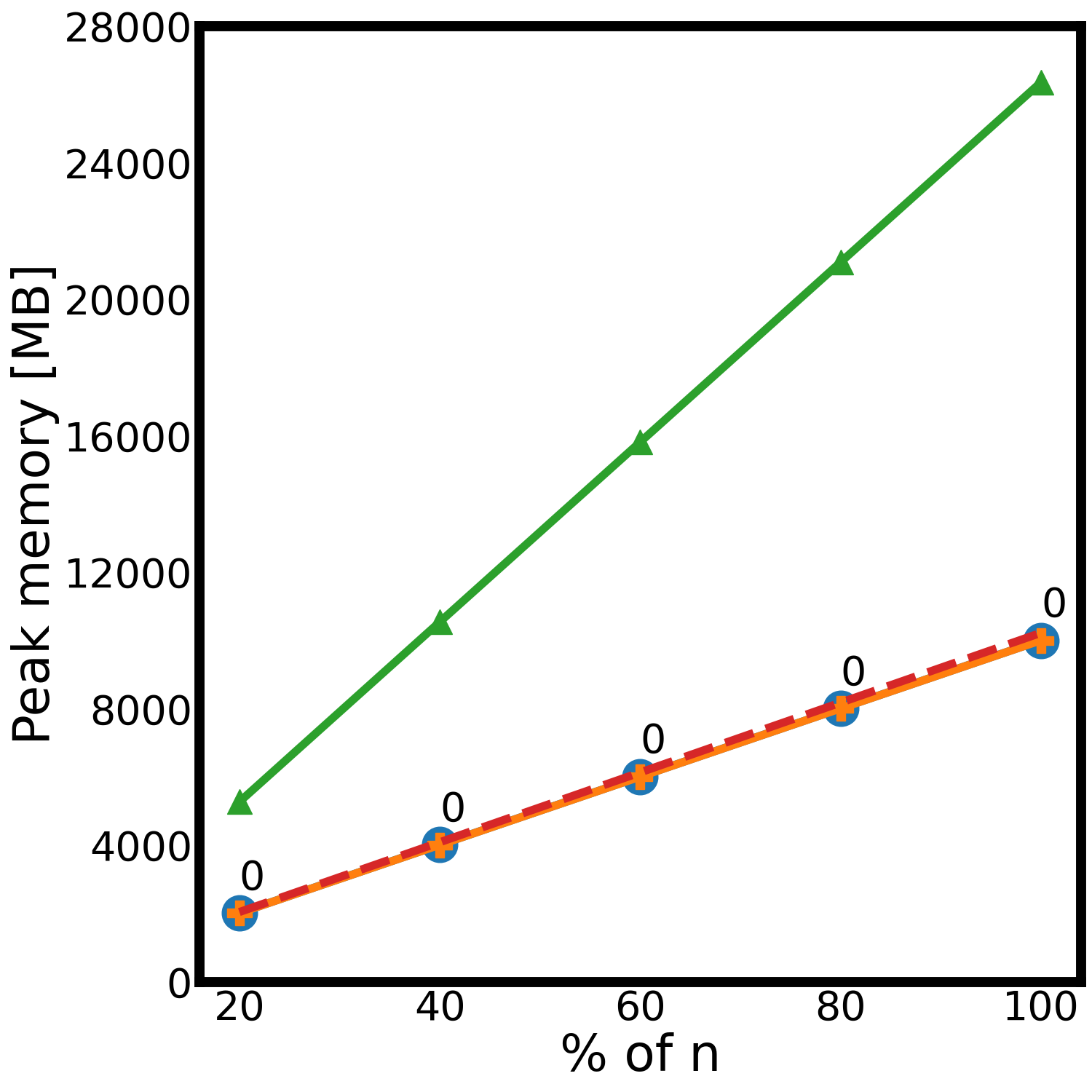}}\hspace{1mm}
        \subfloat[][\centering Memory for varying $b$ \label{mem:small:amazon:b}]{\includegraphics[width=0.23\textwidth]{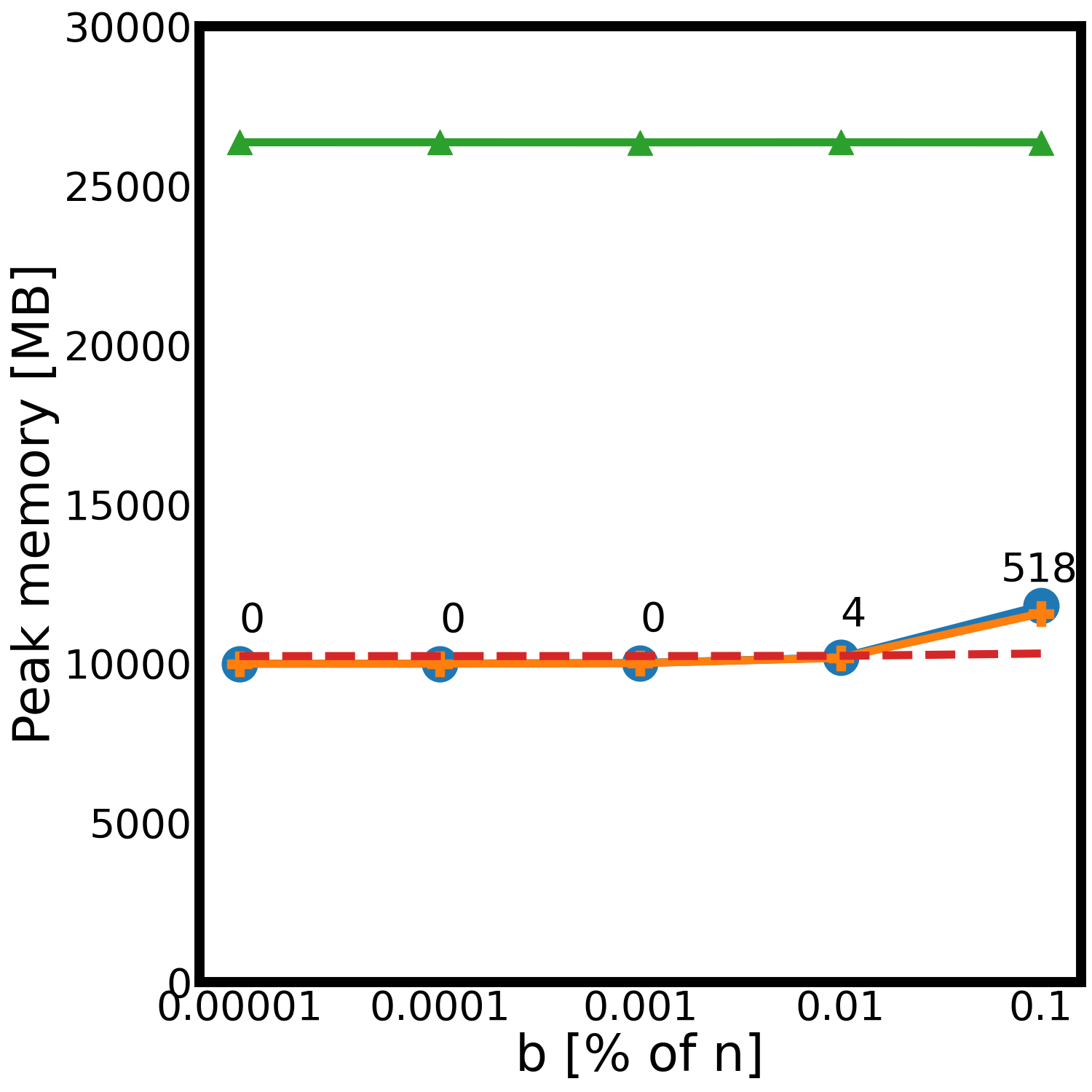}}

      \centering 
        \subfloat[][\centering Time for varying $n$\label{time:small:random:n}]{\includegraphics[width=0.23\textwidth]{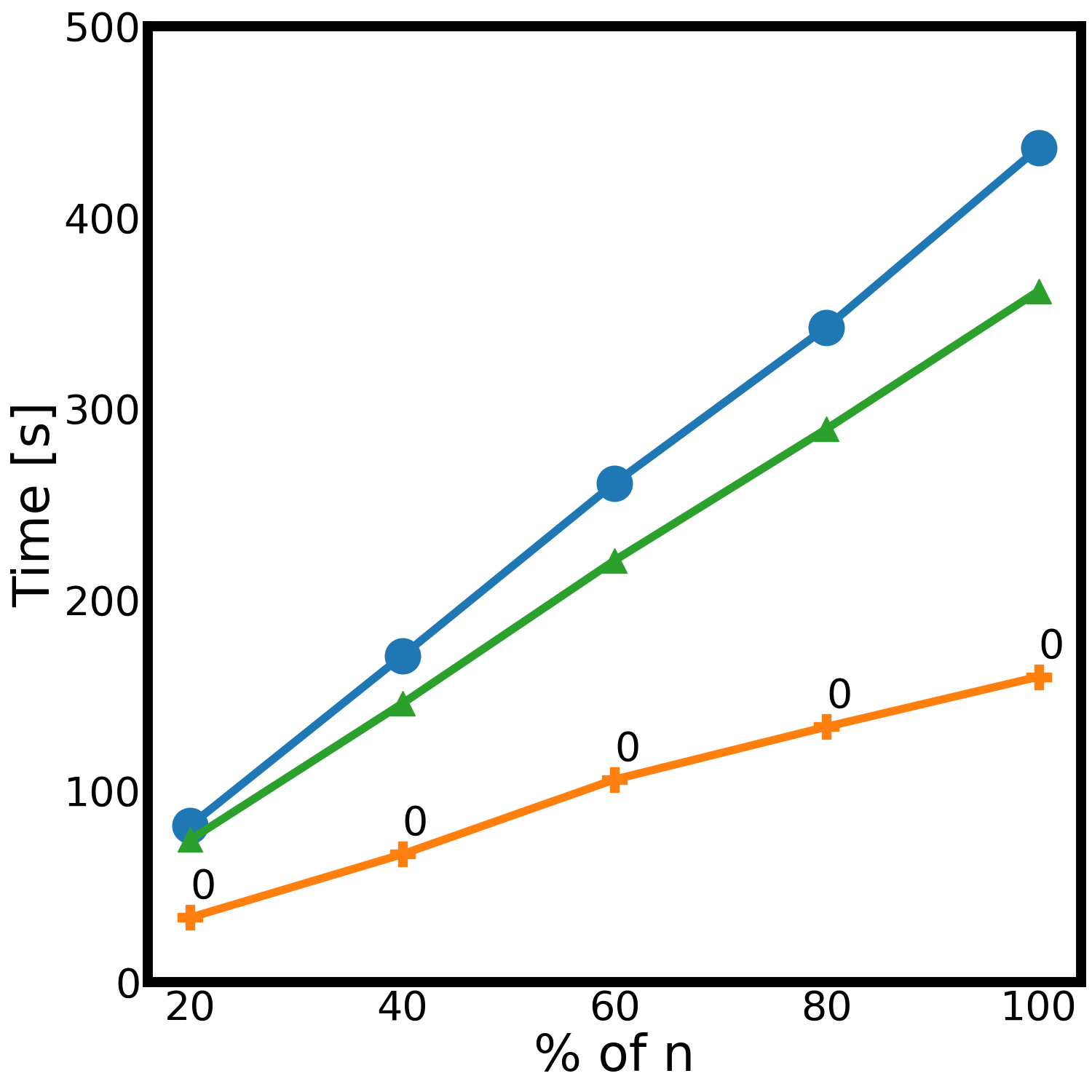}}\hspace{1mm}
        \subfloat[][\centering Time for varying $b$\label{time:small:random:b}]{\includegraphics[width=0.23\textwidth]{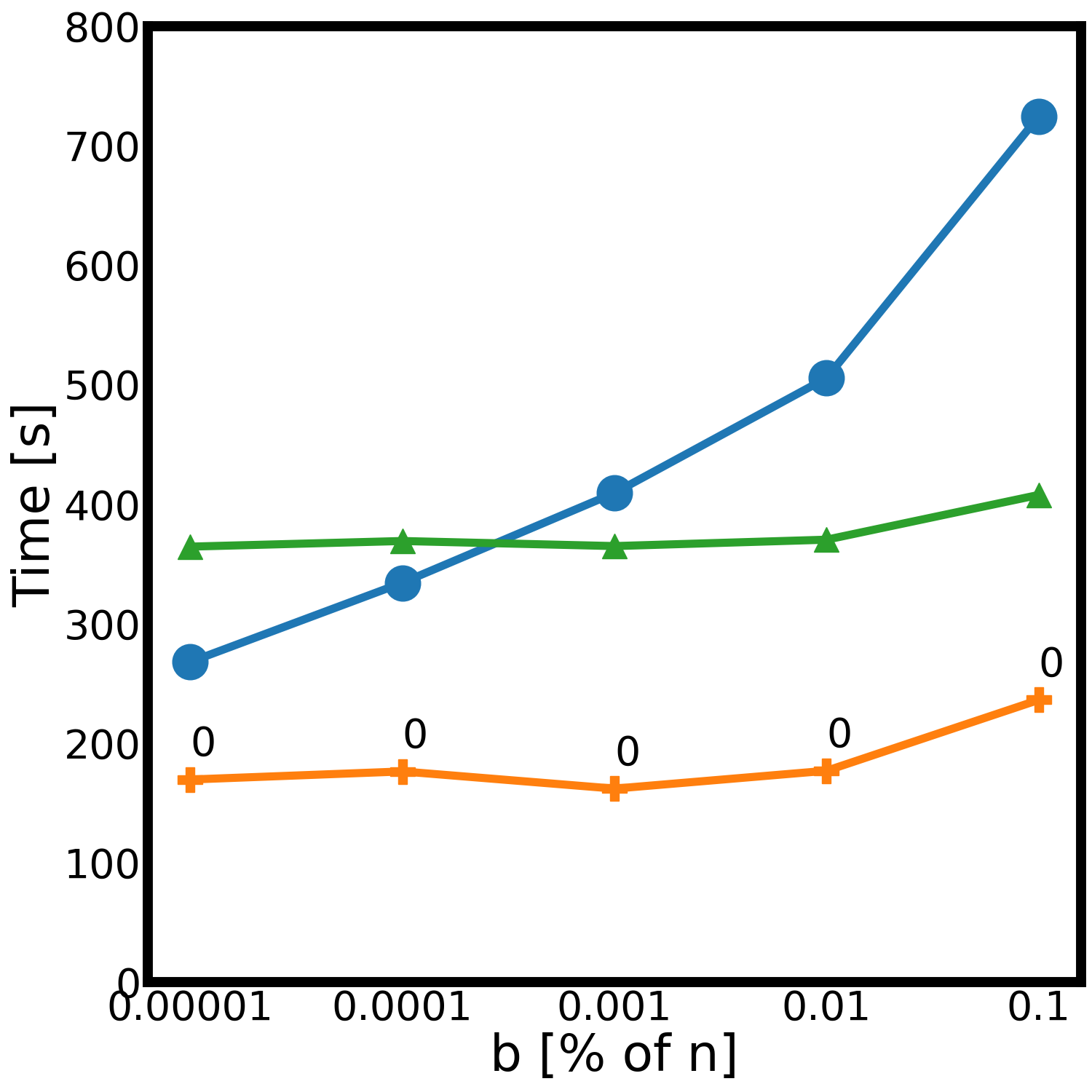}}\hspace{1mm}
        \subfloat[][\centering Memory for varying $n$ \label{mem:small:random:n}]{\includegraphics[width=0.23\textwidth]{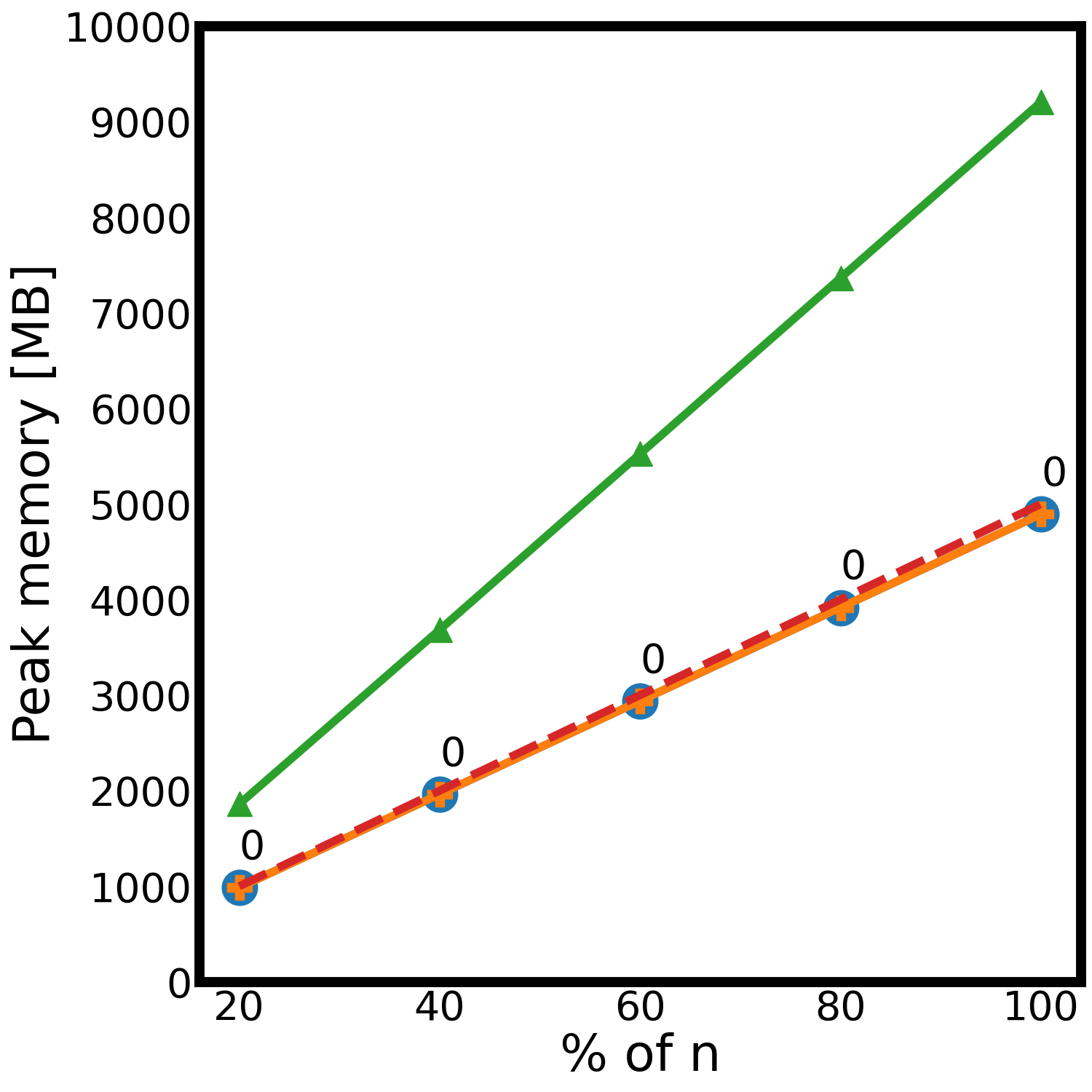}}\hspace{1mm}
        \subfloat[][\centering Memory for varying $b$ \label{mem:small:random:b}]{\includegraphics[width=0.23\textwidth]{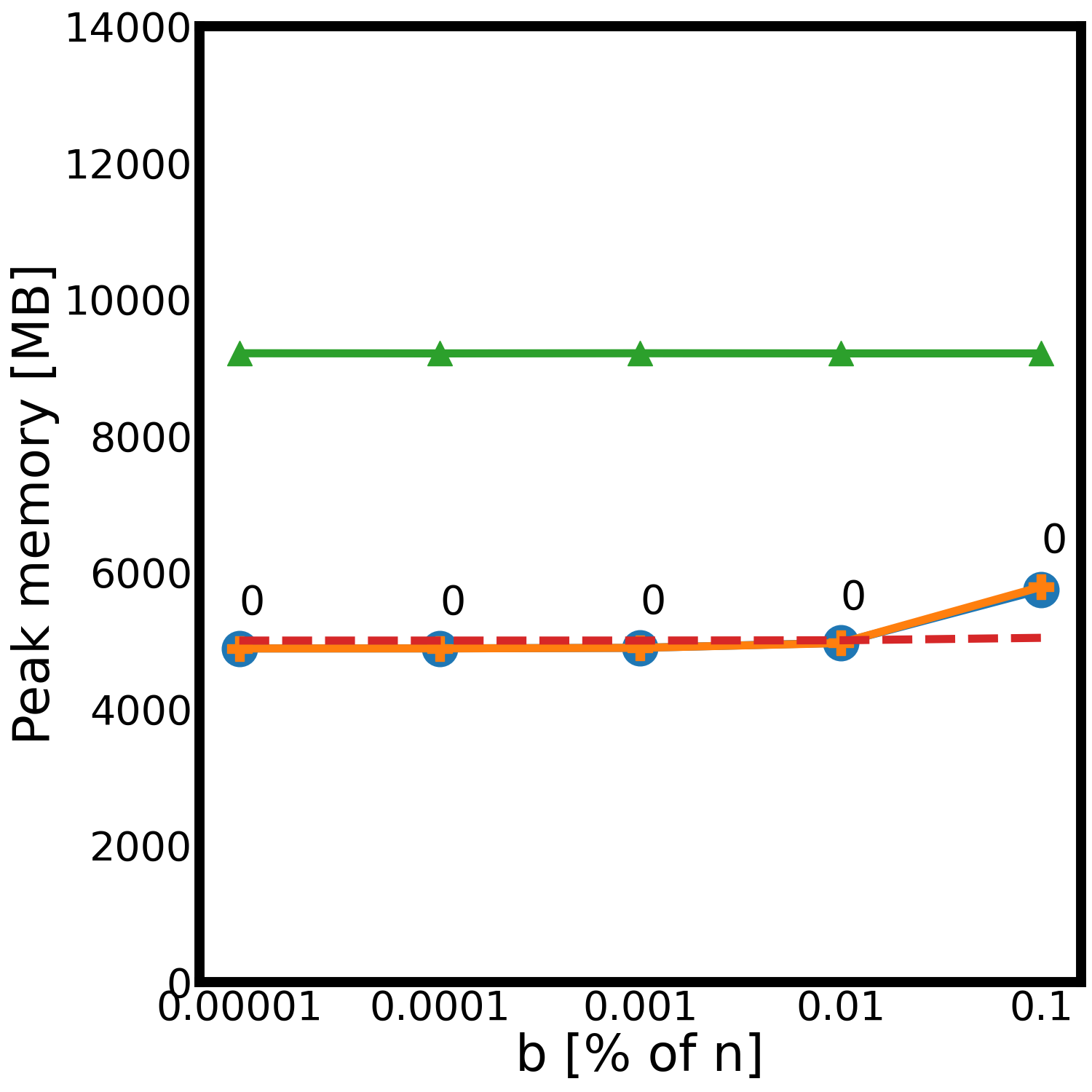}}
        \caption{\centering Results for sparse instances of \textsf{FASTQ} (top row), \textsf{AMAZON} (middle row) and \textsf{RANDOM} (bottom row). The exact value of $b'$ is on top of the points of the \textsc{PA} curves to highlight the relevance of Theorem~\ref{the:param}. 
        Notably in all but two instances we have that $b'=0$. 
        By default, we used $b=0.001\%\cdot n$ when varying $n$; and the whole dataset when varying $b$.\label{fig:small}} 
\end{figure}

\paragraph{Datasets.} 
We have used the following datasets choosing suffixes uniformly at random in all cases. For the first set of experiments (\emph{sparse instances}) we chose $b\in[n/10^7,n/10^3]$ and for the second one (\emph{dense instances}) we chose $b\in[2\cdot n/10^2,n/10]$: 
\begin{enumerate}
\item A file of $n=11,038,279,710$ bytes ($11.04$ GB), with $|\Sigma|=64$, containing $689,781$ human DNA reads of total length is $5,455,971,336$, along with their quality scores.\footnote{\url{https://github.com/nanopore-wgs-consortium/NA12878/blob/master/nanopore-human-genome/rel_3_4.md}~(FAF01132)} The reads have been sequenced using an Oxford Nanopore MinION device. We denote this dataset by \textsf{FASTQ}.
\item A file of $n=10,225,926,270$ bytes ($10.23$ GB), with $|\Sigma|=96$, containing 21,928,568 Amazon reviews (Home and Kitchen).\footnote{\url{https://cseweb.ucsd.edu/~jmcauley/datasets/amazon_v2/}(Home and Kitchen)} We denote this dataset by \textsf{AMAZON}.
\item A synthetically generated file of $n=5, 000, 000,000$ bytes ($5$GB), with $|\Sigma|=26$, containing a single string of ASCII codes. Every letter of this string was selected uniformly at random. We denote this dataset by \textsf{RANDOM}.
\end{enumerate}

\paragraph{Setup.} 
The experiments ran on an Intel Xeon Gold 648 CPU at 2.5GHz with 256GB RAM. All programs were compiled with \texttt{g++} version \texttt{10.2.0} at the \texttt{-Ofast} optimization level. 

\begin{figure}[!ht]
     \centering
        \includegraphics[width=0.5\textwidth]{figures/legend.png}
        
        \subfloat[][\centering Time for varying $n$\label{time:large:fastq:n}] 
        {\includegraphics[width=0.23\textwidth]{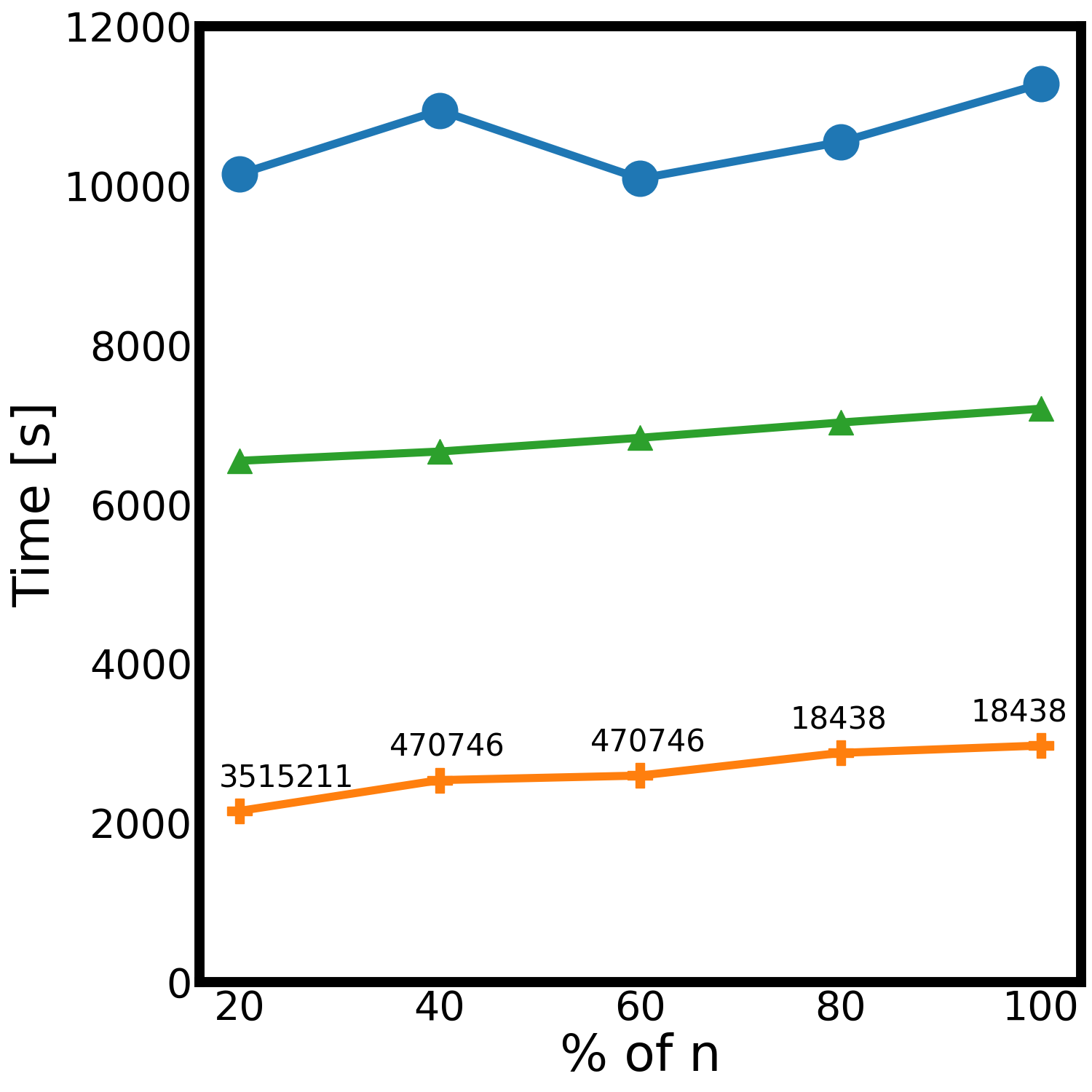}}\hspace{1mm}
        \subfloat[][\centering Time for varying $b$\label{time:large:fastq:b}]{\includegraphics[width=0.23\textwidth]{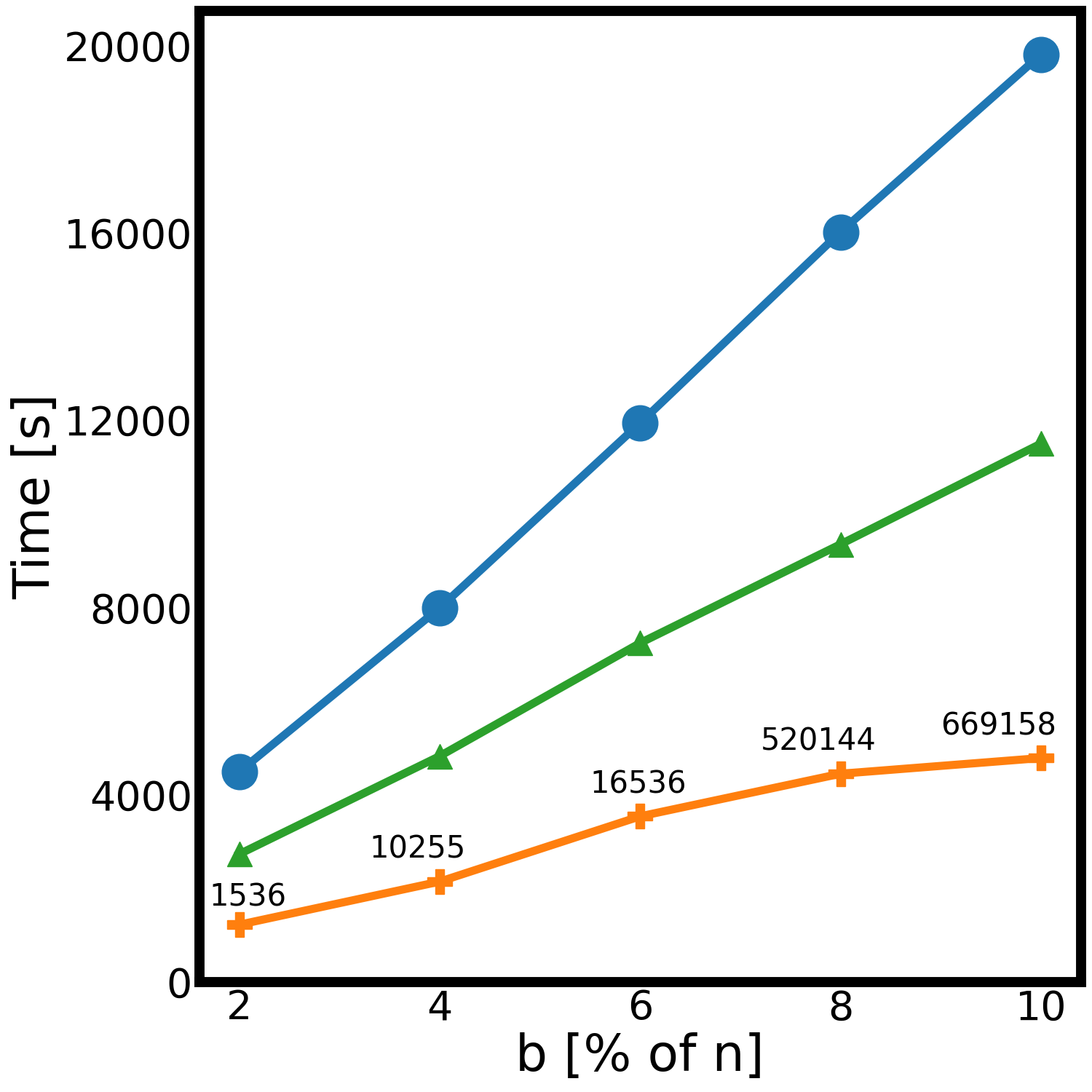}}\hspace{1mm}
        \subfloat[][\centering Memory for varying $n$ \label{mem:large:fastq:n}]{\includegraphics[width=0.23\textwidth]{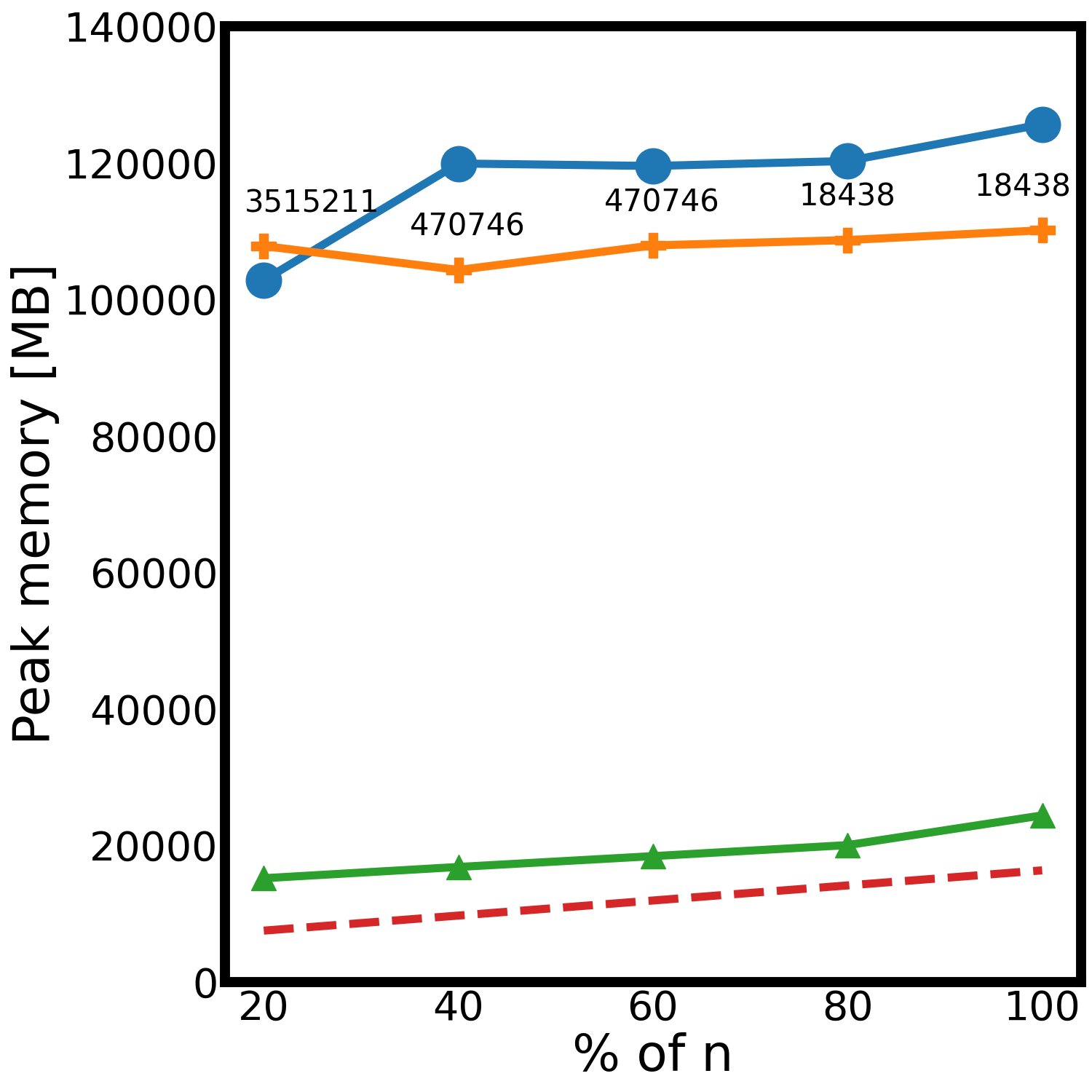}}\hspace{1mm}
        \subfloat[][\centering Memory for varying $b$ \label{mem:large:fastq:b}]{\includegraphics[width=0.23\textwidth]{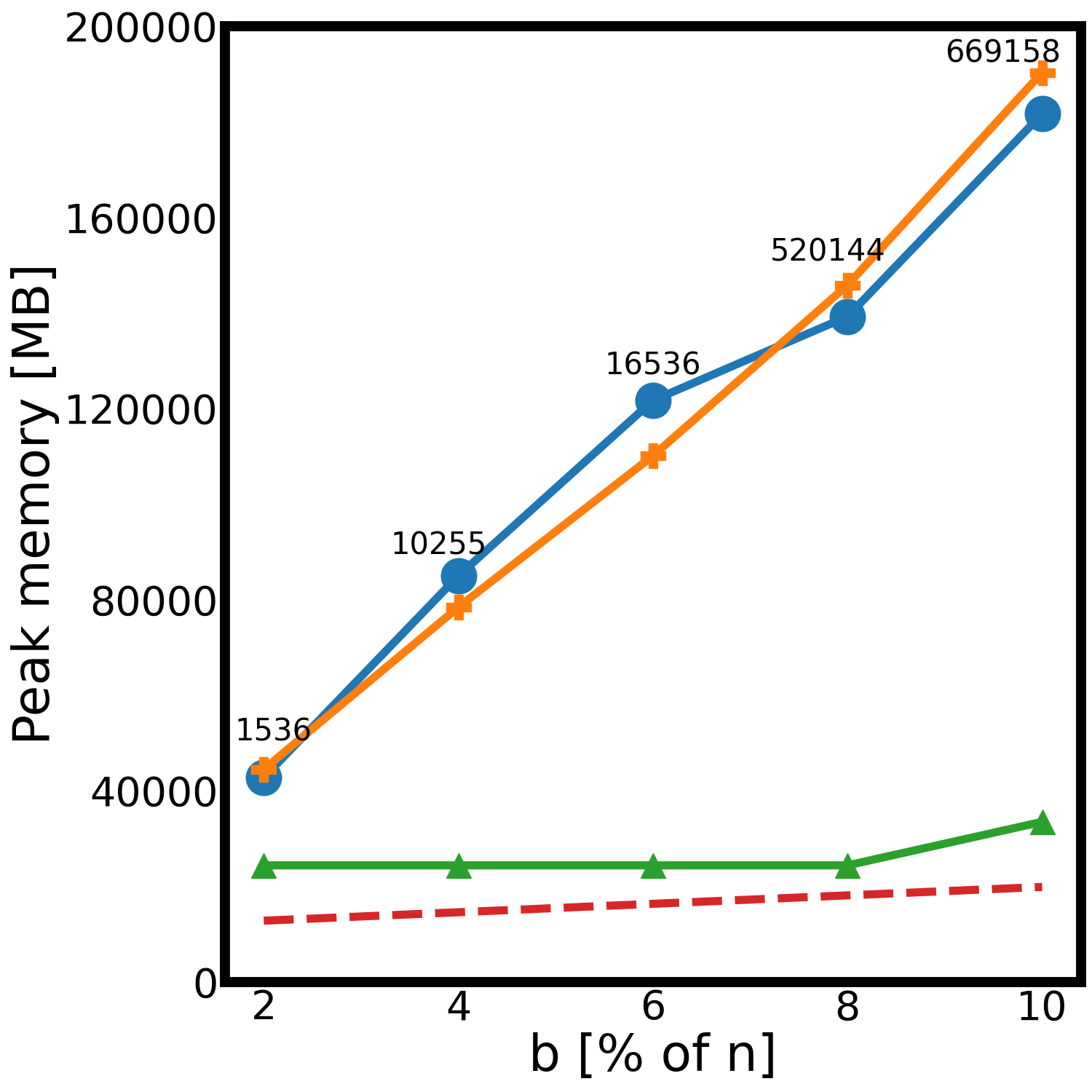}}

     \centering 
        \subfloat[][\centering Time for varying $n$\label{time:large:amazon:n}]{\includegraphics[width=0.23\textwidth]{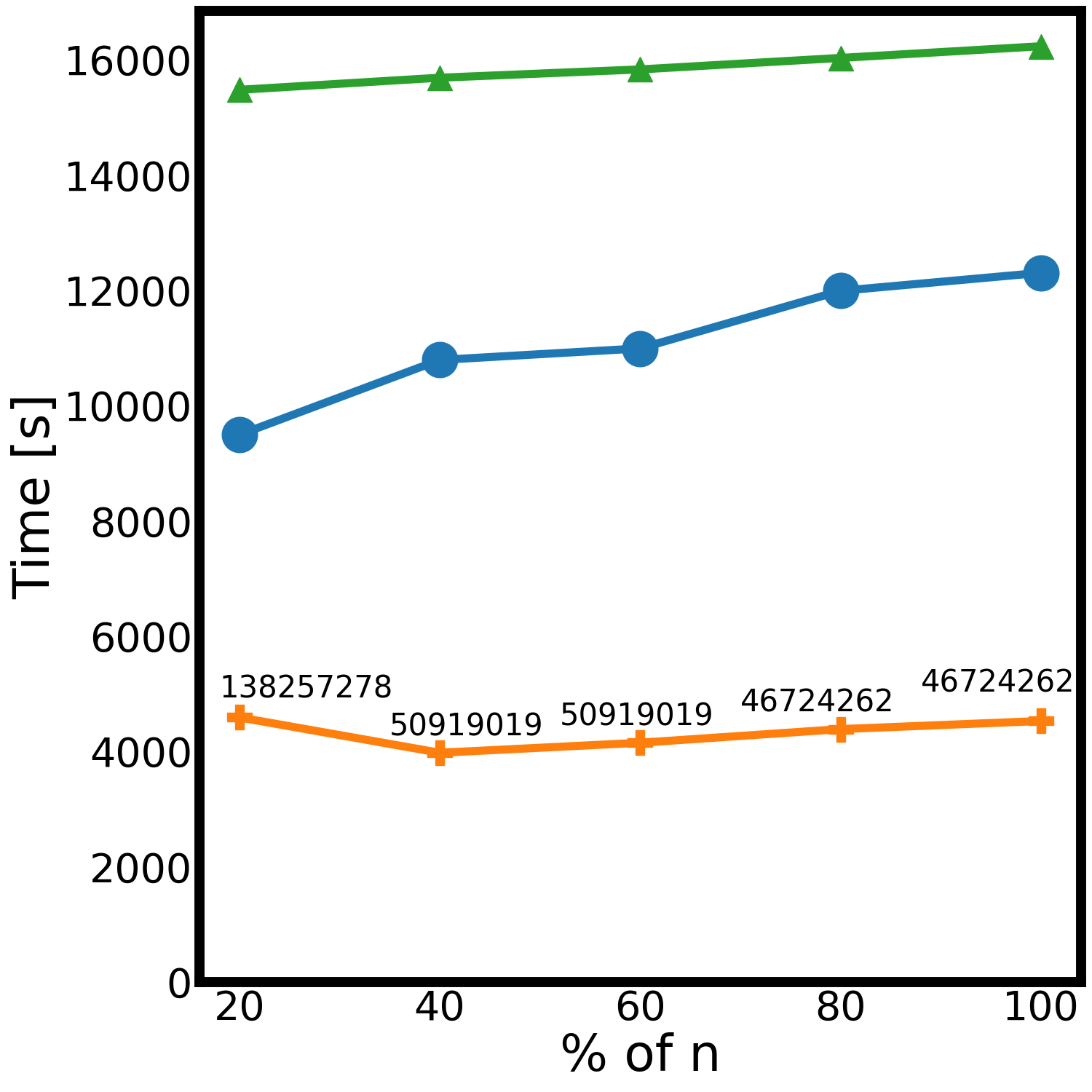}}\hspace{1mm}
        \subfloat[][\centering Time for varying $b$\label{time:large:amazon:b}]{\includegraphics[width=0.23\textwidth]{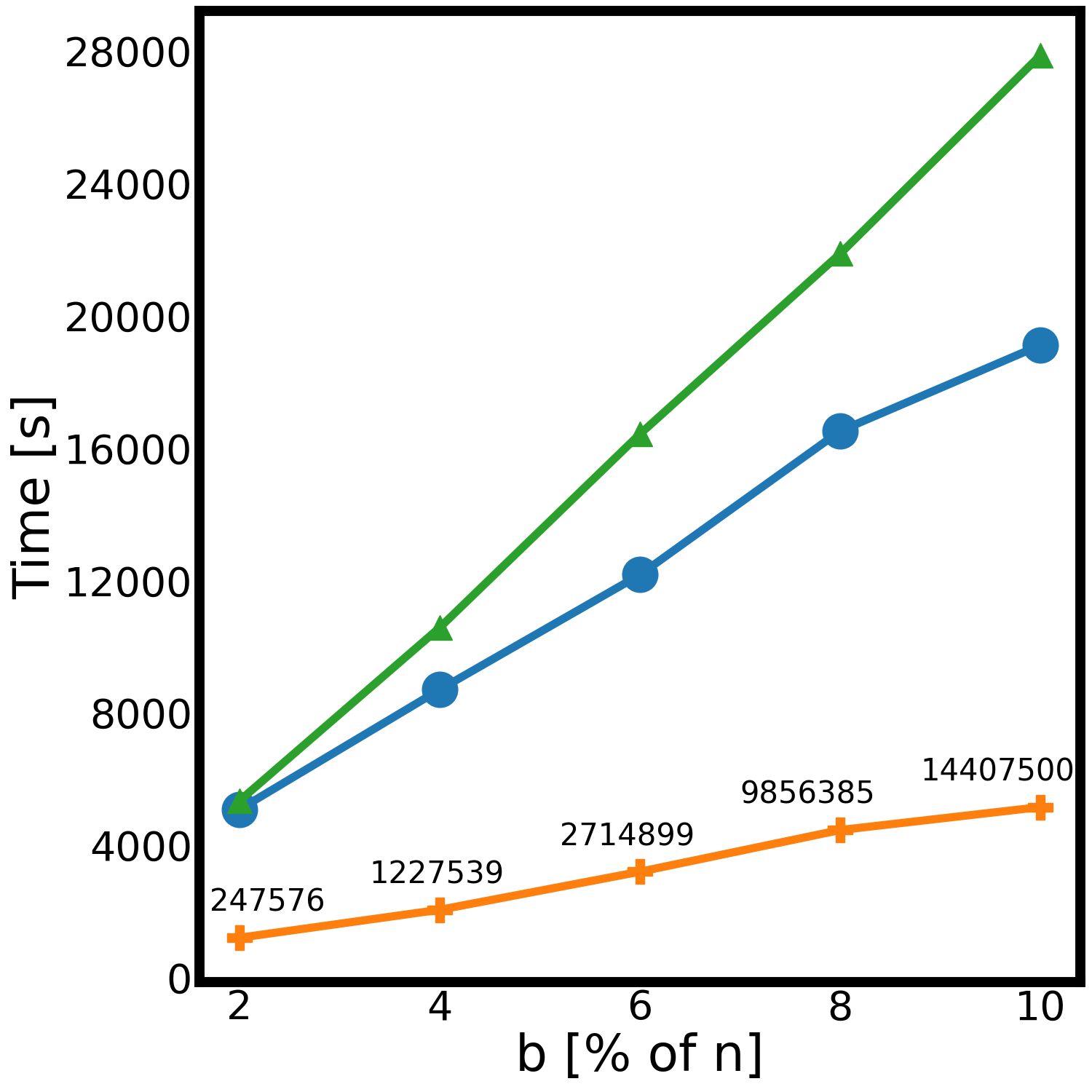}}\hspace{1mm}
        \subfloat[][\centering Memory for varying $n$ \label{mem:large:amazon:n}]{\includegraphics[width=0.23\textwidth]{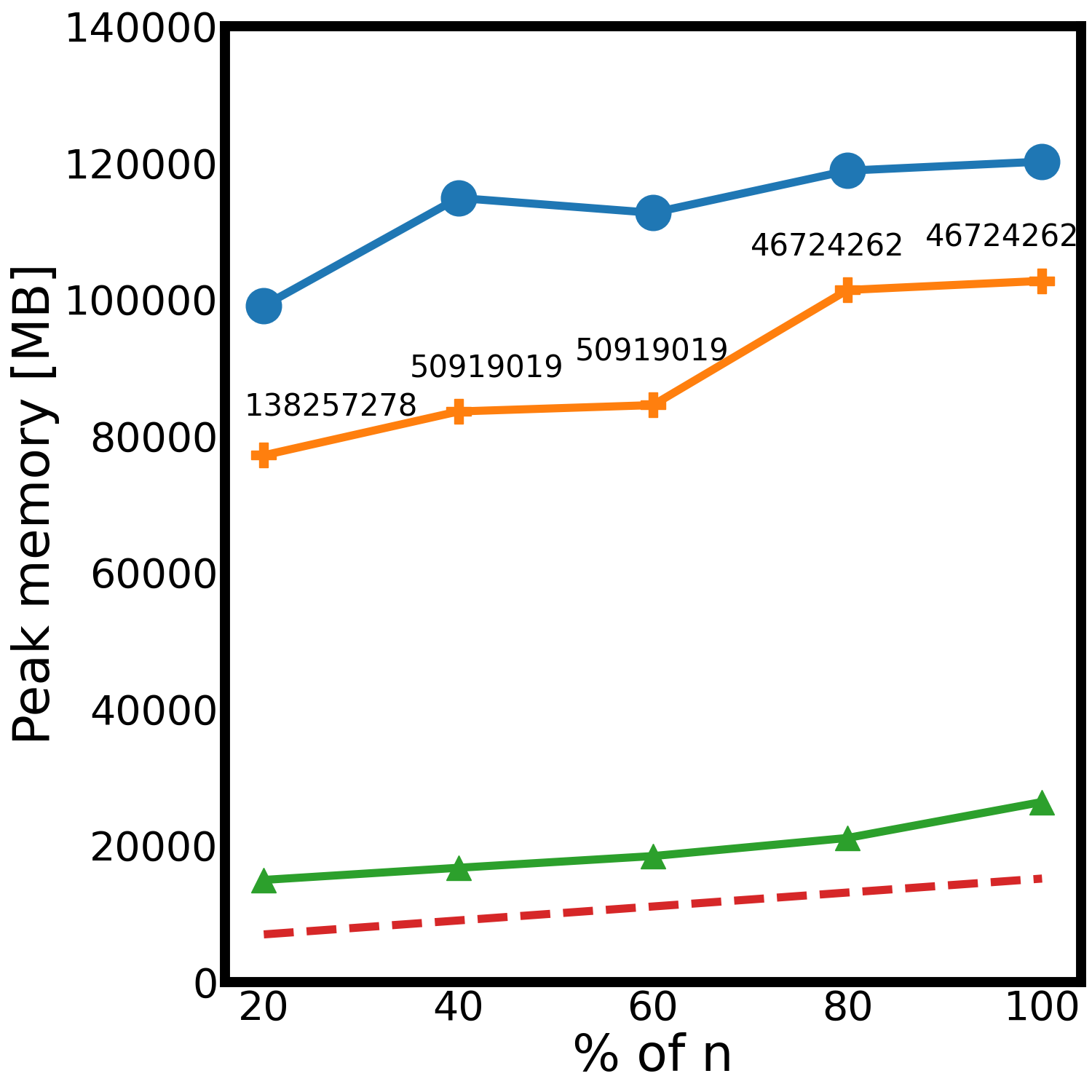}}\hspace{1mm}
        \subfloat[][\centering Memory for varying $b$ \label{mem:large:amazon:b}]{\includegraphics[width=0.23\textwidth]{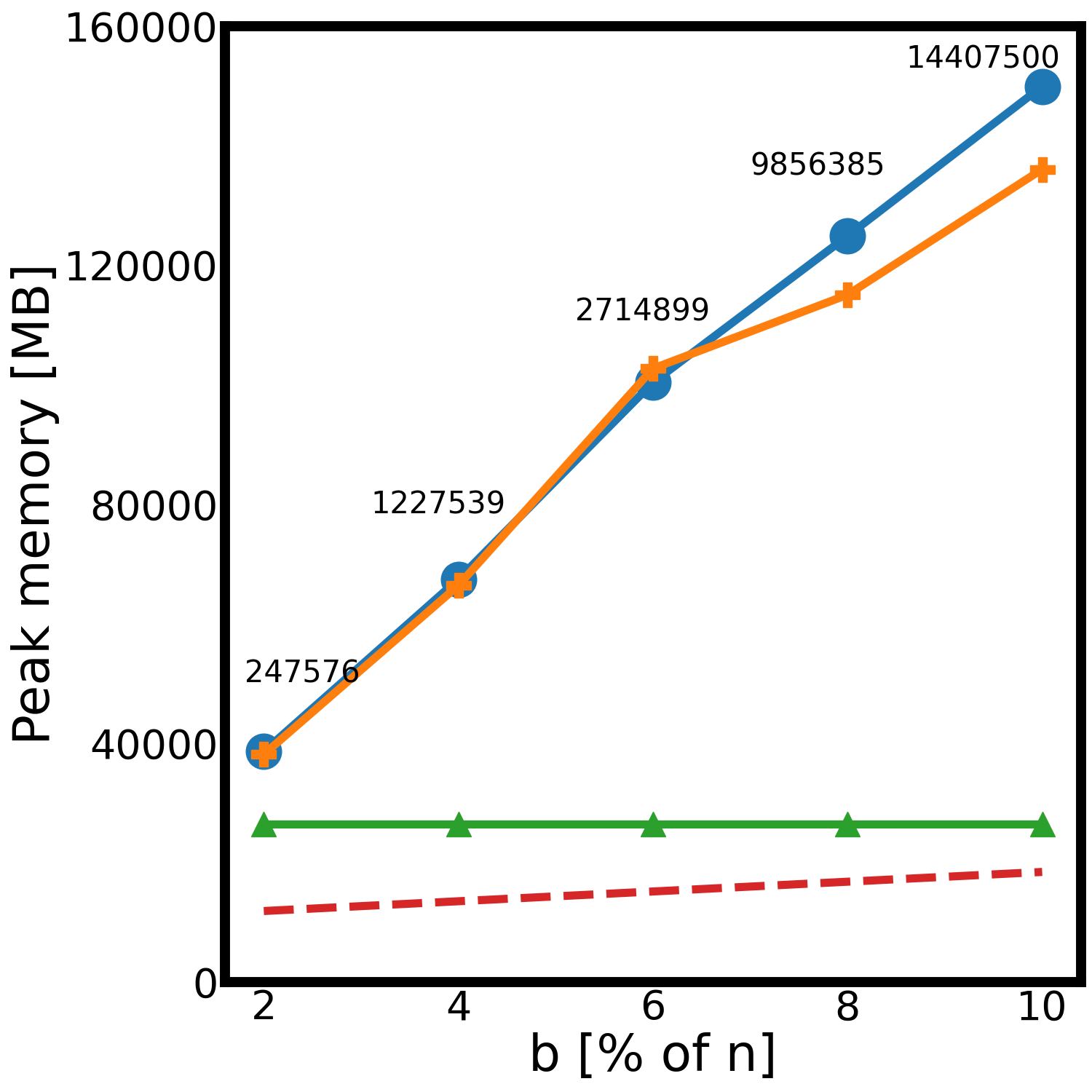}}

      \centering 
        \subfloat[][\centering Time for varying $n$\label{time:large:random:n}]{\includegraphics[width=0.23\textwidth]{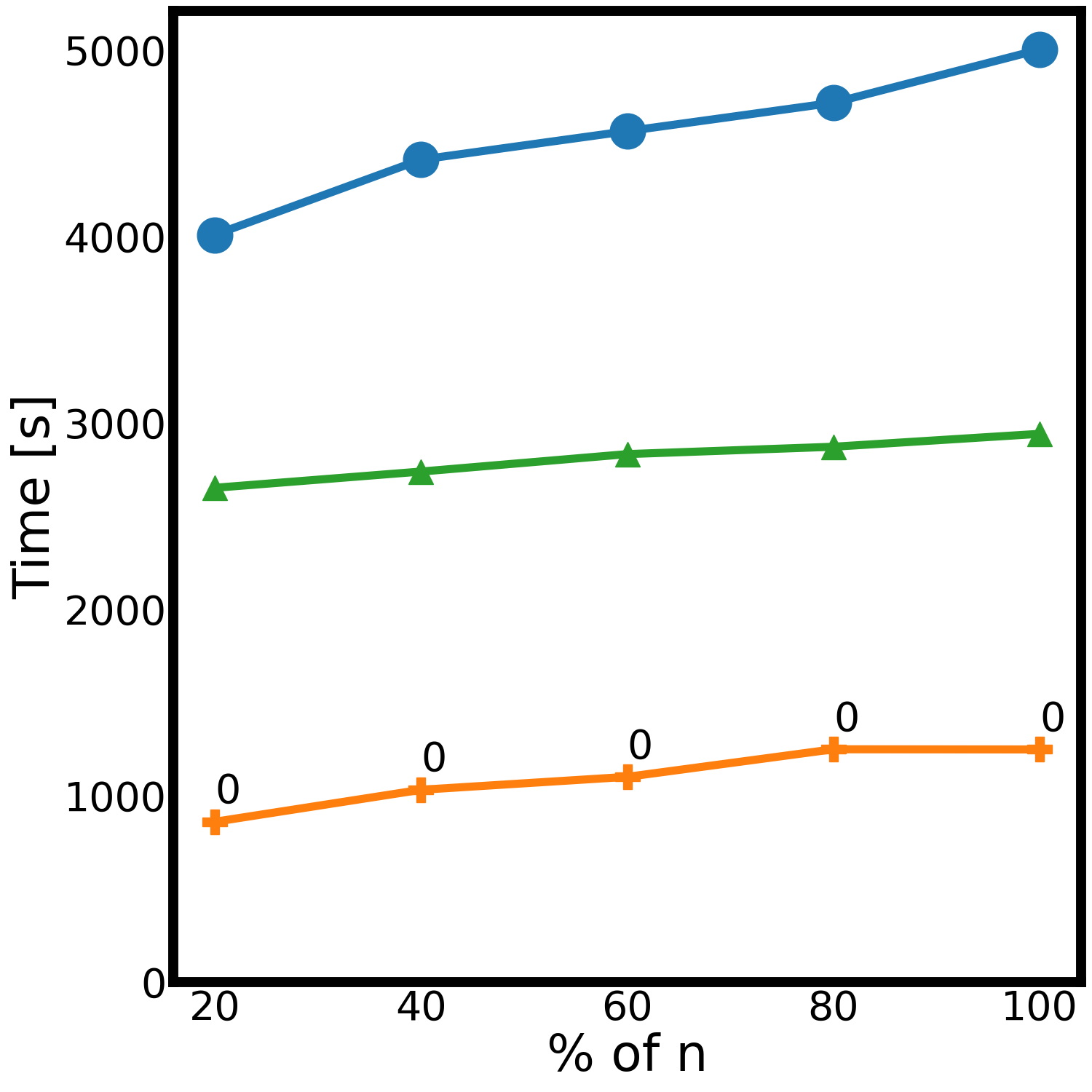}}\hspace{1mm}
        \subfloat[][\centering Time for varying $b$\label{time:large:random:b}]{\includegraphics[width=0.23\textwidth]{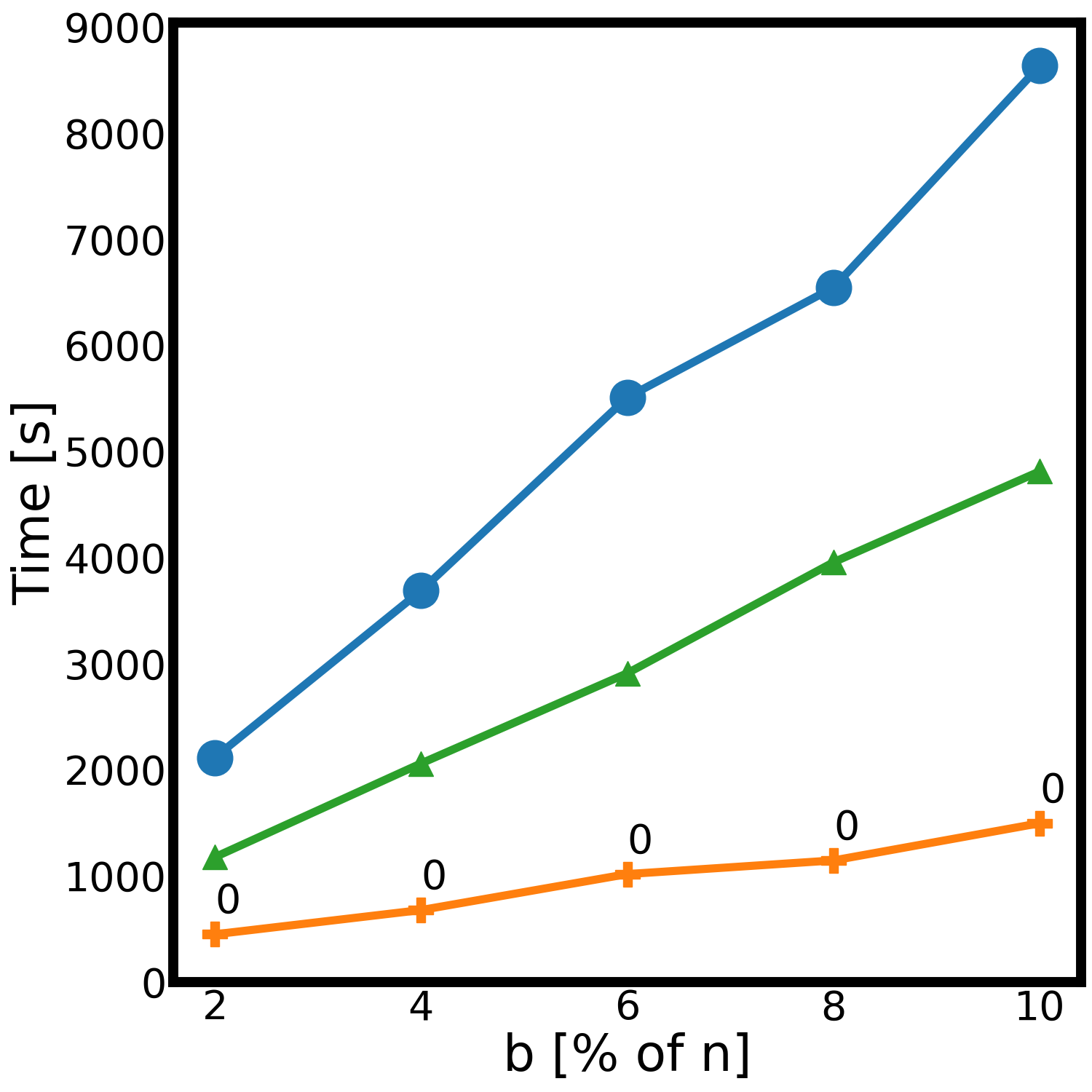}}\hspace{1mm}
        \subfloat[][\centering Memory for varying $n$ \label{mem:large:random:n}]{\includegraphics[width=0.23\textwidth]{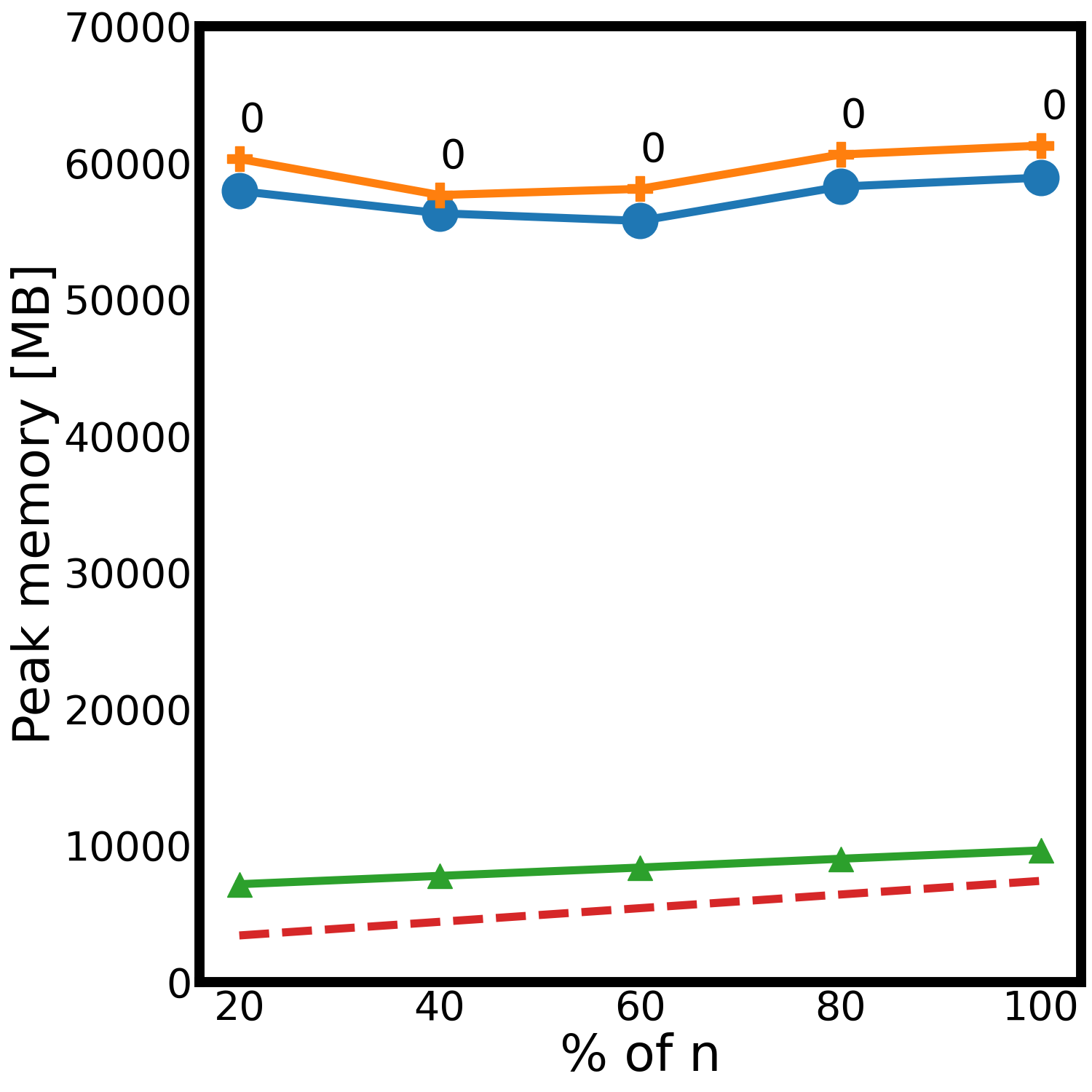}}\hspace{1mm}
        \subfloat[][\centering Memory for varying $b$ \label{mem:large:random:b}]{\includegraphics[width=0.23\textwidth]{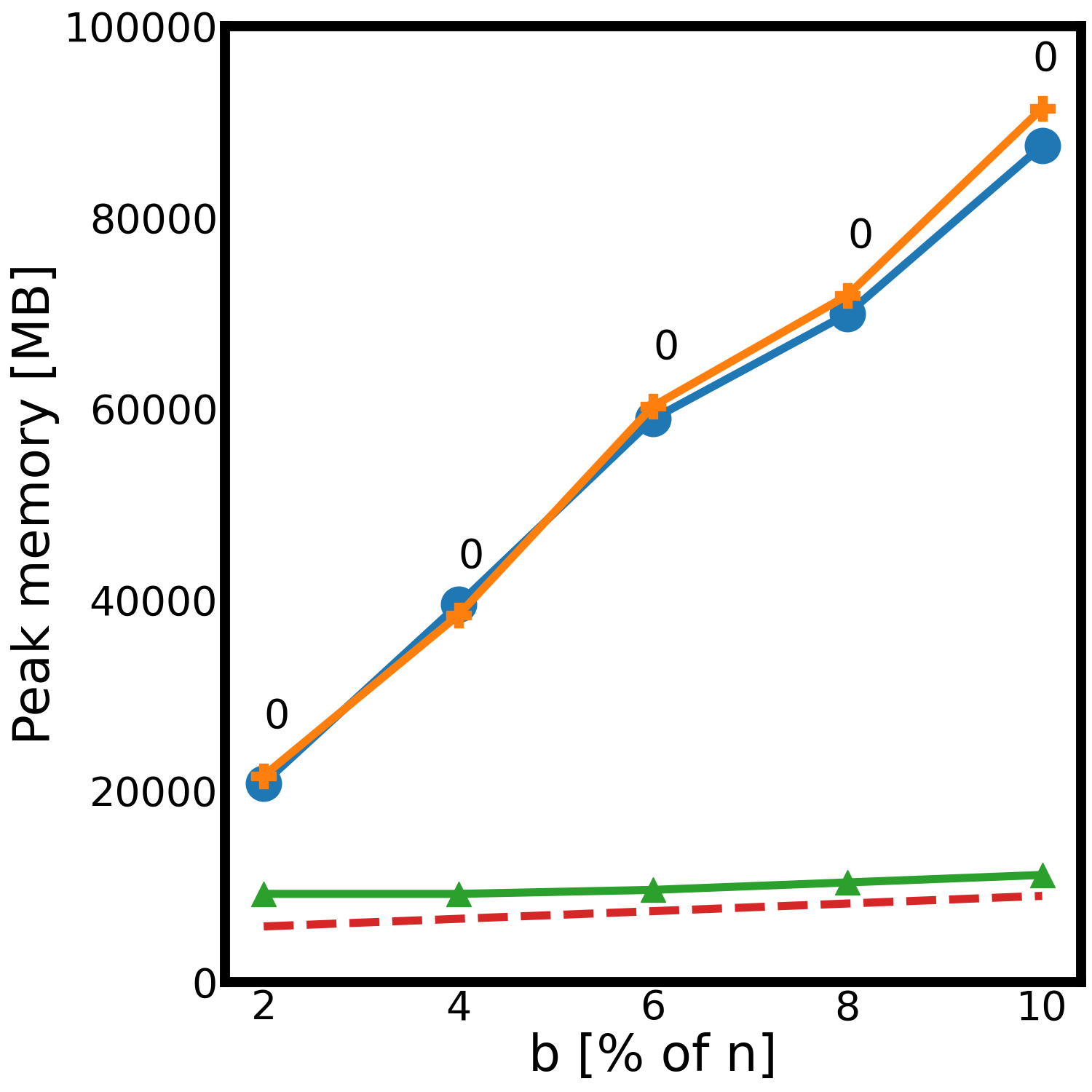}}
        \caption{\centering Results for dense instances of \textsf{FASTQ} (top row), \textsf{AMAZON} (middle row) and \textsf{RANDOM} (bottom row). The exact value of $b'$ is on top of the points of the \textsc{PA} curves to highlight the relevance of Theorem~\ref{the:param}. 
        As expected, $b'$ decreases with increasing $n$
        and increases with increasing $b$.
        By default, we used $b=6\%\cdot n$ when varying $n$; and the whole dataset when varying $b$.\label{fig:large}}
\end{figure}

\paragraph{Results (Sparse Instances).} Figure~\ref{fig:small} shows the results for the sparse instances; i.e., those where $b\in[n/10^7,n/10^3]$ is relatively small. 
By default, we used $b=0.001\%\cdot n$ for the experiments that take prefixes of the whole datasets to vary the values of $n$; and we used the whole datasets for the experiments that use varying values of $b$. Figure~\ref{time:small:fastq:n} shows that: (i) \textsc{PA} is up to $2$ times faster than \textsc{SSA-LCE} and over $2$ times faster than \textsc{MA}; and (ii) the runtime of all implementations scales linearly with $n$. Figure~\ref{time:small:fastq:b} shows that: (i) \textsc{PA} is again much faster than the other two implementations; and (ii) \textsc{PA} and \textsc{SSA-LCE} are not substantially affected (if at all) by increasing $b$ as opposed to \textsc{MA}. In conjunction, the results of Figures~\ref{time:small:fastq:n} and~\ref{time:small:fastq:b}, show that \textsc{PA} runs in linear time because $b'=0$ in all instances (see Theorem~\ref{the:param}). Figure~\ref{mem:small:fastq:n} shows that: (i) \textsc{PA} and \textsc{MA} consume about half the memory consumed by \textsc{SSA-LCE}; and (ii) the peak memory usage for all implementations scales linearly with $n$. Figure~\ref{mem:small:fastq:b} shows that: (i) \textsc{PA} and \textsc{MA} are substantially more space-efficient than \textsc{SSA-LCE}; and (ii) \textsc{PA} and \textsc{MA} start off with the space required by the input and increase the peak memory usage only slightly when $b>0.01$\% of $n$. The former is explained by the fact that \textsc{SSA-LCE} is a simplified implementation of the algorithm in~\cite{DBLP:journals/talg/Prezza21} (recall that no other implementation is available), which uses more than $\cO(1)$ words of space to sort the $b$ suffixes. The results in Figures~\ref{time:small:amazon:n}-\ref{mem:small:amazon:b} for \textsf{AMAZON} and in Figures~\ref{time:small:random:n}-\ref{mem:small:random:b} for \textsf{RANDOM} are analogous.

The benefits of \textsc{PA} in terms of time compared to \textsc{MA} and \textsc{SSA-LCE} on these sparse instances are significant.\footnote{Arguably, these sparse instances are the interesting instances for \emph{sparse} suffix sorting.} These benefits are explained by the fact that $b'=0$ in all but two instances; even in the instance with the largest $b'>0$, we still have that  $b'=518<\lfloor b/\log b \rfloor=439, 149$ satisfying the condition in Theorem~\ref{the:param}. These results \emph{fully justify} the design goals of \textsc{PA}. Notably, in these sparse instances, \textsc{SSA-LCE} consumes more memory than \textsc{PA} and \textsc{MA}.

\paragraph{Results (Dense Instances).} Figure~\ref{fig:large} shows the results for the dense instances, where $b\in[2\cdot n/10^2,n/10]$. In these instances, $b$ is relatively large (from $2$\% to $10$\% of $n$). By default, we used $b=6\%\cdot n$ for the experiments that take prefixes of the whole datasets to vary the values of $n$; and we used the whole datasets for the experiments that use varying values of $b$.
Figure~\ref{time:large:fastq:n} shows that: (i) \textsc{PA} is up to $2$ times faster than \textsc{SSA-LCE} and over $3$ times faster than \textsc{MA}; and (ii) the runtime of all implementations increases only slightly with $n$.
Figure~\ref{time:large:fastq:b} shows that: (i) \textsc{PA} is again many times faster than the other two implementations; and (ii) \textsc{PA} is not as substantially affected by increasing $b$ as \textsc{SSA-LCE} and \textsc{MA} are. 
In conjunction, the results of Figures~\ref{time:large:fastq:n}-\ref{time:large:fastq:b} show that, in these dense instances, increasing $b$ becomes considerably more costly than increasing $n$ (unlike in the sparse instances), due to the $\mathcal{\tilde{O}}(b)$ term in the time complexity of all algorithms being much more costly than the $\mathcal{O}(n)$ term.
Figure~\ref{mem:large:fastq:n} shows that: (i) \textsc{PA} and \textsc{MA} consume about $5$ times more memory compared to \textsc{SSA-LCE} due to the large values of $b$ (as expected by the space anlysis of the algorithms); and (ii) the peak memory usage for all implementations increases only slightly with $n$ (as expected due to the larger input string). Figure~\ref{mem:large:fastq:b} shows that: (i) the peak memory usage of \textsc{PA} and \textsc{MA} scales linearly with $b$; and (ii) \textsc{PA} and \textsc{MA} consume up to $5$ times more memory in comparison to \textsc{SSA-LCE}. Figures~\ref{time:large:amazon:n}-\ref{time:large:amazon:b} show that for \textsf{AMAZON} \emph{both} \textsc{PA} and \textsc{MA} are faster than \textsc{SSA-LCE} with \textsc{PA} being over $2$ times faster than \textsc{SSA-LCE}. 
This happens because \textsc{SSA-LCE} becomes considerably slower when $b'$ is large; i.e., when many pairs of suffixes have long LCP values. In such a case, comparing any two suffixes takes $\cO(\log n)$ time for \textsc{SSA-LCE} instead of $\cO(1)$ time, and so the $\mathcal{\tilde{O}}(b)$ term becomes a bottleneck.
The peak memory usage results in Figures~\ref{mem:large:amazon:n}-\ref{mem:large:amazon:b} for \textsf{AMAZON} and Figures~\ref{time:large:random:n}-\ref{mem:large:random:b} for \textsf{RANDOM} are analogous to those for \textsf{FASTQ}.

Notably, the benefits of \textsc{PA} compared to \textsc{MA} and \textsc{SSA-LCE} in terms of time remain significant even in these harder, dense instances, where the values of $b'$ increase compared to the $b'$ values in the analogous sparse instances (see Figure~\ref{fig:small}). 
Consider the instance for $b=10\% \cdot n$ on \textsf{AMAZON} (Figure~\ref{time:large:amazon:b}).  
We still have that $b'=14407500<\lfloor b/\log b \rfloor=34166616$.
In these dense instances, \textsc{SSA-LCE} consumes much less memory than \textsc{MA} and \textsc{PA}; this is expected as \textsc{SSA-LCE} works in the restore model: it overwrites (and thus uses the space occupied by) $T$.

\section*{Acknowledgments}
This work is partially supported by the PANGAIA and ALPACA projects that have received funding from the European Union's Horizon 2020 research and innovation programme under the Marie Skłodowska-Curie grant agreements No 872539 and 956229, respectively. Hilde Verbeek is supported by a Constance van Eeden Fellowship.

\bibliographystyle{plain}
\bibliography{references}

\end{document}